\title{Nominal String Diagrams}
\titlerunning{Nominal String Diagrams}
\author{Samuel Balco}{Department of Informatics,University of Leicester, United Kingdom\and \url{https://gdlyrttnap.pl} }{sb782@leicester.ac.uk}{https://orcid.org/0000-0002-1825-0097}{}
\author{Alexander Kurz}{Department of Computer Science, Chapman University, Orange California, USA}{akurz@chapman.edu}{[orcid]}{}
\authorrunning{S.\,Balco and A.\,Kurz}
\keywords{string diagrams, nominal sets, separated product, simultaneous substitutions, internal category, monoidal category, internal monoidal categories, PROP}
\newcommand{\hcomp}{\hspace{0.24ex};}
\newcommand{\names}{\mathcal N}
\newcommand{\id}{\mathit{id}}
\newcommand{\Id}{\mathrm{Id}}
\newcommand{\Nom}{\mathsf{Nom}}
\newcommand{\supp}{\mathsf{supp}}
\newcommand{\Fun}{\mathsf{Fun}}
\newcommand{\Inj}{\mathsf{Inj}}
\newcommand{\fresh}{{\#}}
\newcommand{\sepp}{{\ast}}
\newcommand{\Fin}{\mathsf{Fin}}
\newcommand{\Mon}{\mathsf{Mon}}
\newcommand{\Cat}{\mathsf{Cat}}
\newcommand{\V}{\mathcal{V}}
\newcommand{\Set}{\mathsf{Set}}
\newcommand{\PRO}{\mathsf{PRO}}
\newcommand{\PROP}{\mathsf{PROP}}
\newcommand{\nPROP}{\mathsf n\mathsf{PROP}}
\newcommand{\NMT}{\mathrm{NMT}}
\newcommand{\SMT}{\mathrm{SMT}}
\newcommand{\ORD}{\mathit{ORD}}
\newcommand{\NOM}{\mathit{NOM}}
\newcommand{\Ordit}{\mathit{Ord}}
\newcommand{\Nomit}{\mathit{Nom}}
\newcommand{\ordit}{\mathit{ord}}
\newcommand{\nomit}{\mathit{nom}}
\newcommand{\then}{{;}}
\newcommand{\dom}{\mathit{dom}}
\newcommand{\cod}{\mathit{cod}}
\newcommand{\comp}{\mathit{comp}}
\newcommand{\compl}{\mathit{compl}}
\newcommand{\compr}{\mathit{compr}}
\newcommand{\leftt}{\mathit{left}}
\newcommand{\rightt}{\mathit{right}}
\newcommand{\pifun}{\boldsymbol{\pi}}
\newcommand{\tw}{\sigma}
\newcommand{\gen}{\gamma}
\newif\iflongversion
\newif\ifcomments
\newtcbox{\stripbox}{
  enhanced,nobeforeafter,
  frame code={},
  tcbox raise base,
  boxrule=0.4pt,top=0mm,bottom=0mm,right=0mm,left=0mm,
  interior code={
    \path[
      draw=gray!40,
      pattern=north east lines,
      pattern color=gray!40
    ]
    (interior.south east) rectangle (interior.north west);
  }
}
\begin{document}

\setbool{@fleqn}{false}

\maketitle

\begin{abstract}
We introduce nominal string diagrams as, string diagrams internal in the category of nominal sets. This requires us to take nominal sets as a monoidal category, not with the cartesian product, but with the separated product. To this end, we develop the beginnings of a theory of monoidal categories internal in a symmetric monoidal category. As an instance, we obtain a notion of a nominal $\PROP$ as a $\PROP$ internal in nominal sets. A 2-dimensional calculus of simultaneous substitutions is an application. 
\end{abstract}

\begin{gray}
\tableofcontents
\end{gray}

\section{Introduction}
\label{sec:intro}

One reason for the success of string diagrams, see \cite{selinger} for an overview, can be formulated by the slogan `only connectivity matters' \cite[Sec.10.1]{coecke-kissinger}. Technically, this is usually achieved by ordering input and output wires and using their ordinal numbers as implicit names. We write $\underline n = \{1,\ldots n\}$ to denote the set of $n$ numbered wires and $f:\underline n\to \underline m$ for diagrams $f$ with $n$ inputs and $m$ outputs. This approach is particularly convenient for the  generalisations of Lawvere theories known as $\PROP$s \cite{maclane:prop}. In particular, the paper on composing $\PROP$s \cite{lack} has been influential \cite{rewriting-modulo,signal-flow-1}.

\medskip On the other hand, if only connectivity matters, it is natural to consider a formalisation of string diagrams in which wires are not ordered. Thus, instead of ordering wires, we fix a countably infinite set 
\[\names\]
of `names` $a,b,\ldots$, on which the only supported operation or relation is equality.  
Mathematically, this means that we work internally in the category of nominal sets introduced by Gabbay and Pitts \cite{gabbay-pitts,pitts}. In the remainder of the introduction, we highlight some of the features of this approach.

\medskip\textbf{\large Partial commutative vs total symmetric tensor. }
One reason why ordered names are convenient is that the tensor $\oplus$ is given by the categorical coproduct (additition) in the skeleton $\mathbb F$ of the category of finite sets. Even though $\underline n\oplus \underline m= \underline m\oplus \underline n$ on objects, the tensor is not commutative but only symmetric, since the canonical arrow $\underline n\oplus \underline m \to  \underline m\oplus \underline n$ is not the identity. 

\medskip
On the other hand, in the category $\mathsf n\mathbb F$ of finite subsets of $\names$ (which is equivalent to $\mathbb F$ as an ordinary category), there is a commutative tensor $A \uplus B$ given by union of disjoint sets. The interesting feature that makes  commutativity possible is that $\uplus$ is partial with $A\uplus B$ defined if and only if $A\cap B=\emptyset$. 

\medskip
While it would be interesting to develop a general theory of partially monoidal categories, our approach in this paper is based on the observation that the partial operation $\uplus:\mathsf n\mathbb F\times \mathsf n\mathbb F\to \mathsf n\mathbb F$ is a total operation $\uplus:\mathsf n\mathbb F\ast \mathsf n\mathbb F\to \mathsf n\mathbb F$ where $\ast$ is the separated product of nominal sets \cite{pitts}.

\medskip\textbf{\large Symmetries disappear in 3 dimensions. }
From a graphical point of view, the move from ordered wires to named wires corresponds to moving from planar graphs to graphs in 3 dimensions. Instead of having a one dimensional line of inputs or outputs, wires are now sticking out of a plane \cite{joyal-street:tensor1}. As a benefit there are no wire-crossings, or, more technically, there are no symmetries to take care of. This simplifies the rewrite rules of calculi  formulated in the named setting. For example, rules such as

\vspace{-1em}
\begin{center} 
\includegraphics[page=32, width=6cm, height=3cm]{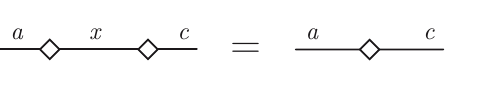}
\end{center}
\vspace{-2em}

are not needed anymore. For more on this compare Figs~\ref{fig:smt-theories} and~\ref{fig:nmt-theories}.

\medskip\textbf{\large Example: Simultaneous Substitutions. }
Substitutions $[a{\mapsto}b]$ 
can be composed sequentially and in parallel as in
\[
[a{\mapsto}b]\hcomp[b{\mapsto}c] = [a{\mapsto}c]
\quad\quad
\quad\quad
[a{\mapsto}b] \uplus [c{\mapsto}d] = [a{\mapsto}b, c{\mapsto}d].
\]

We call $\uplus$ the tensor, or the monoidal or vertical or parallel composition. Semantically, the simultaneous substitution on the right-hand side above, will correspond to the function 
$f:\{a,c\}\to \{b,d\}$
satisfying $f(a)=b$ and $f(c)=d$.
Importantly, parallel composition of simultaneous substitutions is partial. For example,
$[a{\mapsto}b] \uplus [a{\mapsto}c]$
is undefined, since there is no function $\{a\}\to\{b,c\}$ that maps $a$ simultaneously to both $b$ and $c$.

\medskip\medskip\textbf{\large The advantages of a 2-dimensional calculus} for simultaneous substitutions over a 1-dimensional calculus are the following. 
A calculus of substitutions is an algebraic representation, up to isomorphism, of the category $\mathsf n\mathbb F$ of finite subsets of $\names$.  In a 1-dimensional calculus, operations $[a{\mapsto}b]$ have to be indexed by finite sets $S$  
\[[a{\mapsto}b]_S:S\cup \{a\}\to S\cup \{b\}\]
for sets $S$ with $a,b\notin S.$
On the other hand, in a  2-dimensional calculus with an explicit operation $\uplus$ for set-union, indexing with subsets $S$  is unnecessary. Moreover,  while the swapping 
\[\{a,b\}\to\{a,b\}\]
in the 1-dimensional calculus needs an auxiliary name such as $c$ in 
$
[a{\mapsto}c]_{\{b\}} \hcomp [b{\mapsto} a]_{\{c\}} \hcomp [c{\mapsto}a]_{\{b\}}
$
it is represented in the 2-dimensional calculus directly by 
\[
[a{\mapsto}b] \uplus [b{\mapsto}a]
\]
Finally, while it is possible to write down the equations and rewrite rules for the 1-dimensional calculus, it does not appear as particularly natural. In particular, only in the 2-dimensional calculus, will the swapping
have a simple normal form such as $[a{\mapsto}b] \uplus [b{\mapsto}a]$ (unique up to commutativity of $\uplus$).

\medskip\textbf{\large Overview. }
In order to account for partial tensors, Section~\ref{sec:internal-monoidal} 
develops the notion of a monoidal category internal in a symmetric monoidal category. Section~\ref{sec:examples} is devoted to examples, while  Section~\ref{sec:nmts} introduces the notion of a nominal prop and Section~\ref{sec:equivalence} shows shat the categories of ordinary and of nominal props are equivalent.

\section{Setting the Scene: String Diagrams and Nominal Sets}\label{sec:scene}

We review some of the necessary terminology but need to refer to the literature for details.

\subsection{String Diagrams}

The mathematical theory of string diagrams can be formalised via $\PROP$s as defined by MacLane~\cite{maclane}. There is also the weaker notion by Lack~\cite{lack}, see Remark 2.9 of Zanasi \cite{zanasi} for a discussion.

\medskip
A $\PROP$  (\textit{products and permutation category}) is a symmetric strict monoidal category, with natural numbers as objects, where the monoidal tensor $\oplus$ is addition. Moreover, $\PROP$s, along with strict symmetric monoidal functors, that are identities on objects, form the category $\PROP$. A $\PROP$ contains all bijections between numbers as they can be be generated from the symmetry (twist) $1\oplus 1\to 1\oplus 1$ and from the parallel composition $\oplus$ and sequential composition $;$ (which we write in diagrammatic order).


\medskip\noindent
$\PROP$s can be presented in algebraic form by operations and equations as \textit{symmetric monoidal theories} ($\SMT$s) \cite{zanasi}.

\medskip
An $\SMT$ $(\Sigma, E)$ has a set $\Sigma$ of generators, where each generator $\gen \in \Sigma$ is given an arity $m$ and co-arity $n$, usually written as $\gen : m \to n$ and a set $E$ of equations, which are pairs of $\Sigma$-terms. $\Sigma$-terms can be obtained by composing generators in $\Sigma$ with the unit $\id : 1 \to 1$ and symmetry $\tw : 2 \to 2$, using either the parallel or sequential composition (see Fig~\ref{fig:smt-terms}). Equations $E$ are pairs of $\Sigma$-terms with the same arity and co-arity.

\begin{figure}[h]

\begin{center}
\begin{tabular}{ c c c }
\includegraphics[page=40, width=12mm]{twists_new} &
\qquad\qquad\includegraphics[page=41, width=12mm]{twists_new}\qquad\qquad{} &
\includegraphics[page=42, width=12mm]{twists_new} \\
\(\displaystyle\frac{}{ \gen: m \to n \in \Sigma}\) &
\(\displaystyle\frac{}{ id:1 \to 1}\) &
\(\displaystyle\frac{}{\tw : 2 \to 2}\) \\
\end{tabular}

\medskip
\medskip
\medskip

\begin{tabular}{ c c }
\includegraphics[page=39, width=60mm]{twists_new} &
\includegraphics[page=38, width=70mm]{twists_new} \\
\(\displaystyle\frac{ t:m\to n\quad\quad t':o\to p}{ t \oplus t' :  m+o\to n+p}\) &
\(\displaystyle\frac{ t:m\to n\quad\quad s:n\to o}{ t\hcomp s :  m \to o}\)
\end{tabular}
\end{center}
\caption{SMT Terms}\label{fig:smt-terms}
\end{figure}%

\noindent
Given an $\SMT$, we can freely generate a $\PROP$, by taking $\Sigma$-terms as arrows, modulo the equations of Fig~\ref{fig:symmetric-monoidal-category}, together with the smallest congruence (with respect to the two compositions) of equations in $E$.

\begin{figure}[h]
\[
\begin{array}{cc}
\id_m \hcomp t =  t = t \hcomp \id_n \qquad{}& \qquad
id_0 \oplus t = t = t \oplus id_0 \\[1ex]
(t\hcomp s)\hcomp r = t \hcomp (s \hcomp r) \qquad{}& \qquad
(t \oplus s) \oplus r = t \oplus (s \oplus r) \\[1ex]
\tw_{1,1} \hcomp \tw_{1,1} = id_2 \qquad{}& \qquad
(s \hcomp t) \oplus (u \hcomp v) = (s \oplus u) \hcomp (t \oplus v)
\\[1ex]
\multicolumn{2}{c}{(t \oplus id_z) \hcomp \tw_{n,z} = \tw_{m,z} \hcomp (id_z \oplus t)} 
\end{array}
\]
\caption{Equations of symmetric monoidal categories}\label{fig:symmetric-monoidal-category}
\end{figure}

\medskip
$\PROP$s admit a nice graphical presentation, wherein the sequential composition is modeled by horizontal composition of diagrams, and parallel/tensor composition is vertical stacking of diagrams (see Fig~\ref{fig:smt-terms}). We now present the $\SMT$s of \colorbox{black!30}{\strut bijections $\mathbb B$}, \colorbox{orange!60}{\strut injections $\mathbb I$}, \colorbox{green!40}{\strut surjections $\mathbb S$}, \colorbox{cyan!40}{\strut functions $\mathbb F$}, \colorbox{magenta!40}{\strut partial functions $\mathbb P$}, \colorbox{cyan!60!magenta!60}{\strut relations $\mathbb R$} and \colorbox{yellow!60}{\strut monotone maps $\mathbb{M}$}.\footnote{The theory of \colorbox{yellow!60}{\strut monotone maps $\mathbb{M}$} does not include equations involving the symmetry $\tw$ and is in fact presented by a so-called $\PRO$ rather than a $\PROP$. However, in this paper we will only be dealing with theories presented by $\PROP$s (the reason why this is the case is illustrated in the proof of Proposition~\ref{prop:ORD}).}
The diagram in Fig~\ref{fig:smt-theories} shows the generators and the equations that need to be added to the empty $\SMT$, to get a presentation of the given theory. 
To ease comparison with the corresponding nominal monoidal theories in Fig~\ref{fig:nmt-theories} later we also added on a  \stripbox{\strut striped} background the equations for wire-crossings that are already implied by the naturality of symmetries, that is, the last equation of Fig~\ref{fig:symmetric-monoidal-category}. These are the equations that are part of the definition of a prop in the sense of MacLane~\cite{maclane} but need to be added explicitely to the props in the sense of Lack~\cite{lack}.

%

\begin{figure}
\includegraphics[page=36, width=\linewidth]{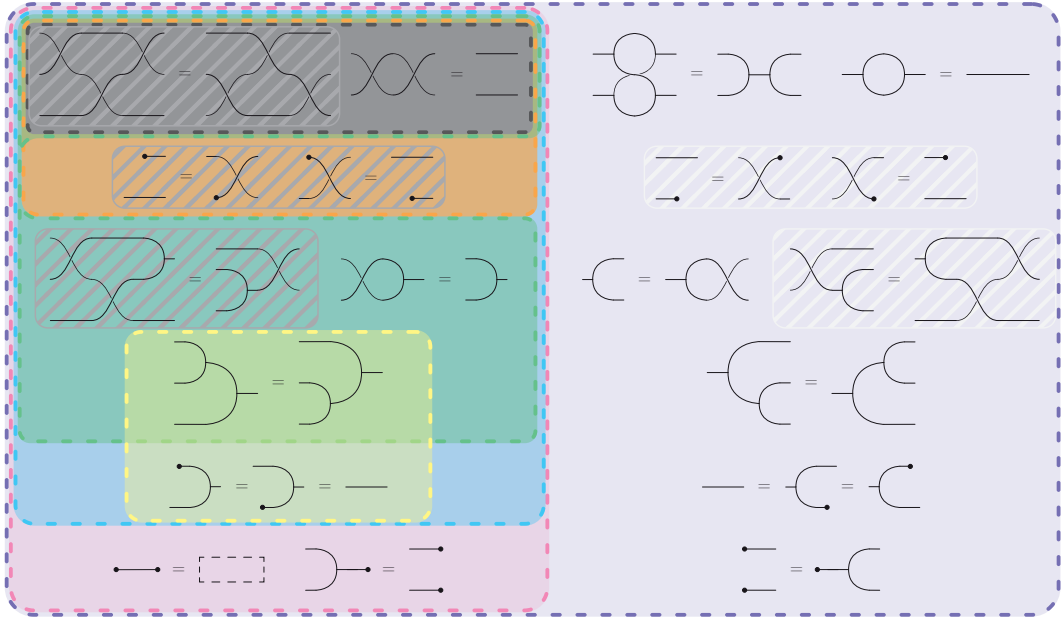}
\caption{Symmetric monoidal theories}\label{fig:smt-theories}
\end{figure}

\subsection{Nominal Sets}
Let $\names$ be a countably infinite set of `names` or `atoms`. Let $\mathfrak S$ be the group of finite\footnote{A permutation is called finite if it is generated by finitely many transpositions.} permutations $\names\to\names$. An element $x\in X$ of a group action $\mathfrak S\times X\to X$ is supported by $S\subseteq\names$ if $\pi\cdot x= x$ for all $\pi\in\mathfrak S$ such that $\pi$ restricted to $S$ is the identity. A group action $\mathfrak S\times X\to X$ such that all elements of $X$ have finite support is called a \emph{nominal set}. 
We write $\supp(x)$ for the minimal support of $x$ and $\Nom$ for the category of nominal sets, which has as maps the \emph{equivariant} functions, that is, those functions that respect the permutation action. Our main example is the category of simultaneous substitutions:

\renewcommand{\Fun}{{\mathsf n\mathbb F}}
\begin{example}[$\Fun$]\label{exle:nF}
We denote by $\Fun$ the category of finite subsets of $\names$ with all functions. While $\Fun$ is a category, it also carries additional nominal structure. In particular, both the set of objects and the set of arrows are nominal sets. with $\supp(A)=A$ and $\supp(f)=A\cup B$ for  $f:A\to B$. The categories of injections, surjections, bijections, partial functions and relations are further examples along the same lines.
\end{example}

\begin{gray}\begin{ak}we may want to refer here to the subsection on string diagrams\end{ak}
\begin{remark}
say that $\mathbb B$ bijections, $\mathbb F$ functions, injections $\mathbb I$, surjections $\mathbb S$, partial functions $\mathbb P$ and relations $\mathbb R$ are equivalent to $\Fun$ etc as categories but that these equivalences are not equivariant ... should we use 
\[\mathsf n\mathbb B, \mathsf n\mathbb F, \mathsf n\mathbb I, \mathsf n\mathbb S, \mathsf n\mathbb P, \mathsf n\mathbb R\]
as the corresponding names for the nominal versions?
\end{remark}
\end{gray}

\section{Internal monoidal categories}\label{sec:internal-monoidal}

We introduce the notion of an internal monoidal category. Given a symmetric monoidal category $(\mathcal V,I,\otimes)$ with finite limits, we are interested in categories $\mathbb C$, internal in $\mathcal V$, that carry a monoidal structure not of type $\mathbb C\times \mathbb C\to \mathbb C$ but of type $\mathbb C\otimes \mathbb C\to \mathbb C$. This will allow us to account for the partiality of $\uplus$ discussed in the introduction:

\begin{example}
\begin{itemize}
\item The symmetric monoidal (closed) category $(\Nom,1,\ast)$ of nominal sets with the separated product $\ast$ is defined as follows \cite{pitts}. $1$ is the terminal object, ie, a singleton with empty support. The separated product of two nominal sets is defined as $A\ast B = \{(a,b)\in A\times B \mid \supp(a)\cap\supp(b)=\emptyset\}$.
\item The category $\Fun$ (and its relatives) of Example~\ref{exle:nF} is an internal monoidal category with   monoidal operation given by  $A\uplus B=A\cup B$ if $A$ and $B$ are disjoint. 
\end{itemize}
\end{example}

\medskip\noindent
$(\Fun,\emptyset,\uplus)$ as defined in the previous example is not a monoidal category, since $\uplus$, being partial, is not an operation of type $\Fun\times\Fun\to\Fun$  . 
The purpose of this section is to show that $(\Fun,\emptyset,\uplus)$ is an internal monoidal category in $(\Nom,1,\ast)$ with $\uplus$ of type 

\[\uplus:\Fun\ast\Fun\to\Fun.\]

To this end we need to extend $\ast:\Nom\times\Nom\to\Nom$ to 

\[\ast:\Cat(\Nom)\times\Cat(\Nom)\to\Cat(\Nom)\] 

where we denote by $\Cat(\Nom)$, the category of (small) internal categories in $\Nom$. 

\medskip
The necessary (and standard) notation from internal categories is reviewed in Appendix A.

\begin{remark}
Let $\mathbb C$ be an internal category in a symmetric monoidal category $(\mathcal V,I,\otimes)$ with finite limits. Since $\otimes$ need not preserve finite limits, we cannot expect that defining $(\mathbb C\otimes\mathbb C)_0=\mathbb C_0\otimes\mathbb C_0$ and $(\mathbb C\otimes\mathbb C)_1=\mathbb C_1\otimes\mathbb C_1$ results in $\mathbb C\otimes\mathbb C$ being an internal category.
\end{remark}

Consequently, putting $(\mathbb C\otimes\mathbb C)_1=\mathbb C_1\otimes\mathbb C_1$ does not extend $\otimes$ to an operation $\Cat(\mathcal V)\times\Cat(\mathcal V)\to\Cat(\mathcal V)$. To show what goes wrong in a concrete instance is the purpose of the next example.

\begin{example} 
Define a binary operation $\Fun\ast\Fun$ as $(\Fun\ast\Fun)_0=\Fun_0\ast\Fun_0$ and $(\Fun\ast\Fun)_1=\Fun_1\ast\Fun_1$. Then $\Fun\ast\Fun$ cannot be equipped with the structure of an internal category. Indeed, assume for a contradiction that there was an appropriate pullback $(\Fun\ast \Fun)_2$ and arrow $\comp$ such that the two diagrams commute: 
\[
\xymatrix@C=20ex{  
(\Fun\ast \Fun)_2 \  
\ar[0,1]|-{\ \comp\ }  
\ar[dd]_{\pi_1}^{\pi_2}
&  
{\ \ \Fun_1\ast \Fun_1 \ \ }   
\ar[dd]_{\dom}^{\cod}
%
\\
&
\\
\Fun_1\ast\Fun_1\  
\ar[0,1]^{\ \dom\ }_{\cod}  
&  
{\ \ \Fun_0\ast\Fun_0 \ \ }   
}  
\]  
Let $\delta_{xy}:\{x\}\to\{y\}$ be the unique function in $\Fun$ of type $\{x\}\to\{y\}$. Then $((\delta_{ac},\delta_{bd}), (\delta_{cb},\delta_{da}))$, which can be depicted as
\[
\xymatrix@R=0.5ex{
\{a\} \ar[r]^{\delta_{ac}} & \{c\} \ar[r]^{\delta_{cb}} & \{b\}\\
\{b\} \ar[r]_{\delta_{bd}} & \{d\} \ar[r]_{\delta_{da}} & \{a\}
}
\]
 is in the pullback $(\Fun\ast \Fun)_2$, but there is no $\comp$ such that the two squares above commute, since $\comp((\delta_{ac},\delta_{bd}), (\delta_{cb},\delta_{da}))$ would have to be $(\delta_{ab},\delta_{ba})$, which do not have disjoint support and therefore are not in $\Fun_1\ast \Fun_1$. 
 \qed
\end{example}

The solution to the problem consists in assuming that the given symmetric monoidal category with finite limits $(\mathcal V,1,\otimes)$  is semi-cartesian (aka affine), that is, the unit $1$ is the terminal object. In such a category there are canonical 
\[j:A\otimes B\to A\times B\]
and we can use them to define arrows $j_1:(\mathbb C\otimes \mathbb C)_1\to\mathbb C_1\times \mathbb C_1$ that give us the right notion of tensor on arrows. From our example $\Fun$ above, we know that we want arrows $(f,g)$ to be in $(\mathbb C\otimes \mathbb C)_1$ if $\dom(f)\cap\dom(g)=\emptyset$ and $\cod(f)\cap\cod(g)=\emptyset$. We now turn this observation into a category theoretic definition.

\medskip
Let $\mathbb C$ and  $\mathbb D$ be internal categories in $\mathcal V$. Our first task is to define $(\mathbb C\otimes \mathbb D)_1$. This is accomplished by stipulating that  $(\mathbb C\otimes \mathbb D)_1$ is the limit in the diagram below

\begin{center}
\begin{tikzpicture}
  \matrix (m) [matrix of math nodes, row sep=1em, row 2/.style= {yshift=-6em}, column sep=0em, color=black]{
                                  & |[color=cyan]|(\mathbb C\otimes\mathbb D)_1 &                               &                                 & \mathbb C_1\times\mathbb D_1 & \\
                                  &                               & \mathbb C_0\otimes\mathbb D_0 &                                 &                               & \mathbb C_0\times\mathbb D_0  \\
    \mathbb C_0\otimes\mathbb D_0 &                               &                               & \mathbb C_0\times\mathbb D_0   &                               & \\};
  \path[-stealth, color=black]
    (m-1-2) edge [color=cyan] node [above] {$j_1$} (m-1-5) 
            edge [color=cyan, densely dotted] node [right, xshift=0.7em,yshift=-1.25em] {$\cod_{(\mathbb C\otimes\mathbb D)_1}$} (m-2-3)
            edge [color=cyan] node [left] {$\dom_{(\mathbb C\otimes\mathbb D)_1}$} (m-3-1)
    (m-2-3) edge [densely dotted] node [above, xshift=2.5em] {$j$} (m-2-6)
    (m-3-1) edge node [below] {$j$} (m-3-4)
    (m-1-5) edge [densely dotted] node [right, xshift=-0.7em,yshift=1.25em] {$\cod_{\mathbb C_1} \times \cod_{\mathbb D_1}$} (m-2-6)
            edge [-,line width=6pt,draw=white] (m-3-4) edge node [left,xshift=1.2em, yshift=2.5em] {$\dom_{\mathbb C_1} \times \dom_{\mathbb D_1}$} (m-3-4);
\end{tikzpicture}
\end{center}

\noindent
In the following we abbreviate the diagram above to
\begin{equation}
\label{equ:j1}
\vcenter{
\xymatrix@C=9ex{
(\mathbb C\otimes\mathbb D)_1 \ar[rr]^{j_1}
\ar@<-1ex>[d]_{\dom}
\ar@<1ex>[d]^{\cod}
&& \mathbb C_1\times\mathbb D_1 
\ar@<-1ex>[d]_{\dom\times \dom}
\ar@<1ex>[d]^{\cod\times \cod}
\\
(\mathbb C\otimes\mathbb D)_0 \ar[rr]_j
&& \mathbb C_0\times\mathbb D_0 
}}
\end{equation}

We are now in the position to extend the monoidal operation $\otimes:\mathcal V\times\mathcal V\to\mathcal V$ to a monoidal operation $\otimes:Cat(\mathcal V)\times Cat(\mathcal V)\to Cat(\mathcal V)$.

\begin{definition}\label{def:internal-tensor}
Let $(\mathcal V,1,\otimes)$ be a monoidal category where the unit is the terminal object. The operation $\otimes:Cat(\mathcal V)\times Cat(\mathcal V)\to Cat(\mathcal V)$ is defined as follows.
\begin{itemize}
\item $(\mathbb C\otimes\mathbb D)_0$ and $(\mathbb C\otimes\mathbb D)_1$ and $\cod,\dom: (\mathbb C\otimes\mathbb D)_1\to (\mathbb C\otimes\mathbb D)_0$ as in the diagram above.
\item $i:(\mathbb C\otimes\mathbb D)_0\to (\mathbb C\otimes\mathbb D)_1$ is the arrow into the limit $(\mathbb C\otimes\mathbb D)_1$ given by

\[
\xymatrix@C=9ex{
(\mathbb C\otimes\mathbb D)_0 
\ar@/^/[rrrd]^{(i\times i)\circ j} 
\ar@/_/@<-1ex>[rdd]|-{\ \id \ }
\ar@/_/@<-3ex>[rdd]|-{\ \id \ }
\ar@{..>}[rd]|-{\ i\ }
&&&
\\
&(\mathbb C\otimes\mathbb D)_1 \ar[rr]|-{\ j_1 \ }
\ar@<-1ex>[d]_{\cod}
\ar@<1ex>[d]^{\dom}
&& \mathbb C_1\times\mathbb D_1 
\ar@<-1ex>[d]_{\cod\times \cod}
\ar@<1ex>[d]^{\dom\times \dom}
\\
&{\ \ (\mathbb C\otimes\mathbb D)_0} \ar[rr]_j
&& \mathbb C_0\times\mathbb D_0 
}
\]
from which one reads off 
\[\dom\circ i  = \id_{(\mathbb C\otimes\mathbb D)_0} = \cod\circ i\]
\item $(\mathbb C\otimes\mathbb D)_2$ is the pullback 
\[
\xymatrix{
&(\mathbb C\otimes\mathbb D)_2\ar[dl]_{\pi_1}\ar[dr]^{\pi_2}&
\\
(\mathbb C\otimes\mathbb D)_1\ar[dr]_{\cod}
&& 
(\mathbb C\otimes\mathbb D)_1 \ar[dl]^{\dom}
\\
&(\mathbb C\otimes\mathbb D)_0&
}
\]
Recalling the definition of $j_1$ from \eqref{equ:j1}, there is also a corresponding $j_2:(\mathbb C\otimes\mathbb D)_2\to\mathbb C_2\times\mathbb D_2$ due to the fact that the product of pullbacks is a pullback of products
\begin{equation}\label{equ:j2}
\vcenter{
\xymatrix@C=2ex{
&(\mathbb C\otimes\mathbb D)_2\ar[dl]_{\pi_1}\ar[dr]^{\pi_2}\ar[rrr]^{j_2}&
&
&\mathbb C_2\times\mathbb D_2\ar[dl]_{\pi_1\times\pi_1}\ar[dr]^{\pi_2\times\pi_2}&
\\
(\mathbb C\otimes\mathbb D)_1\ar[dr]_{\cod}\ar@/^1.2pc/@{..>}[rrr]|-{ j_1 }
&& 
(\mathbb C\otimes\mathbb D)_1 \ar[dl]^{\dom}\ar@/_1.2pc/@{..>}[rrr]|-{ j_1 }
&
\mathbb C_1\times\mathbb D_1\ar[dr]_{\cod\times\cod}
&& 
\mathbb C_1\times\mathbb D_1 \ar[dl]^{\dom\times\dom}
\\
&(\mathbb C\otimes\mathbb D)_0\ar[rrr]^{j}&
&
&\mathbb C_0\times\mathbb D_0&
}}
\end{equation}
 Recall the definition of the limit $(\mathbb C\otimes\mathbb D)_1$ from \eqref{equ:j1}. 
Then $\comp:(\mathbb C\otimes\mathbb D)_2\to(\mathbb C\otimes\mathbb D)_1$ is the arrow into $(\mathbb C\otimes\mathbb D)_1$
\begin{equation}\label{equ:comp}
\vcenter{
\xymatrix@C=9ex{
(\mathbb C\otimes\mathbb D)_2 
\ar@/^/[rrrd]^{\ \ \ \ \ \ \ \ (\comp\times \comp)\circ j_2} 
\ar@/_/@<-0ex>[rdd]|-{\ \ \ \dom\circ\pi_1 \ }
\ar@/_/@<-3ex>[rdd]|-{\ \cod\circ\pi_2 \ }
\ar@{..>}[rd]|-{\ \comp\ }
&&&
\\
&(\mathbb C\otimes\mathbb D)_1 \ar[rr]|-{\ j_1\ }
\ar@<-1ex>[d]_{\cod}
\ar@<1ex>[d]^{\dom}
&& \mathbb C_1\times\mathbb D_1 
\ar@<-1ex>[d]_{\cod\times \cod}
\ar@<1ex>[d]^{\dom\times \dom}
\\
&{\ \ (\mathbb C\otimes\mathbb D)_0} \ar[rr]_j
&& \mathbb C_0\times\mathbb D_0 
}}
\end{equation}
from which one reads off 

\[\dom\circ \comp=\dom\circ \pi_1\ \quad\quad \ \cod\circ 
  \comp=\cod\circ \pi_2\]
\item The equations $\comp\circ \langle i\circ\dom,\id_{(\mathbb C\otimes\mathbb D)_1} \rangle = \id_{(\mathbb C\otimes\mathbb D)_1} = \comp\circ \langle\id_{(\mathbb C\otimes\mathbb D)_1},i\circ\cod \rangle$ are proved in Proposition~\ref{prop:comp-i}.

\item The equation $\comp\,\circ\,\compl= \comp\,\circ\,\compr$ will be shown in Proposition~\ref{prop:comp-assoc}.
\end{itemize}
\end{definition}

This ends the definition of $\mathbb C\otimes\mathbb D$ and the next few pages are devoted to showing that it is indeed an internal category. To prove the next propositions, we will need the following lemma, which can be skipped for now. It is a consequence of the general fact that the isomorphism  $[\mathcal I,\mathcal C](K_A,D)\cong\mathcal C(A,\lim D)$ defining limits is natural in $A$ and $D$.
\begin{lemma}\label{lem:hh'}
If in the diagram
\[
\xymatrix@C=9ex{
T 
\ar[rr]^{ k } 
\ar@/_4pc/@<-2ex>[dd]_{f_1}
\ar@/_4pc/@<-0ex>[dd]^{f_2}
\ar@<-0ex>[d]^{h}
&&
P
\ar@/^4pc/@<1ex>[dd]_{f'_1}
\ar@/^4pc/@<3ex>[dd]^{f'_2}
\ar@<-0ex>[d]^{h'}
\\
(\mathbb C\otimes\mathbb D)_2 
\ar[rr]^{ j_2 } 
\ar@<-1ex>[d]_{\pi_1}
\ar@<1ex>[d]^{\pi_2}
&&
\mathbb C_2\times\mathbb D_2 
\ar@<-1ex>[d]_{\pi_1\times\pi_1}
\ar@<1ex>[d]^{\pi_2\times\pi_2}
\\
(\mathbb C\otimes\mathbb D)_1 \ar[rr]^{ j_1 } 
\ar@<-1ex>[d]_{\cod}
\ar@<1ex>[d]^{\dom}
&& 
\mathbb C_1\times\mathbb D_1 
\ar@<-1ex>[d]_{\cod\times \cod}
\ar@<1ex>[d]^{\dom\times \dom}
\\
{\ \ (\mathbb C\otimes\mathbb D)_0} \ar[rr]_j
&& \mathbb C_0\times\mathbb D_0 
}
\]
$f$ and $f'$ are cones commuting with $j_1$ and $k$, that is, if
\begin{align}
\label{equ:lemjh:1} \cod\circ f_1 &= \dom\circ f_2 \\ 
\label{equ:lemjh:2} (\cod\times\cod)\circ f'_1 &= (\dom\times\dom)\circ f'_2 \\
\label{equ:lemjh:3} j_1\circ f_i &= f'_i\circ k
\end{align}
and $h,h'$ are the respective unique arrows into the pullbacks, then also 
\[h'\circ k=j_2\circ h\]
holds.
\end{lemma}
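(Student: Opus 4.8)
The plan is to read off the desired identity directly from the universal property of $\mathbb C_2\times\mathbb D_2$ as a pullback. As recorded just before the lemma, the product of the two pullbacks $\mathbb C_2$ and $\mathbb D_2$ is itself a pullback, namely the pullback of $\cod\times\cod$ and $\dom\times\dom$ with projections $\pi_1\times\pi_1$ and $\pi_2\times\pi_2$. Consequently two arrows $T\to\mathbb C_2\times\mathbb D_2$ agree as soon as they agree after postcomposition with both $\pi_1\times\pi_1$ and $\pi_2\times\pi_2$. I would apply this criterion to the two arrows $h'\circ k$ and $j_2\circ h$, both of type $T\to\mathbb C_2\times\mathbb D_2$.

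The second step is to compute the two projections of each side, drawing on four families of equations that are already at hand. The defining equations of $h$, namely $\pi_i\circ h=f_i$, and of $h'$, namely $(\pi_i\times\pi_i)\circ h'=f'_i$, are available precisely because the cone conditions \eqref{equ:lemjh:1} and \eqref{equ:lemjh:2} make $f$ and $f'$ legitimate cones into the respective pullbacks. The two squares defining $j_2$ in diagram \eqref{equ:j2} give $(\pi_i\times\pi_i)\circ j_2=j_1\circ\pi_i$. Finally the compatibility hypothesis \eqref{equ:lemjh:3} reads $j_1\circ f_i=f'_i\circ k$. For $i=1$ this yields on one hand $(\pi_1\times\pi_1)\circ(h'\circ k)=f'_1\circ k$, and on the other $(\pi_1\times\pi_1)\circ(j_2\circ h)=j_1\circ\pi_1\circ h=j_1\circ f_1=f'_1\circ k$; the case $i=2$ is verbatim the same with $2$ for $1$. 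Since both projections agree, the pullback universal property forces $h'\circ k=j_2\circ h$.

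There is no deep obstacle here: the entire content is a diagram chase, and the only point demanding care is the legitimacy of the step $(\pi_i\times\pi_i)\circ(j_2\circ h)=j_1\circ\pi_i\circ h$. This rests on the two squares $(\pi_i\times\pi_i)\circ j_2=j_1\circ\pi_i$, which hold by construction of $j_2$ as the mediating arrow of the product-of-pullbacks in \eqref{equ:j2}; one should verify that these are exactly the squares that diagram asserts, since $j_2$ was introduced only implicitly. The conceptual reason behind the computation is the one flagged in the statement: the bijection between cones and arrows into a limit is natural in the vertex, which accounts for precomposition with $k$, and natural in the diagram, which accounts for the map of cospans induced by the $j$'s. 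The lemma is precisely the commuting square obtained by combining these two naturalities, and the projection-wise check above is its elementary verification.
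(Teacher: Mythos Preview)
Your proof is correct and follows essentially the same approach as the paper's own argument: both show that $h'\circ k$ and $j_2\circ h$ agree after postcomposition with the jointly monic pair $\pi_i\times\pi_i$, chaining together the defining equations of $h$, $h'$, $j_2$ and hypothesis \eqref{equ:lemjh:3}. The paper compresses the four-step calculation into a single line, whereas you spell out each equality and the legitimacy of the key square $(\pi_i\times\pi_i)\circ j_2=j_1\circ\pi_i$, but the content is identical.
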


\begin{longVersion}
\begin{proof}
It suffices to calculate, in a slightly abbreviated style, 
$\pi\circ h'\circ k 
= f'\circ k 
= j_1\circ f 
= j_1\circ \pi\circ h
= \pi\circ j_2\circ h$
where the last equality is due to the definition of $j_2$ given in \eqref{equ:j2}. This implies $h'\circ k=j_2\circ h$.
\end{proof}
\end{longVersion}

\noindent
Using the lemma, the next two propositions have reasonably straight forward proofs.

\begin{proposition}\label{prop:comp-i}
$\comp\circ \langle i\circ\dom,\id_{(\mathbb C\otimes\mathbb D)_1} \rangle = \id_{(\mathbb C\otimes\mathbb D)_1} = \comp\circ \langle\id_{(\mathbb C\otimes\mathbb D)_1},i\circ\cod \rangle$.
\end{proposition}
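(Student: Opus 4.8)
The plan is to use the universal property of the limit $(\mathbb C\otimes\mathbb D)_1$ of \eqref{equ:j1}: two arrows into it coincide as soon as they agree after postcomposition with the three generating cone legs $\dom$, $\cod$ and $j_1$, since every remaining leg of the cone factors through these. Abbreviating the left-hand composite by $u=\comp\circ\langle i\circ\dom,\id\rangle$, I would first note that $\langle i\circ\dom,\id\rangle$ really is an arrow into the pullback $(\mathbb C\otimes\mathbb D)_2$, because $\cod\circ(i\circ\dom)=\dom=\dom\circ\id$ using $\cod\circ i=\id$. It then suffices to establish the three equations $\dom\circ u=\dom$, $\cod\circ u=\cod$ and $j_1\circ u=j_1$.

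The first two are immediate. From the readoff identities $\dom\circ\comp=\dom\circ\pi_1$ and $\cod\circ\comp=\cod\circ\pi_2$ of Definition~\ref{def:internal-tensor}, composing with $\langle i\circ\dom,\id\rangle$ and using $\dom\circ i=\cod\circ i=\id$ gives $\dom\circ u=\dom\circ i\circ\dom=\dom$ and $\cod\circ u=\cod\circ\id=\cod$.

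The essential step is the leg $j_1$. Here I would invoke the defining equation $j_1\circ\comp=(\comp\times\comp)\circ j_2$ read off from the cone \eqref{equ:comp}, reducing the goal to $(\comp\times\comp)\circ j_2\circ\langle i\circ\dom,\id\rangle=j_1$. To rewrite $j_2\circ\langle i\circ\dom,\id\rangle$ I would apply Lemma~\ref{lem:hh'} with $T=(\mathbb C\otimes\mathbb D)_1$, $h=\langle i\circ\dom,\id\rangle$, $P=\mathbb C_1\times\mathbb D_1$, $k=j_1$, and $h'=\langle(i\times i)\circ(\dom\times\dom),\id\rangle$. Its hypotheses \eqref{equ:lemjh:1}--\eqref{equ:lemjh:3} reduce exactly to $\cod\circ i=\id$, to $j_1\circ i=(i\times i)\circ j$ (the top leg in the definition of $i$), and to the cone equation $j\circ\dom=(\dom\times\dom)\circ j_1$ of \eqref{equ:j1}, so the lemma yields
\[
j_2\circ\langle i\circ\dom,\id\rangle=\langle(i\times i)\circ(\dom\times\dom),\id\rangle\circ j_1.
\]

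Finally I would observe that $\mathbb C\times\mathbb D$ is an internal category in $\mathcal V$, with $\dom\times\dom$, $\cod\times\cod$, identities $i\times i$, composition $\comp\times\comp$, and composable pairs $\mathbb C_2\times\mathbb D_2$; its left unit law is precisely $(\comp\times\comp)\circ\langle(i\times i)\circ(\dom\times\dom),\id\rangle=\id$, which holds because it holds componentwise in $\mathbb C$ and $\mathbb D$. Substituting the previous display then gives $j_1\circ u=(\comp\times\comp)\circ\langle(i\times i)\circ(\dom\times\dom),\id\rangle\circ j_1=j_1$, completing the three checks and hence the left identity; the right identity is symmetric, replacing $\dom,\pi_1$ by $\cod,\pi_2$ and the left unit laws of $\mathbb C,\mathbb D$ by their right unit laws. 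The only delicate point is the middle step: choosing the correct instance of Lemma~\ref{lem:hh'} and verifying its cone conditions, which is exactly where the discrepancy between $\otimes$ and $\times$ has to be absorbed.
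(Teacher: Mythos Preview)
Your proof is correct and follows essentially the same route as the paper: reduce to the three cone legs $\dom$, $\cod$, $j_1$ of the limit \eqref{equ:j1}, dispatch the first two via $\dom\circ\comp=\dom\circ\pi_1$ and $\cod\circ\comp=\cod\circ\pi_2$, and for $j_1$ combine $j_1\circ\comp=(\comp\times\comp)\circ j_2$ with Lemma~\ref{lem:hh'} (instantiated exactly as you do) and the componentwise unit law of $\mathbb C\times\mathbb D$. Your verification of the lemma's hypotheses via $j_1\circ i=(i\times i)\circ j$ and $j\circ\dom=(\dom\times\dom)\circ j_1$ is in fact tidier than the paper's draft argument.
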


\begin{longVersion}
\begin{proof}
We show the first equality. According to the definition of the limit $(\mathbb C\otimes\mathbb D)_1$ given in \eqref{equ:j1}, it suffices to show 
\begin{align*}
\cod\circ\comp\circ \langle i\circ\dom,\id_{(\mathbb C\otimes\mathbb D)_1} \rangle & = \cod\\
\dom\circ\comp\circ \langle i\circ\dom,\id_{(\mathbb C\otimes\mathbb D)_1} \rangle & = \dom\\
j_1\circ\comp\circ \langle i\circ\dom,\id_{(\mathbb C\otimes\mathbb D)_1} \rangle & = j_1
\end{align*}
The first two follow from $\dom\circ \comp=\dom\circ \pi_1$ and $\cod\circ 
  \comp=\cod\circ \pi_2$.
Recalling $\comp\circ\langle i\circ\dom,\id_{(\mathbb C\otimes\mathbb D)_1}\rangle=\id_{(\mathbb C\otimes\mathbb D)_1}$ and, hence,  $(\comp\times\comp)\circ\langle (i\times i)\circ(\dom\times\dom),\id_{\mathbb C_1\times\mathbb D_1} \rangle=\id_{\mathbb C_1\times\mathbb D_1}$, the third follows because the two rectangles below commute.
\begin{equation}\label{eq:prop:comp-i}
\vcenter{
\xymatrix@C=9ex{
(\mathbb C\otimes\mathbb D)_1 
\ar[rr]^{ j_1 } 
\ar@<-0ex>[d]_{\langle i\circ\dom,\id_{(\mathbb C\otimes\mathbb D)_1} \rangle}
&&
\mathbb C_1\times\mathbb D_1 
\ar@<-0ex>[d]^{\langle (i\times i)\circ(\dom\times\dom),\id_{\mathbb C_1\times\mathbb D_1} \rangle}
\\
(\mathbb C\otimes\mathbb D)_2 
\ar[rr]^{ j_2 } 
\ar[d]_{\comp}
&&
\mathbb C_2\times\mathbb D_2 
\ar[d]^{\comp\times\comp}
\\
(\mathbb C\otimes\mathbb D)_1 \ar[rr]^{ j_1 } 
&& 
\mathbb C_1\times\mathbb D_1 
}}
\end{equation}
The lower rectangle commutes by the definition of $\comp$, see \eqref{equ:comp}.
To show that the upper rectangle commutes, we instantiate the lemma with $k=j_1$ and
$f_1=i\circ\dom$ and $f_2=\id_{(\mathbb C\otimes\mathbb D)_1}$ and $h=\langle i\circ\dom,\id_{(\mathbb C\otimes\mathbb D)_1} \rangle$ and $f_1'=(i\times i)\circ(\dom\times\dom)$ and $f_2'=\id_{\mathbb C_1\times\mathbb D_1}$ and $h'=\langle (i\times i)\circ(\dom\times\dom),\id_{\mathbb C_1\times\mathbb D_1} \rangle$. Equations \eqref{equ:lemjh:1} and \eqref{equ:lemjh:2} are straightforward to check. 
%
\begin{ak} 
For \eqref{equ:lemjh:3}, we first verify   
\begin{align*}
j\circ\cod\circ f_i & =(\cod\times\cod)\circ f'_i\circ j_1\\
j\circ\dom\circ f_i& =(\dom\times\dom)\circ f'_i\circ j_1
\end{align*}
For example,
\begin{align*}
j\circ\cod\circ f_1
&= j\circ\cod\circ i\circ\dom\\
&= j\circ\dom\\
&= (\dom\times\dom)\circ j_1\\
&= (\cod\times\cod)\circ(i\times i)\circ(\dom\times\dom)\circ j_1\\
& =(\cod\times\cod)\circ f'_1\circ j_1
\end{align*}
and the other cases are similar (CHECK). 
It follows that $(\cod\circ f_1, \dom\circ f_1, f_1'\circ j_1)$ is a cone over the limit \eqref{equ:j1}. Hence the `triangle'  $j_1\circ f_1=f'_1\circ j_1$ commutes. Similarly (CHECK), the triangle $j_1\circ f_2=f'_2\circ j_1$ commutes. We now verified the assumptions of the lemma and conclude that that the upper rectangle of Diagram \eqref{eq:prop:comp-i} commutes, 
\end{ak}
which was all that was left to show $\comp\circ \langle i\circ\dom,\id_{(\mathbb C\otimes\mathbb D)_1} \rangle = \id_{(\mathbb C\otimes\mathbb D)_1}$.  The second equality $\id_{(\mathbb C\otimes\mathbb D)_1} = \comp\circ \langle\id_{(\mathbb C\otimes\mathbb D)_1},i\circ\cod \rangle$ is proved similarly (CHECK).

\begin{sam}
Again, we need to show:
\begin{align*}
\cod \circ \comp \circ \langle\id_{(\mathbb C\otimes\mathbb D)_1},i\circ\cod \rangle & = \cod\\
\dom \circ \comp \circ \langle\id_{(\mathbb C\otimes\mathbb D)_1},i\circ\cod \rangle & = \dom\\
j_1  \circ \comp \circ \langle\id_{(\mathbb C\otimes\mathbb D)_1},i\circ\cod \rangle & = j_1
\end{align*}

We have:
\begin{align*}
\cod \circ \comp \circ \langle\id_{(\mathbb C\otimes\mathbb D)_1},i\circ\cod \rangle & = 
\cod \circ \pi_2 \circ \langle\id_{(\mathbb C\otimes\mathbb D)_1},i\circ\cod \rangle \\
& = \cod \circ i\circ\cod \\
& = \cod
\end{align*}

(Similarly for $\dom \circ \comp \circ \langle\id_{(\mathbb C\otimes\mathbb D)_1},i\circ\cod \rangle = \dom$.)

For the third equation, we have a diagram similar to the one above, where we need to check that the top square commutes. We instantiate the lemma and get the following diagram :


\[
\xymatrix@C=9ex{
T = (\mathbb C\otimes\mathbb D)_1 
\ar[rr]^{ k = j_1 } 
\ar@/_7pc/@<-2ex>[dd]_{f_1 = \id_{(\mathbb C\otimes\mathbb D)_1}}
\ar@/_7pc/@<-0ex>[dd]^{f_2 = i\circ\cod}
\ar@<-0ex>[d]^{h = \langle\id_{(\mathbb C\otimes\mathbb D)_1},i\circ\cod \rangle}
&&
P = \mathbb C_1\times\mathbb D_1 
\ar@/^7pc/@<1ex>[dd]_{f'_1 = \id_{\mathbb C_1\times\mathbb D_1}}
\ar@/^7pc/@<3ex>[dd]^{f'_2 = (i\times i)\circ(\cod\times\cod)}
\ar@<-0ex>[d]^{h' = \langle \id_{\mathbb C_1\times\mathbb D_1} , (i\times i)\circ(\cod\times\cod) \rangle}
\\
(\mathbb C\otimes\mathbb D)_2 
\ar[rr]^{ j_2 } 
\ar@<-1ex>[d]_{\pi_1}
\ar@<1ex>[d]^{\pi_2}
&&
\mathbb C_2\times\mathbb D_2 
\ar@<-1ex>[d]_{\pi_1\times\pi_1}
\ar@<1ex>[d]^{\pi_2\times\pi_2}
\\
(\mathbb C\otimes\mathbb D)_1 \ar[rr]^{ j_1 } 
\ar@<-1ex>[d]_{\cod}
\ar@<1ex>[d]^{\dom}
&& 
\mathbb C_1\times\mathbb D_1 
\ar@<-1ex>[d]_{\cod\times \cod}
\ar@<1ex>[d]^{\dom\times \dom}
\\
{\ \ (\mathbb C\otimes\mathbb D)_0} \ar[rr]_j
&& \mathbb C_0\times\mathbb D_0 
}
\]

Thus to check that the top square commutes, it is enough to show the following equations hold:
\begin{align*}
\cod\circ \id_{(\mathbb C\otimes\mathbb D)_1} &= \dom\circ i\circ\cod \\ 
(\cod\times\cod)\circ \id_{\mathbb C_1\times\mathbb D_1} &= (\dom\times\dom)\circ (i\times i)\circ(\cod\times\cod) \\
j_1\circ f_i &= f'_i\circ j_1
\end{align*}

\end{sam}
\end{proof}
\end{longVersion}

\begin{proposition}\label{prop:comp-assoc}
$\comp\circ\compl = \comp\circ\compr$ 
\end{proposition}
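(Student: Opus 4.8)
The plan is to exploit the universal property of the limit $(\mathbb C\otimes\mathbb D)_1$ from \eqref{equ:j1} in exactly the style of Proposition~\ref{prop:comp-i}. Both $\comp\circ\compl$ and $\comp\circ\compr$ are arrows from the object $(\mathbb C\otimes\mathbb D)_3$ of composable triples into $(\mathbb C\otimes\mathbb D)_1$, so to prove them equal it suffices to show that they agree after post-composition with the three arrows $\dom$, $\cod$ and $j_1$ that jointly define that limit.

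First I would dispatch the two easy components. Using the identities $\dom\circ\comp=\dom\circ\pi_1$ and $\cod\circ\comp=\cod\circ\pi_2$ read off from \eqref{equ:comp}, together with the defining equations of $\compl$ and $\compr$ (the former composes the first two factors of the triple, the latter the last two), both $\dom\circ\comp\circ\compl$ and $\dom\circ\comp\circ\compr$ collapse to $\dom\circ\pi_1$, the domain of the first arrow, and dually both $\cod$-components collapse to $\cod\circ\pi_3$. These are routine diagram chases requiring no new input.

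The real work is the $j_1$-component, where the partiality of $\otimes$ becomes harmless because everything is transported into the genuine product $\mathbb C_1\times\mathbb D_1$. Here I would use $j_1\circ\comp=(\comp\times\comp)\circ j_2$ from \eqref{equ:comp} to rewrite $j_1\circ\comp\circ\compl=(\comp\times\comp)\circ j_2\circ\compl$, and symmetrically for $\compr$. Introducing the object $(\mathbb C\otimes\mathbb D)_3$ of composable triples together with a comparison arrow $j_3:(\mathbb C\otimes\mathbb D)_3\to\mathbb C_3\times\mathbb D_3$ (which exists, as $j_2$ does in \eqref{equ:j2}, because a product of iterated pullbacks is the iterated pullback of the products), the task reduces to the two naturality squares $j_2\circ\compl=(\compl_{\mathbb C}\times\compl_{\mathbb D})\circ j_3$ and $j_2\circ\compr=(\compr_{\mathbb C}\times\compr_{\mathbb D})\circ j_3$, where $\compl_{\mathbb C},\compr_{\mathbb C}$ and $\compl_{\mathbb D},\compr_{\mathbb D}$ are the left/right composition maps of the internal categories $\mathbb C$ and $\mathbb D$. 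Granting these, the left-hand chain becomes $(\comp\times\comp)\circ(\compl_{\mathbb C}\times\compl_{\mathbb D})\circ j_3=((\comp\circ\compl_{\mathbb C})\times(\comp\circ\compl_{\mathbb D}))\circ j_3$, and now associativity holds componentwise: since $\mathbb C$ and $\mathbb D$ are internal categories, $\comp\circ\compl_{\mathbb C}=\comp\circ\compr_{\mathbb C}$ and $\comp\circ\compl_{\mathbb D}=\comp\circ\compr_{\mathbb D}$, so the two sides coincide, and tracing the rewriting back yields $j_1\circ\comp\circ\compl=j_1\circ\comp\circ\compr$.

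I expect the main obstacle to be establishing the two naturality squares $j_2\circ\compl=(\compl_{\mathbb C}\times\compl_{\mathbb D})\circ j_3$, that is, the compatibility of the comparison maps with left and right composition one level up. This is the level-$2/3$ analogue of Lemma~\ref{lem:hh'}, and I would prove it in the same spirit: $\mathbb C_2\times\mathbb D_2$ is itself a pullback of $\mathbb C_1\times\mathbb D_1$ against itself over $\mathbb C_0\times\mathbb D_0$, so equality into it can be tested after the two pullback projections $\pi_1\times\pi_1$ and $\pi_2\times\pi_2$, reducing each square to instances of the already-known commutation $j_1\circ\comp=(\comp\times\comp)\circ j_2$ and the elementary compatibilities of $j_1$ with $\dom$ and $\cod$. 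The fiddly part is the bookkeeping needed to keep the three composability constraints of $(\mathbb C\otimes\mathbb D)_3$ synchronised, through $j_3$, with those of $\mathbb C_3\times\mathbb D_3$, but no genuinely new idea beyond Lemma~\ref{lem:hh'} and Proposition~\ref{prop:comp-i} is required.
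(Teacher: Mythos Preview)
Your proposal is correct and follows essentially the same route as the paper: reduce equality into the limit $(\mathbb C\otimes\mathbb D)_1$ to its three defining projections $\dom$, $\cod$, $j_1$; handle the first two by the identities from \eqref{equ:comp} and the definitions of $\compl,\compr$; for the $j_1$-component introduce $j_3$ and reduce to the naturality squares $j_2\circ\compl=(\compl\times\compl)\circ j_3$ and $j_2\circ\compr=(\compr\times\compr)\circ j_3$, which the paper likewise obtains as instances of Lemma~\ref{lem:hh'}, and then conclude by componentwise associativity in $\mathbb C$ and $\mathbb D$.
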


\begin{longVersion}
\begin{proof}
To show that composition is associative, we need to recall the definition of $\compl$ and $\compr$ from Remark~\ref{def:internal-cat}, which leads us to consider
\[
\xymatrix@C=9ex{
(\mathbb C\otimes\mathbb D)_3 
\ar[rr]^{ j_3 } 
\ar@/_5pc/@<-2ex>[d]_{\compl}
\ar@/_5pc/@<-0ex>[d]^{\compr}
\ar@<-1ex>[d]_{\leftt}
\ar@<1ex>[d]^{\rightt}
&&
\mathbb C_3\times\mathbb D_3 
\ar@/^7pc/@<2ex>[d]_{\compl\times\compl\ }
\ar@/^7pc/@<4ex>[d]^{\compr\times\compr}
\ar@<-1ex>[d]_{\leftt}
\ar@<1ex>[d]^{\rightt}
\\
(\mathbb C\otimes\mathbb D)_2 
\ar@/_4pc/@<-0ex>[d]_{\comp}
\ar[rr]^{ j_2 } 
\ar@<-1ex>[d]_{\pi_1}
\ar@<1ex>[d]^{\pi_2}
&&
\mathbb C_2\times\mathbb D_2 
\ar@/^5pc/@<-0ex>[d]^{\comp\times\comp}
\ar@<-1ex>[d]_{\pi_1\times\pi_1}
\ar@<1ex>[d]^{\pi_2\times\pi_2}
\\
(\mathbb C\otimes\mathbb D)_1 \ar[rr]^{ j_1 } 
\ar@<-1ex>[d]_{\cod}
\ar@<1ex>[d]^{\dom}
&& 
\mathbb C_1\times\mathbb D_1 
\ar@<-1ex>[d]_{\cod\times \cod}
\ar@<1ex>[d]^{\dom\times \dom}
\\
{\ \ (\mathbb C\otimes\mathbb D)_0} \ar[rr]_j
&& \mathbb C_0\times\mathbb D_0 
}
\]
where, in analogy with the definition of $j_2$ in \eqref{equ:j2}, $j_3$ is defined as the unique arrow into the pullback $\mathbb C_3\times\mathbb D_3$.

\medskip
The rectangles commute by definition of the $j_i$. To show $\comp\circ\compl=\comp\circ\compr$ it suffices to show
\begin{align*}
\cod\circ\comp\circ\compl& =\cod\circ\comp\circ\compr\\
\dom\circ\comp\circ\compl& =\dom\circ\comp\circ\compr\\
j_1\circ\comp\circ\compl& =j_1\circ\comp\circ\compr
\end{align*}

For the first, we calculate
\begin{align*}
\cod\circ\comp\circ\compl
& =\cod\circ\pi_2\circ\compl& \cod\circ\comp=\cod\circ\pi_2\\
& =\cod\circ\pi_2\circ\rightt & \text{Def of $\compl$}\\
& =\cod\circ\comp\circ\rightt & \cod\circ\comp=\cod\circ\pi_2\\
& =\cod\circ\pi_2\circ\compr & \text{Def of $\compr$}\\
& =\cod\circ\comp\circ\compr & \cod\circ\comp=\cod\circ\pi_2
\end{align*}
and the second is similar (CHECK). The third, proceeding as in the proof of Proposition~\ref{prop:comp-i}, follows once we establish that the upper rectangle above commutes, that is, 
\begin{align*}
 j_2\circ \compl & =(\compl\times\compl)\circ j_3\\
 j_2\circ \compr & =(\compr\times\compr)\circ j_3
\end{align*}
But these two equations are instances again of Lemma~\ref{lem:hh'} (CHECK). We have shown that composition is associative.
\end{proof}
\end{longVersion}

This finishes the verification that $\mathbb C\otimes \mathbb D$ is an internal category. We next show that $1$ carries the structure of an internal monoidal category.

\begin{proposition}\label{prop:1}
Let $(\mathcal V,1,\otimes)$ be a monoidal category where the unit is the terminal object. $1$ carries the structure of an internal monoidal category $\mathbb 1$ which is the neutral element wrt to the internal tensor $\otimes: Cat(\mathcal V)\times Cat(\mathcal V)\to Cat(\mathcal V)$ of Definition~\ref{def:internal-tensor}.
\end{proposition}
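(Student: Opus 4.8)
The plan is to dispose of the structural claim cheaply and then concentrate on neutrality, where the real content sits.

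First I would equip $1$ with the internal category $\mathbb 1$ given by $\mathbb 1_0=\mathbb 1_1=1$ and $\dom=\cod=i=\comp=\id_1$ (the pullback $\mathbb 1_2$ being again $1$). Every internal-category axiom is an equation between arrows with codomain $1$, so it holds automatically because $1$ is terminal; for the same reason $\mathbb 1$ is the terminal object of $Cat(\mathcal V)$. Hence the internal monoidal data on $\mathbb 1$ --- the tensor $\mathbb 1\otimes\mathbb 1\to\mathbb 1$, the unit $\mathbb 1\to\mathbb 1$, and the associativity and unit isomorphisms --- are the unique arrows furnished by terminality, and every coherence diagram commutes since any two parallel internal functors, or internal natural transformations, into $\mathbb 1$ coincide. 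This gives the internal monoidal structure.

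The substantive claim is that $\mathbb 1$ is neutral for $\otimes$, i.e.\ that there are natural isomorphisms $\mathbb 1\otimes\mathbb C\cong\mathbb C\cong\mathbb C\otimes\mathbb 1$. The key observation is that, $1$ being both the unit of $\otimes$ and terminal, the canonical arrow $j\colon 1\otimes A\to 1\times A$ is an isomorphism: composed with the projection $1\times A\cong A$ it is exactly the left unitor $\lambda_A$. I would feed this into the limit \eqref{equ:j1} defining $(\mathbb 1\otimes\mathbb C)_1$. There $\mathbb 1_1\times\mathbb C_1\cong\mathbb C_1$, the object part is $(\mathbb 1\otimes\mathbb C)_0=1\otimes\mathbb C_0\cong\mathbb C_0$, and the bottom leg $j\colon 1\otimes\mathbb C_0\to 1\times\mathbb C_0$ is the isomorphism just described. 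Because $j$ is invertible, the two cone equations determine $\dom$ and $\cod$ from $j_1$, so $j_1\colon(\mathbb 1\otimes\mathbb C)_1\to\mathbb 1_1\times\mathbb C_1\cong\mathbb C_1$ is itself an isomorphism. It then remains to check that the pair of isomorphisms on objects and on arrows forms an isomorphism of internal categories, natural in $\mathbb C$: compatibility with $\dom,\cod$ is read off the square of \eqref{equ:j1}, and compatibility with $i$ off the triangle defining $i$ in Definition~\ref{def:internal-tensor}. The symmetric isomorphism $\mathbb C\otimes\mathbb 1\cong\mathbb C$ is obtained identically from the right unitor, and naturality together with the triangle coherence reduce to the coherences of $(\mathcal V,1,\otimes)$ and the terminality of $\mathbb 1$.

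I expect the main obstacle to be compatibility with the internal composition $\comp$. Because $j_1$ is an isomorphism respecting $\dom$ and $\cod$, the induced comparison $j_2\colon(\mathbb 1\otimes\mathbb C)_2\to\mathbb 1_2\times\mathbb C_2\cong\mathbb C_2$ of \eqref{equ:j2} is again an isomorphism; the defining equation $j_1\circ\comp=(\comp\times\comp)\circ j_2$ from \eqref{equ:comp}, together with $\mathbb 1_2=1$, then identifies $\comp$ on $\mathbb 1\otimes\mathbb C$ with $\comp$ on $\mathbb C$. The cone-chasing underlying these transports is exactly what Lemma~\ref{lem:hh'} packages, and I would invoke it with $k=j_1$ rather than redo the diagram by hand.
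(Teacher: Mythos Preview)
Your argument is correct and is exactly how one would expect the proof to go. The paper itself omits the proof entirely (the long-version environment contains only an ellipsis), so there is nothing to compare against; your write-up could serve as the missing proof.

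The one point worth making explicit, since you only gesture at it, is why the limit \eqref{equ:j1} collapses when one factor is $\mathbb 1$: once $j$ is invertible, the two squares impose no constraint beyond what is already determined by $j_1$, so the limiting cone is $\mathbb C_1$ itself with $j_1$ the identity (up to the unitor). You clearly have this in mind, and the rest---transporting $i$ and $\comp$ along the isomorphisms via \eqref{equ:comp} and Lemma~\ref{lem:hh'}---is routine as you describe.
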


\begin{longVersion}
\begin{proof}
...
\end{proof}
\end{longVersion}

The next step is to show that the $\otimes: Cat(\mathcal V)\times Cat(\mathcal V)\to Cat(\mathcal V)$ of Definition~\ref{def:internal-tensor} can be extended to a functor.

\begin{proposition}\label{prop:otimes-functor}
Let $(\mathcal V,1,\otimes)$ be a monoidal category with finite limits where the unit is the terminal object. The internal tensor $\otimes: Cat(\mathcal V)\times Cat(\mathcal V)\to Cat(\mathcal V)$ of Definition~\ref{def:internal-tensor} is functorial.
\end{proposition}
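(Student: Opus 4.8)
The plan is to turn the object-assignment of Definition~\ref{def:internal-tensor} into a functor by first equipping $\otimes$ with an action on morphisms and then checking the two functor laws. A morphism of $\mathrm{Cat}(\mathcal V)\times\mathrm{Cat}(\mathcal V)$ is a pair of internal functors $F=(F_0,F_1)\colon\mathbb C\to\mathbb C'$ and $G=(G_0,G_1)\colon\mathbb D\to\mathbb D'$, so I must produce an internal functor $F\otimes G\colon\mathbb C\otimes\mathbb D\to\mathbb C'\otimes\mathbb D'$. On objects I set $(F\otimes G)_0=F_0\otimes G_0$, using that $\otimes$ is already a bifunctor on $\mathcal V$. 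Throughout, the one non-formal ingredient I rely on is the naturality of the canonical arrow $j\colon A\otimes B\to A\times B$ in both arguments, which gives $j\circ(F_0\otimes G_0)=(F_0\times G_0)\circ j$.

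On arrows, $(F\otimes G)_1$ has to be the unique mediating arrow into the limit $(\mathbb C'\otimes\mathbb D')_1$ of \eqref{equ:j1}. I would specify the cone out of $(\mathbb C\otimes\mathbb D)_1$ whose component towards $\mathbb C'_1\times\mathbb D'_1$ is $(F_1\times G_1)\circ j_1$ and whose two components towards $\mathbb C'_0\otimes\mathbb D'_0$ are $(F_0\otimes G_0)\circ\dom$ and $(F_0\otimes G_0)\circ\cod$. Checking that this is a cone amounts to the equations $j\circ(F_0\otimes G_0)\circ\dom=(\dom\times\dom)\circ(F_1\times G_1)\circ j_1$ and its $\cod$-analogue; these hold because $F$ and $G$, being internal functors, commute with $\dom$ and $\cod$, and because of the naturality of $j$ noted above. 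By construction $(F\otimes G)_1$ then commutes with $\dom$ and $\cod$.

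It remains to see that $F\otimes G$ is an internal functor, i.e.\ that it also commutes with $i$ and $\comp$. For $i$ this follows by post-composing with the limit projections $\dom,\cod,j_1$: the $\dom$- and $\cod$-components reduce to $\dom\circ i=\cod\circ i=\id$, while the $j_1$-component reduces, via $j_1\circ i=(i\times i)\circ j$, to $F_1\circ i=i\circ F_0$, $G_1\circ i=i\circ G_0$ and naturality of $j$. For $\comp$, let $(F\otimes G)_2\colon(\mathbb C\otimes\mathbb D)_2\to(\mathbb C'\otimes\mathbb D')_2$ be the map on composable pairs induced by $(F\otimes G)_1$. Exactly as in Propositions~\ref{prop:comp-i} and~\ref{prop:comp-assoc}, both $(F\otimes G)_1\circ\comp$ and $\comp\circ(F\otimes G)_2$ are arrows into the limit $(\mathbb C'\otimes\mathbb D')_1$, so by its universal property it suffices to compare them after $\dom$, $\cod$ and $j_1$; the first two follow from $\dom\circ\comp=\dom\circ\pi_1$ and $\cod\circ\comp=\cod\circ\pi_2$, while the $j_1$-component reduces, using \eqref{equ:comp} and the internal-functor equations $F_1\circ\comp=\comp\circ F_2$, $G_1\circ\comp=\comp\circ G_2$, to the naturality square $j_2\circ(F\otimes G)_2=(F_2\times G_2)\circ j_2$, which is an instance of Lemma~\ref{lem:hh'}.

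Finally I would verify the two functor laws. Preservation of identities, $\id_{\mathbb C}\otimes\id_{\mathbb D}=\id_{\mathbb C\otimes\mathbb D}$, and of composition, $(F'\circ F)\otimes(G'\circ G)=(F'\otimes G')\circ(F\otimes G)$, both reduce to equalities of arrows into the limit $(\mathbb C'\otimes\mathbb D')_1$ (respectively $(\mathbb C''\otimes\mathbb D'')_1$); by uniqueness of the mediating arrow it is enough to compose with $\dom$, $\cod$ and $j_1$ and to invoke the corresponding laws for the bifunctors $\otimes$ and $\times$ on $\mathcal V$. The main obstacle is the $\comp$-compatibility of the previous paragraph: as in the earlier propositions, the only genuine input is Lemma~\ref{lem:hh'} together with the naturality of $j$, and the care lies in keeping track of which cone each arrow factors through when transporting the defining rectangles of \eqref{equ:j2} and \eqref{equ:comp} across $F$ and $G$.
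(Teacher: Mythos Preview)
Your proposal is correct and follows the same approach as the paper: define $(F\otimes G)_0=F_0\otimes G_0$ and obtain $(F\otimes G)_1$ as the unique mediating arrow into the limit $(\mathbb C'\otimes\mathbb D')_1$. In fact you supply considerably more than the paper's own proof, which records only the definition of $(F\otimes G)_1$ via the limit diagram and leaves the verification that $F\otimes G$ is an internal functor (and the functor laws) as a check; your use of the limit projections $\dom,\cod,j_1$ together with Lemma~\ref{lem:hh'} to handle the $\comp$-compatibility is exactly the intended completion of that check.
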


\begin{longVersion}
\begin{proof}
Let $f=(f_0,f_1):\mathbb C\to\mathbb C'$ and $g=(g_0,g_1):\mathbb D\to\mathbb D'$ be morphisms of internal categories.  $f\otimes g$ is defined as $(f\otimes g)_0=f_0\otimes g_0$ and $(f\otimes g)_1:(\mathbb C\otimes\mathbb D)_1\to(\mathbb C'\otimes\mathbb D')_1$ as the unique arrow into the limit $(\mathbb C'\otimes\mathbb D')_1$
\begin{equation}\label{equ:j1-arrows}
\vcenter{
\xymatrix@C=9ex{
(\mathbb C\otimes\mathbb D)_1 \ar[rr]^{(f\otimes g)_1}
\ar@<-1ex>[d]_{\cod}
\ar@<1ex>[d]^{\dom}
&& 
(\mathbb C'\otimes\mathbb D')_1 \ar[rr]^{j_1}
\ar@<-1ex>[d]_{\cod}
\ar@<1ex>[d]^{\dom}
&& 
\mathbb C'_1\times\mathbb D'_1 
\ar@<-1ex>[d]_{\cod\times \cod}
\ar@<1ex>[d]^{\dom\times \dom}
\\
(\mathbb C\otimes\mathbb D)_0 \ar[rr]_{(f\otimes g)_0}
&& 
(\mathbb C'\otimes\mathbb D')_0 \ar[rr]_j
&& 
\mathbb C'_0\times\mathbb D'_0 
}}
\end{equation}
CHECK THAT $f\otimes g$ satisfies the equations of an internal functor.
\end{proof}
\end{longVersion}

The main result of the section is 

\begin{theorem}
Let $(\mathcal V,1,\otimes)$ be a (symmetric) monoidal category with finite limits where the unit is the terminal object and $\otimes: Cat(\mathcal V)\times Cat(\mathcal V)\to Cat(\mathcal V)$ the internal tensor of Definition~\ref{def:internal-tensor}. Then $(Cat(\mathcal V),\mathbb 1, \otimes)$ is a (symmetric) monoidal category. 
\end{theorem}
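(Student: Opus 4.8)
The plan is to lift the monoidal structure of $\mathcal V$ to $Cat(\mathcal V)$ componentwise. Functoriality of $\otimes$ is already available from Proposition~\ref{prop:otimes-functor}, and Proposition~\ref{prop:1} supplies the unit $\mathbb 1$, so what remains is to construct the associator $\alpha$, the left and right unitors $\lambda,\rho$ and (in the symmetric case) the symmetry $\sigma$ as internal natural isomorphisms, and then to verify naturality together with the triangle, pentagon and hexagon. The guiding principle is that an internal functor is a map of object parts in $\mathcal V$ together with a map of arrow parts, and that the arrow part is always pinned down by a universal property; so each structure isomorphism will be defined on objects directly by the corresponding structure map of $\mathcal V$ and on arrows as a mediating map into a limit.

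Concretely, since $((\mathbb C\otimes\mathbb D)\otimes\mathbb E)_0=(\mathbb C_0\otimes\mathbb D_0)\otimes\mathbb E_0$ and $(\mathbb C\otimes(\mathbb D\otimes\mathbb E))_0=\mathbb C_0\otimes(\mathbb D_0\otimes\mathbb E_0)$, I would take the object component of $\alpha_{\mathbb C,\mathbb D,\mathbb E}$ to be the associator $\alpha^{\mathcal V}$ of $\mathcal V$ at $\mathbb C_0,\mathbb D_0,\mathbb E_0$, and likewise for $\lambda,\rho,\sigma$. The arrow component of each is then defined exactly as in Definition~\ref{def:internal-tensor}: as the unique arrow into the limit \eqref{equ:j1} defining the arrow object of the target, the requisite cone being produced by composing the corresponding $\mathcal V$-structure map on the underlying product $\mathbb C_1\times\mathbb D_1\times\mathbb E_1$ with the relevant $j_1$ maps and checking the $\dom/\cod$ compatibilities. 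Inverses are built the same way from the inverses in $\mathcal V$.

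The technical engine for all subsequent checks is that the limit cone $(\dom,\cod,j_1)$ out of $(\mathbb C\otimes\mathbb D)_1$ is jointly monic, limit projections always being so. Hence two parallel internal functors landing in some arrow object agree as soon as their object parts agree and they agree after postcomposition with $j_1$, i.e.\ in an ordinary cartesian product of arrow objects. This single observation reduces (i) the proof that each candidate is a genuine internal functor --- compatibility with $i$ and with $\comp$ --- to the method already used in Propositions~\ref{prop:comp-i} and~\ref{prop:comp-assoc}, invoking Lemma~\ref{lem:hh'} with $k$ the arrow component at hand and using the compatibility of $j_1,j_2$ from \eqref{equ:j1} and \eqref{equ:j2}; (ii) naturality of $\alpha,\lambda,\rho,\sigma$ to naturality of $\alpha^{\mathcal V},\lambda^{\mathcal V},\rho^{\mathcal V},\sigma^{\mathcal V}$; and (iii) the triangle, pentagon and hexagon to the corresponding coherence identities of $\mathcal V$, which hold by hypothesis.

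I expect the main obstacle to be the interaction between the $\mathcal V$-structure maps and the canonical comparison maps $j\colon A\otimes B\to A\times B$. One must first establish, as a lemma about semicartesian symmetric monoidal categories, that $j$ is natural and is a monoidal natural transformation from $\otimes$ to $\times$, hence commutes with $\alpha^{\mathcal V},\lambda^{\mathcal V},\rho^{\mathcal V}$ and $\sigma^{\mathcal V}$ in the appropriate sense. With that compatibility in hand the cone conditions in the construction above and the $j_1$-equations in the internal-functor checks (the analogues of the third displayed equation in the proof of Proposition~\ref{prop:comp-i}) hold mechanically, and the coherence axioms then follow from their $\mathcal V$-counterparts by joint monicity.
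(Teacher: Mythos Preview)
Your outline is correct and is precisely the approach the paper gestures at: the paper's own argument is only a one-line sketch (``after Propositions~\ref{prop:1} and~\ref{prop:otimes-functor}, it remains to show that the associator etc.\ satisfy the laws of monoidal categories \ldots\ this should follow from $(\mathcal V,1,\otimes)$ being a monoidal category''), and you have supplied exactly the expected elaboration---defining the structure maps on objects via the $\mathcal V$-coherence isomorphisms, on arrows via the universal property of \eqref{equ:j1}, and reducing all checks to $\mathcal V$ using joint monicity of the limiting cone together with Lemma~\ref{lem:hh'}. Your identification of the compatibility of $j$ with the $\mathcal V$-structure maps as the key auxiliary lemma is spot on and is indeed what makes the ``big diagrams'' commute.
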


\begin{longVersion}
\begin{proof}
After Propositions~\ref{prop:1} and \ref{prop:otimes-functor}, it remains to show that the associator etc satisfy the laws of monoidal categories ... this should follow from  $(\mathcal V,1,\otimes)$  being a monoidal category (CHECK ... there will be some big diagrams to draw).
\end{proof}
\end{longVersion}

Finally,  \emph{internal strict monoidal categories} organise themselves in a (2-)category.

\begin{definition}\label{def:internal-monoidal}
We denote by $\Mon(\Cat(\V),\mathbb 1,\otimes))$, or briefly, $\Mon(\Cat(\V))$, the category of monoids in $(\Cat(\V),\mathbb 1,\otimes))$. \begin{gray}, which we also call the category of internal strict monoidal categories.\end{gray}
\end{definition}

\begin{theorem}\label{thm:moncatv}
$\Mon(\Cat(\V),\mathbb 1,\otimes))$ is a 2-category.
\end{theorem}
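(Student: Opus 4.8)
The plan is to lift the entire development of this section one dimension up. The category $\Cat(\V)$ of internal categories is in fact a 2-category: its 1-cells are internal functors and its 2-cells are internal natural transformations, i.e.\ arrows $\theta:\mathbb C_0\to\mathbb D_1$ satisfying $\dom\circ\theta=F_0$, $\cod\circ\theta=G_0$ and the usual internal naturality equation expressed through $\comp$. Once we know that the tensor $\otimes$ of Definition~\ref{def:internal-tensor} is not merely a functor but a 2-functor, the monoidal category $(\Cat(\V),\mathbb 1,\otimes)$ becomes a monoidal 2-category, and the theorem follows from the general principle that monoids in a monoidal 2-category themselves form a 2-category, whose 2-cells are the monoidal transformations between monoid morphisms.

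First I would record the 2-cell structure of $\Cat(\V)$ and check it is a 2-category (vertical and horizontal composition of internal natural transformations, together with interchange), which is standard internal category theory. Next I would extend $\otimes$ to 2-cells: given internal natural transformations $\alpha:F\Rightarrow F'$ between $F,F':\mathbb C\to\mathbb C'$ and $\beta:G\Rightarrow G'$ between $G,G':\mathbb D\to\mathbb D'$, I would define $\alpha\otimes\beta:(\mathbb C\otimes\mathbb D)_0\to(\mathbb C'\otimes\mathbb D')_1$ as the unique arrow into the limit $(\mathbb C'\otimes\mathbb D')_1$ of \eqref{equ:j1} determined by the cone whose $j_1$-leg is $(\alpha\times\beta)\circ j$ (using the canonical $j:(\mathbb C\otimes\mathbb D)_0\to\mathbb C_0\times\mathbb D_0$) and whose $\dom,\cod$ legs are the evident composites, in exact analogy with the way $(f\otimes g)_1$ is produced in \eqref{equ:j1-arrows}. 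Checking that this assignment respects identities and both compositions of 2-cells then upgrades Proposition~\ref{prop:otimes-functor} to the statement that $(\Cat(\V),\mathbb 1,\otimes)$ is a monoidal 2-category.

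Finally I would read off the 2-category $\Mon(\Cat(\V))$. Its 0-cells are the monoids, i.e.\ the internal strict monoidal categories $(\mathbb M,m,e)$ with $m:\mathbb M\otimes\mathbb M\to\mathbb M$ and $e:\mathbb 1\to\mathbb M$; its 1-cells are the monoid morphisms (strict monoidal internal functors); and its 2-cells are those internal natural transformations $\theta$ between the underlying 1-cells that are \emph{monoidal}, meaning they are compatible with the multiplications and units: concretely, that the target multiplication composed with $\theta\otimes\theta$ equals $\theta$ composed with the source multiplication on objects, together with the corresponding triviality condition on $e$. Vertical and horizontal composition, as well as the identity 2-cells, are inherited from $\Cat(\V)$, so the only genuinely new verifications are that the monoidal compatibility is closed under both compositions and that identity 2-cells are monoidal; the 2-category axioms themselves then hold automatically because they already hold in $\Cat(\V)$.

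The hard part will be the 2-functoriality of $\otimes$, namely checking that $\alpha\otimes\beta$ as defined is genuinely an internal natural transformation and that $\otimes$ preserves identities and both the vertical and horizontal composition of 2-cells. As with Proposition~\ref{prop:comp-i} and Proposition~\ref{prop:comp-assoc}, each of these reduces to showing that two arrows into the limit $(\mathbb C'\otimes\mathbb D')_1$ coincide, which is in turn handled by the cone-comparison technique of Lemma~\ref{lem:hh'}; the bookkeeping of the commuting cones is where the real work lies. Establishing closure of monoidal 2-cells under composition is by comparison routine, since it only combines the definition of $\alpha\otimes\beta$ with the monoid equations.
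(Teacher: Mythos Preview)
Your approach is sound and, in fact, far more detailed than what the paper itself provides: the paper's own proof of this theorem is a placeholder (the long-version environment contains only ``\ldots''), so there is nothing substantive to compare against. What you outline---upgrading $\Cat(\V)$ to a 2-category via internal natural transformations (as the appendix already records), extending the tensor of Definition~\ref{def:internal-tensor} to a 2-functor by the same limit-cone technique used for 1-cells, and then reading off $\Mon(\Cat(\V))$ as the 2-category of monoids, monoid morphisms, and monoidal 2-cells---is exactly the expected argument and would constitute a genuine proof where the paper leaves a gap.

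One small point worth tightening: your compatibility condition for a monoidal 2-cell $\theta:F\Rightarrow G$ between monoid morphisms $F,G:(\mathbb M,m,e)\to(\mathbb M',m',e')$ should be stated as an equality of 2-cells in $\Cat(\V)$, namely $m'\ast(\theta\otimes\theta)=\theta\ast m$ (whiskering on each side), together with $\theta\ast e = \id_{e'}$. Your informal phrasing ``on objects'' undersells this; it is an equation between internal natural transformations, not just between their object components. With that adjustment, and with the diagram-chasing for 2-functoriality of $\otimes$ handled via Lemma~\ref{lem:hh'} as you anticipate, the argument goes through.
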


\begin{longVersion}
\begin{proof}
...
\end{proof}
\end{longVersion}

\begin{gray}
THE BELOW NEEDS TO BE REWORKED ...

\begin{proposition}
Examples~\ref{exle:sets} and \ref{exle:lists} are equivalent in the 2-category $\Mon(\Cat(\V),\mathbb 1,\otimes)$.
\end{proposition}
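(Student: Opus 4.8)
The plan is to unwind what an equivalence in the 2-category $\Mon(\Cat(\V),\mathbb 1,\otimes)$ amounts to, and then to produce the data explicitly for the two examples. By Theorem~\ref{thm:moncatv} the ambient structure is a 2-category whose objects are internal strict monoidal categories, whose 1-cells are monoid morphisms in $(\Cat(\V),\mathbb 1,\otimes)$ (internal strict monoidal functors), and whose 2-cells are monoidal internal natural transformations. An equivalence between two such objects is therefore a pair of 1-cells $F$ and $G$ in opposite directions together with invertible 2-cells $\eta:\id\Rightarrow G\circ F$ and $\epsilon:F\circ G\Rightarrow\id$. Writing $\mathbb L$ for the word/list-based example of Example~\ref{exle:lists} and $\mathbb S$ for the set-based example of Example~\ref{exle:sets}, the goal is to exhibit exactly this data relating $\mathbb L$ and $\mathbb S$.

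First I would construct the forgetful 1-cell $U:\mathbb L\to\mathbb S$. On the object-of-objects it sends a repetition-free word over $\names$ to the finite subset of $\names$ of its entries, and on the arrow-object it carries the same renaming data; both components are manifestly equivariant, so $U$ is a morphism of $\V=\Nom$, and it commutes with $\dom$, $\cod$, $i$ and $\comp$, hence is an internal functor. I would then promote $U$ to a 1-cell of $\Mon(\Cat(\V))$ by checking it is a monoid morphism: concatenation of words of disjoint support is sent to the union of the corresponding disjoint subsets, and the empty word to the empty set, so $U$ strictly preserves $\otimes$ and $\mathbb 1$. Because $\otimes$ on arrows is defined through the limit $(\mathbb C\otimes\mathbb D)_1$ of \eqref{equ:j1}, this last point is verified via the universal property, using that $U$ commutes with the cone maps $j_1,\dom,\cod$ and hence, by the construction of $i$ and $\comp$ in Definition~\ref{def:internal-tensor} and Proposition~\ref{prop:comp-i}, with the induced structure.

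The crux is the reverse direction together with the 2-cells, and this is where I expect the real obstacle to lie. A strict pseudo-inverse $G:\mathbb S\to\mathbb L$ would have to select, equivariantly and for every finite $A\subseteq\names$, a word enumerating $A$; but names support only equality and carry no supported linear order, so no equivariant such section exists. I would therefore not attempt a choice-based $G$. Instead the plan is to read the inverse data off the fibrewise behaviour of $U$: any two words with the same underlying set are related by a unique isomorphism in $\mathbb L$ (a reordering, or "symmetry", arrow), and these isomorphisms form a single equivariant family living in the arrow-object. Using this family I would define the unit and counit canonically, reading them from the symmetry arrows rather than from chosen representatives, verify that the resulting $\eta,\epsilon$ are genuine equivariant arrows of $\V$ that assemble into monoidal internal natural transformations, and check the triangle identities.

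The substantive difficulty is precisely that, in the internal and enriched setting, one cannot pass from "$U$ is internally fully faithful and essentially surjective" to "equivalence" by invoking the axiom of choice: the inverse $G$ and the invertibility of $\eta,\epsilon$ must be produced by genuinely equivariant constructions. Making the equivariant symmetry family carry out this role, and confirming that it is compatible with the tensors so that $\eta$ and $\epsilon$ are monoidal, is where the bulk of the argument will sit; the remaining monoidal coherence checks should then reduce, via the universal property of the limits $(\mathbb C\otimes\mathbb D)_1$ and $(\mathbb C\otimes\mathbb D)_2$, to equalities of cone components of the kind already handled in Propositions~\ref{prop:comp-i} and~\ref{prop:comp-assoc}.
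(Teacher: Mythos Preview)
The paper itself gives no proof here: the proposition sits inside a draft-only block and its proof environment contains only an ellipsis, and the referenced Examples~\ref{exle:sets} and~\ref{exle:lists} are not present in the text. So there is nothing to compare your argument against; I can only assess the proposal on its own terms.

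Your diagnosis of the obstruction is exactly right, but your proposed workaround does not close the gap. An equivalence in the 2-category $\Mon(\Cat(\V),\mathbb 1,\otimes)$ is, by definition, a pair of 1-cells $U:\mathbb L\to\mathbb S$ and $G:\mathbb S\to\mathbb L$ together with invertible 2-cells $\eta:\id\Rightarrow GU$ and $\epsilon:UG\Rightarrow\id$. You construct $U$ cleanly, then observe (correctly) that no equivariant $G$ exists because there is no supported linear order on finite sets of names: for $A=\{a,b\}$ the transposition $(a\ b)$ fixes $A$ but swaps the two candidate enumerations, so any section fails equivariance. Having established this, you then propose to ``read the inverse data off the fibrewise behaviour of $U$'' via the symmetry isomorphisms. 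But $\eta$ and $\epsilon$ are typed as $\id\Rightarrow GU$ and $UG\Rightarrow\id$; you cannot even state them, let alone build them from symmetry arrows, without first producing the 1-cell $G$. The symmetry family you describe witnesses that $U$ is internally fully faithful and essentially surjective, which in ordinary category theory would suffice; but you yourself note that in $\Nom$ one cannot invoke choice to promote this to an equivalence, and nothing in your plan supplies the missing equivariant $G$.

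So as written the proposal does not establish the claimed equivalence. Either the intended statement uses a weaker notion (e.g.\ a biequivalence, or an equivalence after passing to a localisation, or one internal functor that is merely a ``weak equivalence'' in some sense), or the draft proposition is simply not correct as an equivalence of objects in $\Mon(\Cat(\V),\mathbb 1,\otimes)$. If you want to rescue the argument, you would need to make precise a 2-categorical notion in which an internally fully faithful, essentially surjective 1-cell counts as an equivalence without requiring an equivariant pseudo-inverse, and then argue that $U$ satisfies it; your current outline does not do this.
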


\begin{proof}
...
\end{proof}

\begin{remark}
We summarise the terminology. 
\begin{itemize}
\item $(\mathcal V,1,\otimes)$ is a monoidal category where the unit is the terminal object. 
\item $\Cat(\V)$ is the category of small categories internal in $\V$.
\item $(\Cat(\V),\mathbb 1,\otimes)$ is the monoidal category of internal categories.
\item $\Mon(\Cat(\V),\mathbb 1,\otimes))$, or briefly, $\Mon(\Cat(\V))$ is the category of monoids in $(\Cat(\V),\mathbb 1,\otimes))$, or, as we say, the category of internal monoidal categories, where internal refers to the fact that the monoidal operation $\mathbb C\otimes \mathbb C\to\mathbb C$ is typed not using the Cartesian product but the internal tensor $\otimes$.
\end{itemize}
\end{remark}

\begin{example} \ 
\begin{itemize}
\item $(\mathcal V,1,\otimes)$ is $(\Nom,1,\ast)$
\item $\Cat(\V)$ is the category of small categories internal in $\Nom$. In particular, objects and arrows now have support.
\item $(\Cat(\Nom),\mathbb 1,\ast)$ is the monoidal category of categories internal in $(\Nom,1,\ast)$. In particular, the separated product $\mathbb C\ast\mathbb D$ of internal categories has as objects pairs $(c,d)$ with disjoint support and as arrows pairs $(f,g)$ with disjoint supports of domains and disjoint supports of codomains.
\item $\Mon(\Cat(\V),\mathbb 1,\otimes))$, or briefly, $\Mon(\Cat(\V))$ is the category of monoids in $(\Cat(\V),\mathbb 1,\otimes))$, or, as we say, the category of internal strict monoidal categories, where internal refers to the fact that the monoidal operation $\mathbb C\otimes \mathbb C\to\mathbb C$ is typed not using the Cartesian product but the internal tensor $\otimes$. 
\end{itemize}
\end{example}

\medskip still some things to show:

\medskip

... $\Mon(\Cat(\V),\mathbb 1,\otimes))$ is a 2-category

... the forgetful functor $(Cat(\mathcal V),1,\otimes)\to(\mathcal V,1,\otimes)$ is strong monoidal, 

... the left adjoint $(\mathcal V,1,\otimes)\to(Cat(\mathcal V),1,\otimes)$ ? 

\begin{itemize}

\item A monoidal category internal in $(\Nom,1,\sepp)$ is a monoid internal in $(Cat(\mathcal (\Nom,1,\sepp),1,\sepp))$.

\item A nominal PRO is a monoidal category internal in $(\Nom,1,\sepp)$ where the objects are the finite subsets of names and disjoint union is the monoidal operation. A morphism of nominal PROs is a morphism in $(Cat(\mathcal (\Nom,1,\sepp),1,\sepp))$. Thus there is a category nPRO. 

\item Given an object $A$ in an internal monoidal category $(\mathbb C,1,\otimes)$ ... 

\item An algebra for a nominal PRO is a morphism of nominal PROs as follows.    ...

\item I can use nominal string diagrams to do ordinary string diagrams ... what would be a theorem formalising this?

\item Is there a systematic way of turning results on string diagrams into results on nominal string diagrams and/or vice versa?

\item 

\end{itemize}
\end{gray}

\section{Examples}\label{sec:examples}

Before we give a formal definition of nominal $\PROP$s and nominal monoidal theories (NMTs) in the next section, we present as examples those NMTs that correspond to the SMTs of Fig~\ref{fig:smt-theories}. The nominal monoidal theories of Fig~\ref{fig:nmt-theories} should be immediately recognizable, indeed the significant differences are that wires now carry labels and there is a new generator \includegraphics[page=43, width=20mm]{twists_new} which allows us to change the label of a wire.
\begin{figure}
\includegraphics[page=37, width=\linewidth]{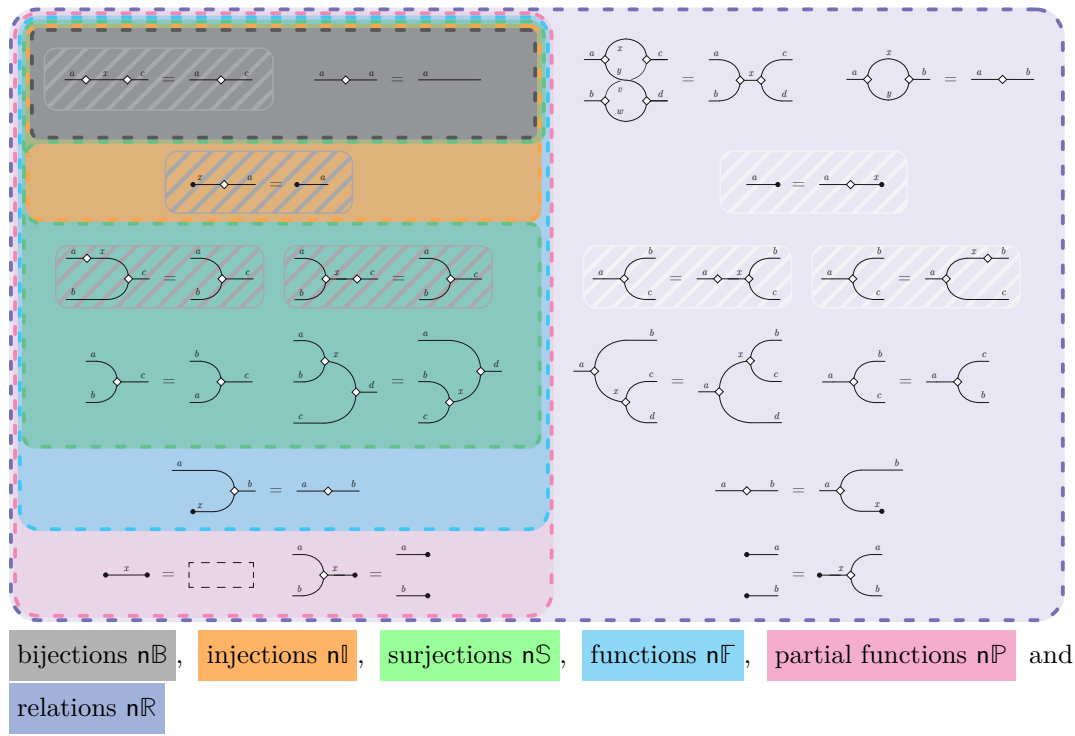}
\colorbox{black!30}{\strut bijections $\mathsf n\mathbb B$}, \colorbox{orange!60}{\strut injections $\mathsf n\mathbb I$}, \colorbox{green!40}{\strut surjections $\mathsf n\mathbb S$}, \colorbox{cyan!40}{\strut functions $\mathsf n\mathbb F$}, \colorbox{magenta!40}{\strut partial functions $\mathsf n\mathbb P$} and \colorbox{cyan!60!magenta!60}{\strut relations $\mathsf n\mathbb R$}
\caption{Nominal monoidal theories}\label{fig:nmt-theories}
\end{figure}

\begin{gray}
\medskip
The first example lists presentations of nominal monoidal theories for the nominal monoidal categories of finite sets and functions, injections, surjections, partial functions and relations, respectively.

\begin{example}
The category of finite sets and
\begin{itemize} 
\item bijections is given by the empty signature and equations.

\item injections is given by
  $\Sigma_i=\{\eta_a:\varnothing \to \{a\} \mid a \in \mathcal N\}$ and $E_i=\emptyset$.  The equations 
  \begin{center}
  \includegraphics[page=5, width=0.3\textwidth]{twists_new}
  \end{center}
  \vspace{-3ex}
follow from those of Fig~\ref{fig:nominal-set}.
\item surjections is given by
	$\Sigma_s=\{\mu_{abc}:\{a,b\} \to \{c\} \mid a,b,c \in \mathcal N\}$ and equations $E_s$ are 
	$(\mu_{abx} \uplus id_c) \circ \mu_{cdx} = (\mu_{bcx} \uplus id_a) \circ \mu_{adx}$ and
	\begin{sam}$\mu_{abx} = \mu_{bax}$ there is some seeming ambiguity with this rule as both $\mu_{abx}, \mu_{bax}$ have the same type $\{a,b\} \to \{x\}$, however, this does not mean we can automatically assume they are the same...this was the discussion about having the cup be a `-`, in which case, it would not be a commutative generator\end{sam}\begin{ak}should we say sth here? that two different operations have the same type is a very common situation ...  just delete the blue comment?\end{ak}
\begin{center}      
	\includegraphics[page=12, width=0.3\textwidth]{twists_new}
\end{center}
\vspace{-3ex}
%

\item functions has 
  $\Sigma_f=\Sigma_i \cup \Sigma_s$ and equations $E_f$ are $E_i\cup E_s$ plus
  $(id_a \uplus \eta_x) \circ \mu_{abx} = \delta_{ab}$
  \begin{center}
    \includegraphics[page=4, width=0.3\textwidth]{twists_new}
  \end{center}
%
\item partial functions has $\Sigma_{pf}=\Sigma_f \cup \{\hat\eta_a:\{a\} \to \varnothing \mid a \in \mathcal N\}$ and equations $E_{pf}$ are $E_f$ plus
 $\eta_x \circ \hat\eta_x = \varepsilon$ and $\mu_{abx} \circ \hat\eta_x = \hat\eta_a \uplus \hat\eta_b$ 
%
%
\begin{center}
      \includegraphics[page=6, width=0.3\textwidth]{twists_new} \qquad
      \includegraphics[page=16, width=0.3\textwidth]{twists_new}
\end{center}
\item relations has 
 $\Sigma_{r}=\Sigma_{pf} \cup \{\hat\mu_{abc}:\{a\} \to \{b,c\} \mid a,b,c \in \mathcal N\}$, and equations $E_r$ are $E_{pf}$  plus 

\begin{center}
      \includegraphics[page=13, width=0.3\textwidth]{twists_new}
\qquad\qquad      
\includegraphics[page=19, width=0.3\textwidth]{twists_new}

      \includegraphics[page=9, width=0.3\textwidth]{twists_new}\qquad
      \includegraphics[page=18, width=0.3\textwidth]{twists_new}

      \includegraphics[page=11, width=0.3\textwidth]{twists_new}
\end{center}
\end{itemize} 
\end{example}
\end{gray}

\begin{theorem}
The calculi of Fig \ref{fig:nmt-theories} are complete.
\end{theorem}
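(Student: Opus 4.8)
The plan is to read ``complete'' as the assertion that, for each of the six theories, the nominal $\PROP$ freely generated by the signature and equations of Fig~\ref{fig:nmt-theories} is isomorphic (as an identity-on-objects symmetric monoidal category) to the intended semantic category $\mathsf n\mathbb B$, $\mathsf n\mathbb I$, $\mathsf n\mathbb S$, $\mathsf n\mathbb F$, $\mathsf n\mathbb P$, $\mathsf n\mathbb R$, and to split this into soundness and completeness. Soundness is the routine half: one interprets each generator as the evident morphism --- $\delta_{ab}$ as the unique function $\{a\}\to\{b\}$, $\eta_a$ as $\varnothing\to\{a\}$, $\mu_{abc}$ as the merge $\{a,b\}\to\{c\}$, $\hat\eta_a$ as the partial map undefined on $a$, and $\hat\mu_{abc}$ as the relation splitting $a$ into $\{b,c\}$ --- and checks by direct computation that every defining equation holds in the corresponding nominal category. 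This yields a functor from the freely generated nominal $\PROP$ onto the semantic one; since the objects coincide (finite subsets of $\names$) and the functor is identity-on-objects, it remains to prove it full and faithful.

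The substance is completeness, i.e.\ faithfulness of that functor, and I would establish it by normal forms (which simultaneously give fullness, each normal form being a concrete diagram denoting the intended morphism). For $\mathsf n\mathbb F$ the normal form of a function $f\colon A\to B$ is its epi--mono factorisation: partition $A$ into the fibres of $f$, use $\mu$ to merge each fibre and $\delta$ to rename the result to its image, and use $\eta$ to freshly introduce the names of $B$ lying outside the image of $f$. The equations $E_f$ of Fig~\ref{fig:nmt-theories} are precisely what is needed to push every $\delta$ outward, to reassociate and commute the merges, and to absorb redundant introductions, so that any diagram rewrites to this shape; since distinct functions visibly produce distinct normal forms, provable equality then coincides with semantic equality. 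The remaining five theories follow the same pattern along the inclusion lattice of signatures: bijections keep only the renamings $\delta$, injections add $\eta$, surjections add $\mu$, partial functions add the co-unit $\hat\eta$, and relations add the co-merge $\hat\mu$, each step enriching the normal form (for $\mathsf n\mathbb R$, for instance, the normal form records for every input the finite set of outputs it relates to).

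A more conceptual alternative is to transfer completeness from the ordinary setting: by the equivalence of ordinary and nominal $\PROP$s established in Section~\ref{sec:equivalence}, each theory of Fig~\ref{fig:nmt-theories} corresponds to the matching $\SMT$ of Fig~\ref{fig:smt-theories}, whose completeness for $\mathbb B,\mathbb I,\mathbb S,\mathbb F,\mathbb P,\mathbb R$ is classical, with the symmetry generator $\tw$ on the ordinary side absorbed into the renamings $\delta$ on the nominal side. I expect the main obstacle to lie squarely in the completeness direction. In the normal-form route it is proving that the equations suffice to reach a \emph{unique} normal form while correctly discharging the freshness (disjoint-support) side-conditions that arise whenever generators are commuted past one another --- bookkeeping that is heaviest for $\mathsf n\mathbb R$. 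In the transfer route it is instead verifying that the underlying equivalence of categories, which is \emph{not} equivariant, nonetheless matches the two presentations equation-for-equation.
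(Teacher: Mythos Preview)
Your primary approach---interpreting the generators, checking soundness, and then establishing completeness by rewriting every term to a unique normal form that can be read off the semantic morphism---is exactly what the paper does, and your epi--mono description for $\mathsf n\mathbb F$ is the right shape. The paper's proof is in fact only a sketch, so you have already supplied more detail than it does.

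Two points of comparison are worth noting. First, the paper's one substantive remark is that these normal-form arguments are \emph{easier} in the nominal setting than the corresponding SMT proofs (as in Lafont) precisely because there are no wire crossings: for bijections the normal form is immediately a $\uplus$-parallel composition of renamings $\delta_{ab}$, with nothing to reorder, and the other theories are ``only slightly more complicated''. You anticipate the bookkeeping being heaviest for $\mathsf n\mathbb R$, but the paper's point is that even there the absence of symmetries keeps the rewriting tame---so you may be overestimating the difficulty. Second, your transfer alternative via the $\PROP$/$\nPROP$ equivalence of Section~\ref{sec:equivalence} is your own addition; the paper does not take that route, and indeed part of its message is that one does not need to, since the direct argument is already simpler on the nominal side.
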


The proof of the theorem shows that the categories presented by Fig  \ref{fig:nmt-theories} are isomorphic to the categories of finite sets with the respective maps. These proofs seem easier  for NMTs than the corresponding proofs for SMTs (see eg Lafont~\cite{lafont})  because NMTs have no wire crossings. For example, in the case of bijections, it is immediate that every nominal diagram rewrites to a normal form, which is a parallel composition of diagrams of the form \includegraphics[page=43, width=20mm]{twists_new}. Completeness then follows, as usual, from the possibility to rewrite every diagram into normal form. The other cases are only slightly more complicated.

\begin{gray}
\begin{example}
The nominal monoidal theory that represents cospans of $\Fin$. ... Some remarks on the completeness proof ... 
\end{example}
\end{gray}

\section{Nominal monoidal theories and nominal PROPs}\label{sec:nmts}

\begin{gray}
change the structure? One section for each

nominal props

equivalence of nprops and props

nominal monoidal theories +  symmetric monoidal theories induce nominal monoidals

diagrammatic alpha equivalence

examples

transfer of completenes
\end{gray}

In this section, we introduce nominal $\PROP$s as  internal monoidal categories in nominal sets. We first  spell out the details of what that means in elementary terms and then discuss the notion of diagrammatic alpha-equivalence. 

\begin{gray}
REWORK THE FOLLOWING
\medskip
The paradigmatic example of "nominal PROP" is Example~\ref{exle:sets}.

\medskip
We should also come back to this example ... (introductory remark to the next example) ... but will this be easier after the transfer theorem?

\begin{example}
Cospan(F) ...
\end{example}
\end{gray}


\subsection{Nominal monoidal theories} A \emph{nominal monoidal theory} $(\Sigma,E)$ is given by a nominal set $\Sigma$ of generators and a nominal set $E$ of equations.  A generator $\gen:A\to B$ has finite sets $A,B$  of names as types and $\Sigma$ is closed under permutations  $\pi\cdot\gen:\pi\cdot A\to\pi\cdot B$. 
The set of terms is given by closing under the operations of Fig~\ref{fig:terms}, which should be compared with Fig~\ref{fig:smt-terms}.
%
\begin{figure}[h]
\begin{gather*}
\frac{}{ \gen: A \to B \in \Sigma}
\qquad\qquad
\frac{}{ id_a:\{a\}\to\{a\}}
\qquad\qquad
\frac{}{\delta_{ab}:\{a\}\to\{b\}}
\\[2ex]
\frac{ t:A\to B\quad\quad t':A'\to B'}{ t\uplus t' :  A\uplus A'\to B\uplus B'}\qquad\qquad
\frac{ t:A\to B\quad\quad s:B\to C}{ t\hcomp s :  A \to C}
\\[2ex]
\frac{ t:A\to B}{(a\ b)\,t :(a\ b)\cdot A\to (a\ b)\cdot B}
\end{gather*}
\caption{NMT Terms}\label{fig:terms}
\end{figure}%

\medskip\noindent
Every NMT freely generates a monoidal category internal in nominal sets by quotienting the generated terms by the equations $E$ as well as by equations describing that terms form a monoidal category and  a nominal set.  The equations of an internal monoidal category are given in Fig~\ref{fig:monoidal-category}. The main difference with the equations in Fig~\ref{fig:symmetric-monoidal-category} is that the interchange law for $\uplus$ is required to hold only if both sides are defined and that the two laws involving symmetries are replaced by the commutativity of $\uplus$.

%
\begin{figure}[h]
\[
\begin{array}{cc}
\id_A \hcomp t = t = t \hcomp \id_B \qquad{}& \qquad
id_\varnothing \uplus t = t = t \uplus id_\varnothing \\[1ex]
(t\hcomp s)\hcomp r = t \hcomp (s \hcomp r) \qquad{}& \qquad
(t\uplus s)\uplus r = t \uplus (s \uplus r) \\[1ex]
t\uplus s = s \uplus t \qquad{}& \qquad
(s \hcomp t) \uplus (u \hcomp v) = (s \uplus u) \hcomp (t \uplus v)
\end{array}
\]
\caption{NMT Equations of internal monoidal categories}\label{fig:monoidal-category}
\end{figure}
%
%
For terms to form a nominal set, we need the usual equations between permutations (not listed here) to hold, as well as the equations of Fig~\ref{fig:nominal-set} that specify how permutations act on terms.
\begin{figure}[h]
\begin{gather*}
(a\ b)id_x = id_{(a\ b)\cdot x}
\qquad\qquad
(a\ b)\delta_{xy} = \delta_{(a\ b)\cdot x \ (a\ b)\cdot y}
\qquad\qquad
(a\ b)\gen = (a\ b)\cdot \gen
\\[1ex]
(a\ b)(x\uplus y) = (a\ b)x\uplus (a\ b)y
\qquad\qquad
(a\ b)(x\hcomp y) = (a\ b)x\hcomp (a\ b)y
\qquad\qquad
\delta_{ab}\hcomp\delta_{bc} = \delta_{ac}
\end{gather*}
\begin{tabular}{ c c c }
\includegraphics[page=33, width=0.3\textwidth]{twists_new} &
\includegraphics[page=34, width=0.3\textwidth]{twists_new} &
\includegraphics[page=35, width=0.3\textwidth]{twists_new}\\
$\frac{\gen:A\to B\uplus\{b\}\qquad b,x\notin A}{\gen\,\then\,(\id_{B} \uplus \delta_{bx})  = (b\ x) \gen}$ &
$\frac{\gen:\{a\}\uplus A\to B\qquad a,x\notin B}{(\delta_{xa} \uplus \id_A)\,\then\, \gen  = (x\ a) \gen}$ &
$\frac{\gen:A\uplus\{c\}\to B\uplus\{c\}}{\gen\,\then\,(\id_{B} \uplus \delta_{cx})  = (\id_A\uplus\delta_{cx})\hcomp(c\ x) \gen}$
\end{tabular}
\caption{NMT Equations of nominal sets
}\label{fig:nominal-set}
\end{figure}
These are routine, with the exception of the last three, specifying the interaction of renamings $\delta$ with renamings and generators $\gen\in\Sigma$, which we also depict  in diagrammatic form. Instances of these rules can be seen in Fig~\ref{fig:nmt-theories}, where they are distinguished by a \stripbox{\strut striped} background.

%
%

\subsection{Diagrammatic alpha-equivalence}

The equations of Fig~\ref{fig:nominal-set} introduce a notion of \emph{diagrammatic alpha-equivalence}, which allows us to rename `internal' names and to contract renamings.
 
\begin{definition}
Two terms of a nominal monoidal theory are alpha-equivalent if their equality follows from the equations in Fig~\ref{fig:nominal-set}.
\end{definition}

\medskip\noindent\textbf{Notation:} Every permutation $\pi$ of names gives rise to  bijective functions $\pi_A:A\to\pi[A]=\{\pi(a)\mid a\in A\}=\pi\cdot A$. Any such $\pi_A$, as well as the inverse $\pi_A^{-1}$,  are parallel compositions of $\delta_{ab}$ for suitable $a,b\in\names$. In fact, we have $\pi_A=\biguplus_{a\in A} \delta_{a\pi(a)}$.
We may therefore use the $\pi_A$ as abbreviations in  terms.

\begin{proposition} Let $t:A\to B$ be a term of a nominal monoidal theory. The equations in  Fig~\ref{fig:nominal-set} entail that $\pi\cdot t = (\pi_A)^{-1}\hcomp t\hcomp \pi_B$.
\[\xymatrix@C=50pt{
A\ar[r]^t \ar[d]_{\pi_A}& B\ar[d]^{\pi_B}\\
\pi[A]\ar[r]_{\pi\cdot t} & \pi[B]
}\]

\end{proposition}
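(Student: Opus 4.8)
The plan is to prove $\pi\cdot t = (\pi_A)^{-1}\hcomp t\hcomp\pi_B$ by reducing to a single transposition and then inducting on the structure of $t$. Two preliminary reductions make this manageable. First, using the description $\pi_A=\biguplus_{a\in A}\delta_{a\pi(a)}$ together with $\delta_{ab}\hcomp\delta_{bc}=\delta_{ac}$ and the interchange law, one checks that $\pi_A$ is compatible with composition of permutations, i.e. $(\pi\sigma)_A = \sigma_A\hcomp\pi_{\sigma\cdot A}$, with $\id_A=id_A$ for the trivial permutation (reading $\delta_{aa}$ as $id_a$, consistent with the formula for $\pi_A$). Combined with the group-action law $\pi\cdot(\sigma\cdot t)=(\pi\sigma)\cdot t$, this shows that if the equation holds for two permutations it holds for their product, so since every finite permutation is a composite of transpositions it suffices to treat a single transposition $(a\ b)$. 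Second, the equations $(a\ b)(t\uplus t')=(a\ b)t\uplus(a\ b)t'$ and $(a\ b)(t\hcomp t')=(a\ b)t\hcomp(a\ b)t'$ of Fig~\ref{fig:nominal-set}, together with $(a\ b)\gen=(a\ b)\cdot\gen$, $(a\ b)id_x=id_{(a\ b)\cdot x}$ and $(a\ b)\delta_{xy}=\delta_{(a\ b)\cdot x\,(a\ b)\cdot y}$, rewrite any term into a provably equal permutation-free term; as both sides of the desired equation respect provable equality, we may assume $t$ is built from generators, $id$ and $\delta$ using only $\uplus$ and $\hcomp$.

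With $\tau=(a\ b)$ fixed and $t$ permutation-free, I would induct on $t$. The base cases $t=id_x$ and $t=\delta_{xy}$ are direct: unfolding $(\tau_A)^{-1}\hcomp t\hcomp\tau_B$ into renamings and collapsing them with $\delta_{ab}\hcomp\delta_{bc}=\delta_{ac}$ and $\delta_{aa}=id_a$ reproduces $\tau\cdot id_x=id_{\tau\cdot x}$ and $\tau\cdot\delta_{xy}=\delta_{\tau\cdot x\,\tau\cdot y}$. For $t=t_1\uplus t_2$ with $t_i:A_i\to B_i$, the inductive hypothesis and the interchange law of Fig~\ref{fig:monoidal-category} give $\tau\cdot t=((\tau_{A_1})^{-1}\uplus(\tau_{A_2})^{-1})\hcomp(t_1\uplus t_2)\hcomp(\tau_{B_1}\uplus\tau_{B_2})$, and $\tau_{A_1}\uplus\tau_{A_2}=\tau_{A_1\uplus A_2}=\tau_A$ (and likewise for the inverses and for $B$) by the defining formula for $\pi_A$. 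For $t=t_1\hcomp t_2$ with $t_1:A\to C$ and $t_2:C\to B$, the inductive hypothesis produces a factor $\tau_C\hcomp(\tau_C)^{-1}$ in the middle, which collapses to $id_C$ by the same renaming identities, leaving $(\tau_A)^{-1}\hcomp t_1\hcomp t_2\hcomp\tau_B$.

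The one case carrying real content is $t=\gen$ a generator, where the claim reads $(a\ b)\cdot\gen=(\tau_A)^{-1}\hcomp\gen\hcomp\tau_B$, and this is exactly what the three rules of Fig~\ref{fig:nominal-set} drawn on a striped background are designed to supply, but only for renaming a single name to a fresh one. My plan is therefore to bootstrap from those rules by a case analysis on how $a$ and $b$ meet $A\cup B=\supp(\gen)$. If neither $a$ nor $b$ lies in $A\cup B$, both sides equal $\gen$ (the left because $\tau$ fixes $\gen$, the right because $\tau_A=id_A$ and $\tau_B=id_B$). If exactly one of $a,b$ is fresh for $\gen$, the transposition is literally a rename-to-a-fresh-name, handled by a single striped rule chosen according to whether the renamed name occurs only in the codomain (first rule), only in the domain (second rule), or in both (third rule). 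The remaining case, where both $a,b\in A\cup B$, is reduced to the previous ones by picking a fresh $x\notin A\cup B\cup\{a,b\}$, decomposing $(a\ b)$ into transpositions each involving $x$, applying the group-action law, and discharging the factors one by one.

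The main obstacle is precisely this last step: keeping track of the changing domains and codomains along the decomposition and verifying that the resulting composite of single renamings reassembles, via $\delta_{ab}\hcomp\delta_{bc}=\delta_{ac}$, interchange and $\delta_{aa}=id_a$, into the required $(\tau_A)^{-1}\hcomp\gen\hcomp\tau_B$. In particular the third striped rule, governing a name shared by domain and codomain, is the delicate ingredient that makes the overlap $A\cap B$ work out. Once the generator case is established the induction closes, and by the two preliminary reductions the equation holds for every permutation and every term.
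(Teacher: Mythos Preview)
Your overall architecture is sound and in fact more explicit than the paper's own argument, which merely says ``wlog $\pi=(x\ y)$, there are 16 cases'' and leaves it at that. The reduction to a single transposition, the elimination of the permutation constructor, and the structural induction on $t$ with the interchange/associativity steps for $\uplus$ and $\hcomp$ are all correct and cleanly set up.

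There is, however, a genuine gap in the generator sub-case where both $a,b\in A\cup B$. Your plan is to write $(a\ b)$ as a product of transpositions each involving the fresh $x$, say $(a\ b)=(a\ x)(b\ x)(a\ x)$, and then ``discharge the factors one by one'' using the already-handled one-name-fresh case. But this does not work: after the first application, $\gen_1=(a\ x)\cdot\gen$ has support $(a\ x)\cdot(A\cup B)$, which contains both $b$ and $x$, so the middle factor $(b\ x)$ has \emph{neither} name fresh for $\gen_1$, and none of the striped rules applies. No choice of decomposition into transpositions ``each involving $x$'' avoids this; at some step both names of the factor lie in the current support. Your first preliminary reduction (closure under composition of permutations) does not rescue this either, since it requires the claim for each factor \emph{for all terms}, which is exactly what the induction has not yet established.

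The fix is close to what you already gesture at but uses the Fig.~\ref{fig:nominal-set} compatibility equations rather than the group-action law. Pick $z$ fresh; the one-fresh case gives $(a\ z)\cdot\gen=((a\ z)_A)^{-1}\hcomp\gen\hcomp(a\ z)_B$, hence $\gen=(a\ z)_A\hcomp\gen'\hcomp((a\ z)_B)^{-1}$ with $\gen'=(a\ z)\cdot\gen$ a generator for which $a$ is fresh. Now apply $(a\ b)$ to this $\hcomp/\uplus$-expression using $(a\ b)(s\hcomp t)=(a\ b)s\hcomp(a\ b)t$ and the $\delta$/$\id$ rules; on $\gen'$ the one-fresh case applies (since $a\notin\supp(\gen')$), and the surrounding renamings are handled by your base cases. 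The remaining work is exactly the reassembly of $\delta$'s you anticipated. This is effectively how the paper's ``16 cases'' must be discharged as well, so once you replace the group-action decomposition by this rewrite-and-push-through step, your proof goes through.
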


\begin{longVersion}
\begin{proof}
\begin{ak}Wlog (CHECK) we can assume that $\pi=(x\ y)$.  There are 16 cases: $\pi_A$ can be $\id$, $\delta_{xy}$, $\delta_{yx}$ or $\delta_{xy}+\delta_{yx}$. ...  \end{ak}
\end{proof}
\end{longVersion}

\begin{corollary}
Let $t:A\uplus\{c\}\to B\uplus\{c\}$ be a term of a nominal monoidal theory and $d\fresh t$. Then $t= (\delta_{cd}\uplus\id_A)\hcomp (c\ d)\cdot t\hcomp (\delta_{dc}\uplus\id_B)$. \begin{gray}MAKE A DIAGRAM\end{gray}
\end{corollary}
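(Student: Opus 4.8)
The plan is to read the identity straight off the preceding Proposition, specialised to the transposition $\pi=(c\ d)$. First I would instantiate that Proposition at the term $t:A\uplus\{c\}\to B\uplus\{c\}$, so that the roles of domain and codomain are played by $A\uplus\{c\}$ and $B\uplus\{c\}$; this gives $(c\ d)\cdot t=(\pi_{A\uplus\{c\}})^{-1}\hcomp t\hcomp \pi_{B\uplus\{c\}}$.

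The next step is to evaluate the two renaming maps using the freshness hypothesis. Since $d\fresh t$ forces $d\notin A\cup B\cup\{c\}$, and $c\notin A$, $c\notin B$ by the disjointness of the unions, the transposition $(c\ d)$ fixes every name of $A$ and of $B$ and only interchanges $c$ and $d$. Hence the factorisation $\pi_X=\biguplus_{a\in X}\delta_{a\,\pi(a)}$ collapses, using $\delta_{aa}=\id_a$ and commutativity of $\uplus$, to $\pi_{A\uplus\{c\}}=\delta_{cd}\uplus\id_A$ and $\pi_{B\uplus\{c\}}=\delta_{cd}\uplus\id_B$, with inverse $(\pi_{A\uplus\{c\}})^{-1}=\delta_{dc}\uplus\id_A$. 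Substituting these back yields $(c\ d)\cdot t=(\delta_{dc}\uplus\id_A)\hcomp t\hcomp(\delta_{cd}\uplus\id_B)$.

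Finally I would plug this expression for $(c\ d)\cdot t$ into the right-hand side of the claim and simplify. Appealing to associativity and to the interchange law of Fig~\ref{fig:monoidal-category} to distribute the sequential compositions over $\uplus$, the two flanking pairs of renamings become $(\delta_{cd}\hcomp\delta_{dc})\uplus\id_A$ and $(\delta_{cd}\hcomp\delta_{dc})\uplus\id_B$; the contraction $\delta_{cd}\hcomp\delta_{dc}=\delta_{cc}=\id_c$ from Fig~\ref{fig:nominal-set} then collapses these to $\id_{A\uplus\{c\}}$ and $\id_{B\uplus\{c\}}$, leaving exactly $t$. I do not expect a genuine obstacle: the whole argument is an application of the Proposition followed by cancellation of renamings, and the only points needing care are tracking which names $(c\ d)$ fixes (secured by $d\fresh t$) and the routine but repeated use of commutativity and interchange to rearrange parallel and sequential composites.
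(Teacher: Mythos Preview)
Your proof is correct and follows exactly the approach the paper intends: the statement is labelled a Corollary of the preceding Proposition, and you obtain it by instantiating that Proposition at $\pi=(c\ d)$, identifying $\pi_{A\uplus\{c\}}=\delta_{cd}\uplus\id_A$ and $\pi_{B\uplus\{c\}}=\delta_{cd}\uplus\id_B$ via the freshness hypothesis, and then cancelling the resulting renamings. The paper's own proof in the long version is in fact left as ``details missing'', so your argument is more explicit than what appears there.
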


\begin{longVersion}
\begin{proof}
\begin{ak} ... details missing ... \end{ak}
\end{proof}
\end{longVersion}

\begin{corollary}
Let $t:A\to B$ be a term of a nominal monoidal theory. Modulo the equations of Fig~\ref{fig:nominal-set}, the support of $t$ is $A\cup B$.
\end{corollary}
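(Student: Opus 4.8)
The plan is to show two inclusions: every element of $\supp(t)$ lies in $A\cup B$, and conversely every name in $A\cup B$ is in $\supp(t)$, where support is computed modulo the equations of Fig~\ref{fig:nominal-set}. First I would observe that the relevant notion of support is that of the nominal set of terms quotiented by alpha-equivalence, so I must reason about equivalence classes rather than raw syntactic terms.

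For the inclusion $\supp(t)\subseteq A\cup B$, I would proceed by structural induction on the term $t:A\to B$ using the grammar of Fig~\ref{fig:terms}. The generators $\id_a:\{a\}\to\{a\}$ and $\delta_{ab}:\{a\}\to\{b\}$ have support contained in their type boundary by the first two equations of Fig~\ref{fig:nominal-set}, and a signature generator $\gen:A\to B$ has $\supp(\gen)\subseteq A\cup B$ by the closure of $\Sigma$ under permutations together with the rule $(a\,b)\gen=(a\,b)\cdot\gen$. For the inductive steps $t\uplus t'$, $t\hcomp s$, and $(a\,b)\,t$, I would use that support of a composite is contained in the union of the supports of the parts (which follows from the equivariance equations in Fig~\ref{fig:nominal-set}), and crucially for sequential composition $t\hcomp s:A\to C$ that the shared middle type $B$ can be eliminated: the key point is that a name $b\in B\setminus(A\cup C)$ is an `internal' name and can be renamed away by the alpha-equivalence of the preceding Corollary, so it does not belong to the support of the equivalence class. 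This last observation is where the quotient by Fig~\ref{fig:nominal-set} does real work.

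For the reverse inclusion $A\cup B\subseteq\supp(t)$, I would use the characterisation of support via the Proposition just proved, namely $\pi\cdot t=(\pi_A)^{-1}\hcomp t\hcomp \pi_B$. Given a name $a\in A$, I would pick a fresh name $d\,\fresh\, t$ and apply the transposition $\pi=(a\ d)$; then $\pi\cdot t$ has domain $\pi[A]=(A\setminus\{a\})\cup\{d\}\neq A$, so $\pi\cdot t\neq t$ because alpha-equivalent terms have the same domain and codomain (the equations of Fig~\ref{fig:nominal-set} preserve the type $A\to B$ of a term). Hence $a\in\supp(t)$, and symmetrically for $a\in B$.

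The main obstacle I expect is the sequential-composition case of the first inclusion: showing rigorously that internal names of $B$ genuinely drop out of the support of the equivalence class, rather than merely being renamable. This requires the Corollary on renaming internal names and a clean argument that if $d\,\fresh\, t$ then the transposition $(c\ d)$ fixes the alpha-equivalence class of $t$ whenever $c\in B$ is internal, which in turn rests on the third (striped) equation of Fig~\ref{fig:nominal-set} governing the interaction of $\delta$ with generators along a shared wire. Making the induction hypothesis strong enough to carry the `internal name' bookkeeping through the $\uplus$ and permutation cases is the delicate part; the reverse inclusion is comparatively routine once the Proposition is in hand.
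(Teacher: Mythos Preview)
Your proposal is correct, and the reverse inclusion $A\cup B\subseteq\supp(t)$ matches the paper's argument essentially verbatim (the paper phrases it as a contradiction: assume some $x\in A\cup B$ lies outside a support $S$, move $x$ out of $A\cup B$ by a permutation fixing $S$, and observe the domain or codomain changes).

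For the forward inclusion $\supp(t)\subseteq A\cup B$, however, you take a substantially longer route than the paper. The paper derives it in one line \emph{from the very Proposition you already cite}: since $\pi\cdot t=(\pi_A)^{-1}\hcomp t\hcomp\pi_B$, any permutation $\pi$ that fixes $A\cup B$ pointwise makes $\pi_A=\id_A$ and $\pi_B=\id_B$, hence $\pi\cdot t=t$; thus $A\cup B$ is a support of $t$ and $\supp(t)\subseteq A\cup B$. Your structural induction, with its delicate ``internal name'' bookkeeping in the sequential-composition case, is not needed at all once the Proposition is available. In effect the Proposition already packages the induction (including the absorption of internal names via the $\delta$-generator equations) into a single statement, and you should exploit it for both directions, not just one.
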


\begin{longVersion}
\begin{proof}
It follows from the proposition that $\supp(t)\subseteq A\cup B$. For the converse, suppose that there is $x\in A\cup B$ and a support $S$ of $t$ with $x\notin S\subseteq A\cup B$. Choose a permutation $\pi$ that fixes $S$ and maps $x$ to some $\pi(x)\notin A\cup B$. Then either $\pi\cdot A\not=A$ or $\pi\cdot B\not=B$, hence $\pi\cdot t \not=t$, contradicting that $S$ is a support of $t$.
\end{proof}
\end{longVersion}

The last corollary shows that internal names are bound by sequential composition. Indeed,
in a composition $A\stackrel{t}{\to} C\stackrel{s}{\to}B$, the names in $C\setminus(A\cup B)$ do not appear in the support of $t\hcomp s$.

\subsection{Nominal PROPs}
From the point of view of Section~\ref{sec:internal-monoidal}, a nominal $\PROP$ is an internal strict monoidal category in $(\Nom,1,\ast)$ that has finite sets of names as objects and at least all bijections as arrows. We spell this out in detail.

%

\begin{remark}
A nominal $\PROP$ $\mathbb C$ is a small category, with a set $\mathbb C_0$ of `objects' and a set $\mathbb C_1$ of `arrows', defined as follows. We write $\then$ for the `sequential' composition (in the diagrammatic order) and $\uplus$ for the `parallel' or `monoidal' composition.
\begin{itemize}
\item $\mathbb C_0$  is the set of finite subsets of $\names$. The permutation action is given by $\pi\cdot A= \pi[A]=\{\pi(a) \mid a\in A\}$. 
\item $\mathbb C_1$ contains all bijections (`renamings') $\pi_A:A\to\pi\cdot A$ for all finite permutations $\pi:\names\to\names$ and is closed under the operation mapping an arrow $f:A\to B$ to $\pi\cdot f : \pi\cdot A\to \pi\cdot B$ defined as $\pi\cdot f = (\pi_A)^{-1}\then f\then \pi_B$. 
\item $A \uplus B$ is the union of $A$ and $B$ and defined whenever $A$ and $B$ are disjoint. This makes $(\mathbb C_0,\emptyset,\uplus)$ a commutative partial monoid. On arrows, we require $(\mathbb C_1,\emptyset,\uplus)$ to be a commutative partial monoid, with $f\uplus g$ defined whenever $\dom f\cap\dom g=\emptyset$ and $\cod f\cap\cod g=\emptyset$.
\end{itemize}
\end{remark}

From this definition on can deduce the following.

\begin{remark} 
\begin{itemize}
\begin{gray}
\item Each permutation $\pi$ gives rise to an equivariant and strict monoidal functor $\boldsymbol \pi:\mathbb C\to\mathbb C$ defined by $\boldsymbol\pi(A)=\pi\cdot A$ and $\boldsymbol\pi(f)=\pi\cdot f$ \begin{gray}(CHECK)\end{gray}. The $\pi_A:A\to{\boldsymbol\pi} (A)$ are natural transformations $\Id\to\pifun$ \begin{gray}(CHECK)\end{gray}. 
\end{gray}
\item A nominal prop has a nominal set of objects and a nominal set of arrows. 
\item The support of an object $A$ is $A$ and the support of an arrow $f:A\to B$ is $A\cup B$. In particular, $\supp(f\then g) = \dom(f)\cup\cod(g)$. In other words, nominal props have diagrammatic alpha equivalence.
\item There is a category $\nPROP$ that consists of nominal props together with functors that are the identity on objects and strict monoidal and equivariant.
\begin{gray}Permutations are acting on arrows (and objects), that is,  
\begin{gather*}
\id\cdot f = f \quad\quad (\pi'\circ\pi)\cdot f= \pi'\cdot(\pi\cdot f)
\end{gather*}
where $\id$ here refers to the identity permutation $\names\to\names$. 
\end{gray}
\item Every NMT presents a $\nPROP$. Conversely, every $\nPROP$ is presented by at least one NMT given by all terms as generators and all equations as equations. \begin{gray}(CHECK, POSSIBLY ADD DETAILS.)\end{gray}
\end{itemize}
\end{remark}

\begin{gray}

\subsection{Ordered nominal PROPs}

THIS IS WORTH DEVELOPPING BUT VERY SKETCHY AT THE MOMENT AND MAY BE NOT NEEDED FOR THE PAPER

Ordered nominal PROPs (onPROPs) are defined like nominal PROPs, but the types of the generators are not sets of names but lists of names, the members of which need to be pairwise distinct. onPROPs provide another example of internal monoidal categories in $(\Nom,1,\ast)$.

More precisely, the set of objects of an onPROP $\mathbb C$ is obtained by closing the names $\names$ under $\uplus$ with $\varnothing$ as neutral element. For example $\mathbf a = a_0\uplus a_1=(a_0,a_1)\in\names^{\ast 2}$ is a two element list of names. There are also new bas\texttt{}ic terms,  symmetries,  which permute the order of the names. They are all generated from arr\texttt{}ows $\gamma_{ab}:a\uplus b\to b\uplus a$ by closing under $\uplus$ and $\hcomp$. For example, the symmetry $(a,b,c)\to(c,b,a)$ can be presented by $(\gamma_{ab}\uplus c)\hcomp(\id_b\uplus\gamma_{ac})\hcomp(\gamma_{bc}\hcomp\id_a)$. Such a presentation is not unique and we will therefore require the usual equations of Fig~\ref{fig:bijections}. The operation $\uplus:\mathbb C\ast \mathbb C\to\mathbb C$ is a functor (internal in categories in $\Nom$) and the $\gamma_{ab}:a\uplus b\to b\uplus a$ are natural transformations (internal in categories in $\Nom$).
\footnote{Internal here means that $\uplus:\mathbb C\ast \mathbb C\to\mathbb C$ is equivariant and that the $\gamma_{ab}:a\uplus b\to b\uplus a$ are closed under the permutation action.}
As in classical string diagrams, naturality provides us with equations that allow us to `slide' symmetries past diagrams in the usual way. 
\end{gray}

\section{Equivalence of nominal and ordinary string diagrams}\label{sec:equivalence}

We show that the categories $\nPROP$ and $\PROP$ are equivalent.

\medskip
To define translations between ordinary and nominal monoidal theories we introduce some auxiliary notation. We denote lists that contain each letter at most once by bold letters. If $\boldsymbol a = [a_1,\ldots a_n]$ is a list, then $\underline {\boldsymbol a}=\{a_1,\ldots a_n\}$. 
Given lists $\boldsymbol a$ and $\boldsymbol a'$ with $\underline{\boldsymbol a}=\underline{\boldsymbol a'}$ we abbreviate bijections in $\PROP$ (also called symmetries) mapping $i\mapsto a_i=a'_j\mapsto j$  as $\langle \boldsymbol a | \boldsymbol a'\rangle$. Given lists $\boldsymbol a$ and $\boldsymbol b$ of the same length we write $[ \boldsymbol a | \boldsymbol b] = \biguplus \delta_{a_ib_i}$ for the bijection $a_i\mapsto b_i$ in an $\nPROP$.


\begin{proposition}\label{prop:NOM}
For any $\PROP$ $\mathcal S$, there is an $\nPROP$ \[\NOM(\mathcal S)\] that has for all arrows $f:\underline n\to \underline m$ of $\mathcal S$, and for all lists $\boldsymbol a = [a_1,\ldots a_n]$ and $\boldsymbol b = [b_1,\ldots b_m]$ arrows $[\boldsymbol a\rangle f\langle\boldsymbol b]$. These arrows are subject to equations 
%
\begin{align*}
[\boldsymbol a\rangle f\hcomp g\langle\boldsymbol c]
&= 
[\boldsymbol a\rangle f\langle\boldsymbol b]\hcomp[\boldsymbol b\rangle g\langle\boldsymbol c]
\\
[\boldsymbol a+\boldsymbol c\rangle f \oplus g\langle\boldsymbol b+\boldsymbol d]
&= 
[\boldsymbol a\rangle f \langle\boldsymbol b]\uplus[\boldsymbol c\rangle g\langle\boldsymbol d]
\\
[\boldsymbol a\rangle \id \langle \boldsymbol b]
& = [\boldsymbol a| \boldsymbol b]
\\
[\boldsymbol a\rangle\, \langle\boldsymbol b | \boldsymbol b'\rangle \hcomp f\, \langle\boldsymbol c]\,
& = 
[\boldsymbol a | \boldsymbol b]\hcomp [ \boldsymbol b'\rangle f \langle\boldsymbol c]
\\
[\boldsymbol a\rangle\, f \hcomp \langle\boldsymbol b | \boldsymbol b'\rangle\, \langle\boldsymbol c]\,
& = 
[ \boldsymbol a\rangle f \langle\boldsymbol b] \hcomp [\boldsymbol b' | \boldsymbol c]
\end{align*}
\end{proposition}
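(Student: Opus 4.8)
The plan is to construct $\NOM(\mathcal S)$ concretely and then read off the listed equations, rather than to posit it abstractly by generators and relations. I would take the objects of $\NOM(\mathcal S)$ to be the finite subsets of $\names$, with permutation action $\pi\cdot A=\pi[A]$. An arrow $A\to B$ is represented by a triple, written $[\boldsymbol a\rangle f\langle\boldsymbol b]$, where $\boldsymbol a,\boldsymbol b$ are repetition-free lists with $\underline{\boldsymbol a}=A$, $\underline{\boldsymbol b}=B$ and $f\in\mathcal S(\underline n,\underline m)$ for $n=|A|$, $m=|B|$, and I would quotient by the smallest equivalence forcing equations (3)--(5): concretely, $[\boldsymbol a\rangle f\langle\boldsymbol b]$ and $[\boldsymbol a'\rangle f'\langle\boldsymbol b']$ are identified when $\underline{\boldsymbol a}=\underline{\boldsymbol a'}$, $\underline{\boldsymbol b}=\underline{\boldsymbol b'}$ and $f'=\langle\boldsymbol a'|\boldsymbol a\rangle\hcomp f\hcomp\langle\boldsymbol b|\boldsymbol b'\rangle$ in $\mathcal S$ (for the appropriate orientation of the symmetries). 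Fixing once and for all a reference ordering of each finite set exhibits a unique representative per class over that ordering, giving a bijection $\NOM(\mathcal S)(A,B)\cong\mathcal S(\underline n,\underline m)$; this simultaneously confirms that the quotient is consistent, i.e. does not collapse distinct arrows of $\mathcal S$.

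Next I would define the operations by the remaining equations and check that they descend to the quotient. Composition is forced by equation (1): to compose $[\boldsymbol a\rangle f\langle\boldsymbol b]$ with $[\boldsymbol b'\rangle g\langle\boldsymbol c]$ where $\underline{\boldsymbol b}=\underline{\boldsymbol b'}$, I first reindex the second factor over $\boldsymbol b$ by absorbing the symmetry $\langle\boldsymbol b'|\boldsymbol b\rangle$ using (4)/(5), and then set the result to $[\boldsymbol a\rangle f\hcomp(\langle\boldsymbol b|\boldsymbol b'\rangle\hcomp g)\langle\boldsymbol c]$. The identity on $A$ is $[\boldsymbol a\rangle\id\langle\boldsymbol a]=[\boldsymbol a|\boldsymbol a]$, and the renamings $\pi_A:A\to\pi\cdot A$ arise as $[\boldsymbol a\rangle\id\langle\pi\boldsymbol a]=[\boldsymbol a|\pi\boldsymbol a]$, so that all bijections are present. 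The tensor is given by equation (2): when $\dom$'s and $\cod$'s are disjoint, $[\boldsymbol a\rangle f\langle\boldsymbol b]\uplus[\boldsymbol c\rangle g\langle\boldsymbol d]=[\boldsymbol a+\boldsymbol c\rangle f\oplus g\langle\boldsymbol b+\boldsymbol d]$; the disjointness hypotheses are exactly what makes the concatenations $\boldsymbol a+\boldsymbol c$ and $\boldsymbol b+\boldsymbol d$ repetition-free, matching the partiality of $\uplus$ in a nominal $\PROP$. Finally the permutation action is $\pi\cdot[\boldsymbol a\rangle f\langle\boldsymbol b]=[\pi\boldsymbol a\rangle f\langle\pi\boldsymbol b]$, leaving $f$ untouched. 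Well-definedness of each operation on classes follows from functoriality of $\mathcal S$ together with the bifunctoriality and naturality of its symmetry.

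It then remains to verify the axioms of an internal monoidal category in $(\Nom,1,\ast)$ from Fig.~\ref{fig:monoidal-category} together with the nominal-set conditions. Associativity and the unit laws for $\hcomp$ and $\uplus$, as well as the interchange law, transfer directly from the corresponding laws of $\mathcal S$. The one genuinely new law is commutativity of $\uplus$: since $A\uplus C=C\uplus A$ as sets, I must check $[\boldsymbol a+\boldsymbol c\rangle f\oplus g\langle\boldsymbol b+\boldsymbol d]=[\boldsymbol c+\boldsymbol a\rangle g\oplus f\langle\boldsymbol d+\boldsymbol b]$, and under the quotient this reduces precisely to the naturality square of the symmetry in $\mathcal S$, namely $\sigma\hcomp(g\oplus f)=(f\oplus g)\hcomp\sigma$ for the twist $\sigma:\underline n\oplus\underline{n'}\to\underline{n'}\oplus\underline n$ (the last equation of Fig.~\ref{fig:symmetric-monoidal-category}). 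This is the step that uses that $\mathcal S$ is symmetric. For the nominal structure I would check equivariance of $\hcomp$ and $\uplus$, which is immediate since the action ignores $f$, and that $\supp([\boldsymbol a\rangle f\langle\boldsymbol b])=A\cup B$; the latter holds because $f$ lives in $\mathcal S$ and so carries no names, whence a permutation fixes the class iff it fixes $A$ and $B$ setwise. In particular every hom-set is a finitely supported nominal set, so $\NOM(\mathcal S)$ lands in $\Nom$.

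The main obstacle is the well-definedness of composition and tensor modulo the symmetry quotient, and within it the commutativity of $\uplus$: all of these come down to carefully tracking how the chosen orderings of the underlying sets interact and reducing the resulting identities to the coherence and naturality equations of $\mathcal S$. Once the bookkeeping with the symmetries $\langle\boldsymbol a|\boldsymbol a'\rangle$ is organised, for instance by always normalising representatives over fixed reference orderings, the remaining verifications are routine applications of functoriality, bifunctoriality and symmetry naturality in $\mathcal S$.
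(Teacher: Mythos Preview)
Your argument is correct, but it takes a different route from the paper's. The paper treats $\NOM(\mathcal S)$ as \emph{presented} by the generators $[\boldsymbol a\rangle f\langle\boldsymbol b]$ together with the five listed equations (on top of the nPROP axioms), and its proof consists in checking that the symmetric-monoidal equations of Fig.~\ref{fig:symmetric-monoidal-category} holding in $\mathcal S$ are respected by this presentation; the displayed calculation handles the only non-routine case, naturality of symmetries, and derives it \emph{from} the commutativity of $\uplus$. You instead build $\NOM(\mathcal S)$ \emph{concretely} as a quotient of labelled $\mathcal S$-arrows and then verify the nPROP axioms of Fig.~\ref{fig:monoidal-category}; your only non-routine case is commutativity of $\uplus$, which you derive \emph{from} naturality of symmetries in $\mathcal S$.

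So both arguments pivot on the same correspondence---naturality of symmetries on the $\PROP$ side versus commutativity of $\uplus$ on the $\nPROP$ side---but run it in opposite directions. Your semantic construction has the advantage of yielding immediately a normal form for arrows and the bijection $\NOM(\mathcal S)(A,B)\cong\mathcal S(\underline{|A|},\underline{|B|})$, which makes consistency obvious and feeds directly into the later isomorphism $\mathcal S\cong\ORD(\NOM(\mathcal S))$. The paper's presentational approach, on the other hand, keeps the generator--relation picture explicit and pairs symmetrically with the proof of Proposition~\ref{prop:ORD}, where the reverse implication (commutativity of $\uplus$ forces naturality of symmetries) is the crux.
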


\begin{proof}
To show that $\NOM(\mathcal S)$ is well-defined, we need to check that the equations of $\mathcal S$ are respected. We only have space here for the most interesting case which is the naturality of symmetries given by the last equation in Fig~\ref{fig:symmetric-monoidal-category}.
We write $\boldsymbol a^m$ for a list of $a$'s of length $m$.
\begin{align*}
[\boldsymbol a^m + \boldsymbol a^z\rangle\,(t \oplus id_z) \hcomp \tw_{n,z}\,\langle\boldsymbol b^z + \boldsymbol b^n] 
&= ([\boldsymbol a^m\rangle\,t\,\langle\boldsymbol x^n] \uplus [\boldsymbol a^z\rangle\,id_z\,\langle\boldsymbol x^z]) \hcomp [\boldsymbol x^n + \boldsymbol x^z\rangle\,\tw_{n,z}\,\langle\boldsymbol b^z + \boldsymbol b^n] 
\\
&= ([\boldsymbol a^z\rangle\,id_z\,\langle\boldsymbol x^z] \uplus [\boldsymbol a^m\rangle\,t\,\langle\boldsymbol x^n]) \hcomp [\boldsymbol x^n + \boldsymbol x^z\rangle\,\tw_{n,z}\,\langle\boldsymbol b^z + \boldsymbol b^n] 
\\
&= [\boldsymbol a^z + \boldsymbol a^m\rangle\,id_z \oplus t\,\langle\boldsymbol x^z + \boldsymbol x^n] \hcomp [\boldsymbol x^n + \boldsymbol x^z\rangle\,\tw_{n,z}\,\langle\boldsymbol b^z + \boldsymbol b^n] 
\\
&= [\boldsymbol a^z + \boldsymbol a^m\rangle\,id_z \oplus t\,\langle\boldsymbol x^z + \boldsymbol x^n] \hcomp [\boldsymbol x^n + \boldsymbol x^z\rangle\,\langle\boldsymbol x^n+\boldsymbol x^z|\boldsymbol x^z+\boldsymbol x^n\rangle\,\langle\boldsymbol b^z + \boldsymbol b^n] 
\\
&= [\boldsymbol a^z + \boldsymbol a^m\rangle\,id_z \oplus t\,\langle\boldsymbol x^z + \boldsymbol x^n] \hcomp [\boldsymbol x^n + \boldsymbol x^z|\boldsymbol x^n+\boldsymbol x^z] \hcomp [\boldsymbol x^z+\boldsymbol x^n|\boldsymbol b^z + \boldsymbol b^n] 
\\
&= [\boldsymbol a^z + \boldsymbol a^m\rangle\,id_z \oplus t\,\langle\boldsymbol x^z + \boldsymbol x^n] \hcomp [\boldsymbol x^z+\boldsymbol x^n|\boldsymbol b^z + \boldsymbol b^n] 
\\
&= [\boldsymbol a^z + \boldsymbol a^m\rangle\,id_z \oplus t\,\langle\boldsymbol b^z + \boldsymbol b^n] 
\\
&= [\boldsymbol a^m + \boldsymbol a^z|\boldsymbol a^m + \boldsymbol a^z] \ \hcomp\ [\boldsymbol a^z + \boldsymbol a^m\rangle\,id_z \oplus t\,\langle\boldsymbol b^z + \boldsymbol b^n] 
\\
&= [\boldsymbol a^m + \boldsymbol a^z\rangle\,\langle\boldsymbol a^m + \boldsymbol a^z|\boldsymbol a^z + \boldsymbol a^m\rangle \hcomp (id_z \oplus t)\,\langle\boldsymbol b^z + \boldsymbol b^n] 
\\
&= [\boldsymbol a^m + \boldsymbol a^z\rangle\,\tw_{m,z} \hcomp (id_z \oplus t)\,\langle\boldsymbol b^z + \boldsymbol b^n] \ \hspace{9em} 
\end{align*}
Note how commutativity of $\uplus$ is used to show that naturality of symmetries is respected.\end{proof}



\begin{proposition}\label{prop:ORD}
For any $\nPROP$ $\mathcal T$ there is a $\PROP$ \[\ORD(\mathcal T)\] that has for all arrows $f:A\to B$ of $\mathcal T$, and for all lists $\boldsymbol a = [a_1,\ldots a_n]$ and $\boldsymbol b = [b_1,\ldots b_m]$ arrows $\langle\boldsymbol a]f[\boldsymbol b\rangle$. These arrows are subject to equations
\begin{align*}
\langle\boldsymbol a]\,f\hcomp g\,[\boldsymbol c\rangle
&= 
\langle\boldsymbol a]\, f \,[\boldsymbol b\rangle\hcomp\langle\boldsymbol b]\, g\,[\boldsymbol c\rangle
\\
\langle\boldsymbol a_f+\boldsymbol a_g]\,f\uplus g\,[\boldsymbol b_f+\boldsymbol b_g\rangle 
&= 
\langle\boldsymbol a_f]\,f\,[\boldsymbol b_f\rangle \oplus \langle\boldsymbol a_g]\,g\,[\boldsymbol b_g\rangle
\\
\langle\boldsymbol a]\,\id\,[\boldsymbol a\rangle
& = \id
\\
\langle\boldsymbol a] \, [\boldsymbol a' | \boldsymbol b]\hcomp f\,  [\boldsymbol c\rangle
& = 
\langle\boldsymbol a | \boldsymbol a'\rangle \hcomp \langle \boldsymbol b]\, f \,[\boldsymbol c\rangle 
\\
\langle\boldsymbol a]\, f \hcomp [\boldsymbol b | \boldsymbol c] \, [\boldsymbol c'\rangle
& = 
\langle \boldsymbol a]\, f \,[\boldsymbol b\rangle \hcomp \langle\boldsymbol c | \boldsymbol c'\rangle
\end{align*}

\end{proposition}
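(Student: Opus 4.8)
The plan is to mirror the construction of Proposition~\ref{prop:NOM} in the opposite direction. I would take $\ORD(\mathcal T)$ to be the $\PROP$ presented by one generating arrow $\langle\boldsymbol a]f[\boldsymbol b\rangle:\underline n\to\underline m$ for every arrow $f:A\to B$ of $\mathcal T$ and every pair of repetition-free lists $\boldsymbol a,\boldsymbol b$ with $\underline{\boldsymbol a}=A$ and $\underline{\boldsymbol b}=B$, subject to the five displayed equations together with the $\PROP$ axioms of Fig~\ref{fig:symmetric-monoidal-category}. A presentation by generators and relations automatically yields a symmetric strict monoidal category with the natural numbers as objects, so the sole content is to show that the assignment $f\mapsto\langle\boldsymbol a]f[\boldsymbol b\rangle$ is well defined: whenever $f=f'$ holds in $\mathcal T$ the equality $\langle\boldsymbol a]f[\boldsymbol b\rangle=\langle\boldsymbol a]f'[\boldsymbol b\rangle$ must be derivable in $\ORD(\mathcal T)$. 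It therefore suffices to run through the equations that generate equality in $\mathcal T$, namely the internal-monoidal-category axioms of Fig~\ref{fig:monoidal-category} and the nominal-set axioms of Fig~\ref{fig:nominal-set}, decomposing each side along its term structure by means of the five defining equations and checking that equal ordinary arrows result.

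Before attacking the individual axioms I would record the basic bookkeeping fact that a change of boundary list is absorbed by a symmetry: taking the renaming inside the fourth displayed equation to be an identity gives $\langle\boldsymbol a]f[\boldsymbol b\rangle=\langle\boldsymbol a\,|\,\boldsymbol a'\rangle\hcomp\langle\boldsymbol a']f[\boldsymbol b\rangle$, and symmetrically the fifth equation gives $\langle\boldsymbol a]f[\boldsymbol b\rangle=\langle\boldsymbol a]f[\boldsymbol b'\rangle\hcomp\langle\boldsymbol b'\,|\,\boldsymbol b\rangle$. With this in hand the two unit laws, the two associativities and the interchange law of Fig~\ref{fig:monoidal-category} translate, via the first two displayed equations, directly onto their counterparts in Fig~\ref{fig:symmetric-monoidal-category}, and the permutation-action axioms of Fig~\ref{fig:nominal-set} (including the three $\delta$-generator rules drawn on a striped background) reduce, using the third displayed equation and the bookkeeping fact, to identities in which the names are discarded and only reorderings, that is, symmetries, survive. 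This is exactly the mechanism by which $\ORD$ forgets names: $f$ and $\pi\cdot f$ are identified once their boundary lists are relabelled by $\pi$.

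The one genuinely interesting case --- and the reason the target must be a $\PROP$ rather than a $\PRO$ --- is the commutativity law $f\uplus g=g\uplus f$ of Fig~\ref{fig:monoidal-category}. Writing $X=\langle\boldsymbol a_f]f[\boldsymbol b_f\rangle$ and $Y=\langle\boldsymbol a_g]g[\boldsymbol b_g\rangle$ with $X:\underline n\to\underline p$ and $Y:\underline m\to\underline q$, the second displayed equation sends the left-hand side to $X\oplus Y$. For the right-hand side the same equation applies only after the concatenations have been put in the order matching $g\uplus f$, so by the bookkeeping fact
\[
\langle\boldsymbol a_f+\boldsymbol a_g]\,g\uplus f\,[\boldsymbol b_f+\boldsymbol b_g\rangle
=\tw_{n,m}\hcomp(Y\oplus X)\hcomp\tw_{q,p},
\]
where $\tw_{n,m}=\langle\boldsymbol a_f+\boldsymbol a_g\,|\,\boldsymbol a_g+\boldsymbol a_f\rangle$ and $\tw_{q,p}=\langle\boldsymbol b_g+\boldsymbol b_f\,|\,\boldsymbol b_f+\boldsymbol b_g\rangle$ are block transpositions. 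The required identity $X\oplus Y=\tw_{n,m}\hcomp(Y\oplus X)\hcomp\tw_{q,p}$ reduces to the naturality of symmetries, the last equation of Fig~\ref{fig:symmetric-monoidal-category} applied to $X$ and $Y$, so it holds in every $\PROP$. Since an arbitrary choice of boundary lists for $f\uplus g$ differs from $\boldsymbol a_f+\boldsymbol a_g$ and $\boldsymbol b_f+\boldsymbol b_g$ only by symmetries, the bookkeeping fact reduces the general commutativity check to this instance.

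I expect this commutativity step to be the main obstacle, both because it is the only point where the translation is not purely formal and because it is where symmetries are indispensable: in a $\PRO$ there is no twist available to absorb the reordering, so the commutative partial tensor of a nominal theory could not be matched --- this is the phenomenon flagged in the footnote restricting attention to $\PROP$-presented theories. Once every axiom of Figs~\ref{fig:monoidal-category} and~\ref{fig:nominal-set} has been seen to be respected, $f\mapsto\langle\boldsymbol a]f[\boldsymbol b\rangle$ is well defined and $\ORD(\mathcal T)$ is the desired $\PROP$.
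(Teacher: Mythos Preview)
Your proposal is correct and follows essentially the same approach as the paper: both identify well-definedness as the real content, single out commutativity of $\uplus$ as the one nontrivial axiom, and discharge it via naturality of symmetries (the paper does this in a single chained calculation, whereas you first isolate the list-reordering ``bookkeeping fact'' and then invoke the standard identity $X\oplus Y=\tw\hcomp(Y\oplus X)\hcomp\tw$). Your remark that this step forces the target to be a $\PROP$ rather than a $\PRO$ is exactly the point the paper flags as well.
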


\begin{proof}
To show that $\ORD$ is well defined we need to show that the equations of an NMT are respected. The most interesting case here is the commutativity of $\uplus$ since the $\oplus$ of SMTs is not commutative. 
\begin{align*}
\langle\boldsymbol a_t + \boldsymbol a_s]\,t \uplus s\,[\boldsymbol b_t + \boldsymbol b_s\rangle
&= \langle\boldsymbol a_t]\,t\,[\boldsymbol b_t\rangle \oplus \langle\boldsymbol a_s]\,s\,[\boldsymbol b_s\rangle
\\
&= (\langle\boldsymbol a_t]\,t \,[\boldsymbol b_t\rangle \ \hcomp\ id_{|\boldsymbol b_t|}) \oplus (id_{|\boldsymbol a_s|} \ \hcomp\, \langle\boldsymbol a_s]\,s\,[\boldsymbol b_s\rangle)
\\
&= (\langle\boldsymbol a_t]\,t \,[\boldsymbol b_t\rangle \oplus id_{|\boldsymbol a_s|}) \ \hcomp\, (id_{|\boldsymbol b_t|} \oplus \langle\boldsymbol a_s]\,s\,[\boldsymbol b_s\rangle)
\\
&= (\langle\boldsymbol a_t]\,t \,[\boldsymbol b_t\rangle \oplus id_{|\boldsymbol a_s|}) \,\hcomp\, \tw_{|\boldsymbol b_t|,|\boldsymbol a_s|} \,\hcomp\, \tw_{|\boldsymbol a_s|,|\boldsymbol b_t|} \,\hcomp\, (id_{|\boldsymbol b_t|} \oplus \langle\boldsymbol a_s]\,s\,[\boldsymbol b_s\rangle)
\\
&= \tw_{|\boldsymbol a_t|,|\boldsymbol a_s|} \,\hcomp\, (id_{|\boldsymbol a_s|} \oplus \langle\boldsymbol a_t]\,t \,[\boldsymbol b_t\rangle) \,\hcomp\, \tw_{|\boldsymbol a_s|,|\boldsymbol b_t|} \,\hcomp\, (id_{|\boldsymbol b_t|} \oplus \langle\boldsymbol a_s]\,s\,[\boldsymbol b_s\rangle)
\\
&= \tw_{|\boldsymbol a_t|,|\boldsymbol a_s|} \,\hcomp\, (id_{|\boldsymbol a_s|} \oplus \langle\boldsymbol a_t]\,t \,[\boldsymbol b_t\rangle) \ \hcomp\ (\langle\boldsymbol a_s]\,s\,[\boldsymbol b_s\rangle \oplus id_{|\boldsymbol b_t|}) \,\hcomp\, \tw_{|\boldsymbol b_s|,|\boldsymbol b_t|}
\\
&= \tw_{|\boldsymbol a_t|,|\boldsymbol a_s|} \,\hcomp\, (id_{|\boldsymbol a_s|} \ \hcomp\, \langle\boldsymbol a_s]\,s\,[\boldsymbol b_s\rangle) \oplus (\langle\boldsymbol a_t]\,t \,[\boldsymbol b_t\rangle \ \,\hcomp\ id_{|\boldsymbol b_t|}) \,\hcomp\, \tw_{|\boldsymbol b_s|,|\boldsymbol b_t|}
\\
&= \tw_{|\boldsymbol a_t|,|\boldsymbol a_s|} \,\hcomp\, \langle\boldsymbol a_s + \boldsymbol a_t]\,s \uplus t\,[\boldsymbol b_s + \boldsymbol b_t\rangle \,\hcomp\, \tw_{|\boldsymbol b_s|,|\boldsymbol b_t|}
\\
&= \langle \boldsymbol a_t + \boldsymbol a_s|\boldsymbol a_s + \boldsymbol a_t\rangle \,\hcomp\, \langle\boldsymbol a_s + \boldsymbol a_t]\,s \uplus t\,[\boldsymbol b_s + \boldsymbol b_t\rangle \,\hcomp\, \langle \boldsymbol b_s + \boldsymbol b_t|\boldsymbol b_t + \boldsymbol b_s\rangle
\\
&= \langle \boldsymbol a_t + \boldsymbol a_s]\,[\boldsymbol a_s + \boldsymbol a_t|\boldsymbol a_s + \boldsymbol a_t] \ \hcomp\, s \uplus t \ \hcomp\ [\boldsymbol b_s + \boldsymbol b_t|\boldsymbol b_s + \boldsymbol b_t]\,[\boldsymbol b_t + \boldsymbol b_s\rangle
\\
&= \langle\boldsymbol a_t + \boldsymbol a_s]\,s \uplus t\,[\boldsymbol b_t + \boldsymbol b_s\rangle
\end{align*}
Note how naturality of symmetries is used to show that the definition of $\ORD$ respects commutativity of $\uplus$.
\end{proof}

\begin{remark}
The following equations can be obtained from the ones above:
\begin{align*}
[\boldsymbol a\rangle\, f \hcomp \langle\boldsymbol b|\boldsymbol b'\rangle \hcomp g \,\langle\boldsymbol c]
&= [\boldsymbol a\rangle\, f \,\langle\boldsymbol b] \hcomp [\boldsymbol b'\rangle\, g \,\langle\boldsymbol c]
\\
[\boldsymbol a\rangle\, \langle\boldsymbol b|\boldsymbol b'\rangle \,\langle\boldsymbol c]
&= [\boldsymbol a|\boldsymbol b] \hcomp [\boldsymbol b'|\boldsymbol c]
\\
\langle\boldsymbol a]\, f \hcomp [\boldsymbol b|\boldsymbol c] \hcomp g \,[\boldsymbol d\rangle
&= \langle\boldsymbol a]\, f \,[\boldsymbol b\rangle \hcomp \langle\boldsymbol c]\, g \,[\boldsymbol d\rangle
\\
\langle\boldsymbol a]\, [\boldsymbol a'|\boldsymbol b'] \,[\boldsymbol b\rangle
&= \langle\boldsymbol a|\boldsymbol a'\rangle \hcomp \langle\boldsymbol b'|\boldsymbol b\rangle
\\
\langle\boldsymbol a][\boldsymbol a\rangle\, f \,\langle\boldsymbol b][\boldsymbol b\rangle
&= \langle\boldsymbol c][\boldsymbol c\rangle\, f \,\langle\boldsymbol d][\boldsymbol d\rangle
\\
\end{align*}

\begin{longVersion}
\begin{proof}
\begin{align*}
[\boldsymbol a \rangle f \hcomp \langle\boldsymbol b | \boldsymbol b'\rangle \hcomp g\langle\boldsymbol c]
&= [\boldsymbol a \rangle\, f \,\hcomp \langle\boldsymbol b | \boldsymbol b'\rangle \,\langle\boldsymbol b' ]\, \hcomp\, [\boldsymbol b'\rangle\, g \,\langle\boldsymbol c]
\\
&= [\boldsymbol a \rangle\, f \,\langle\boldsymbol b ]\, \hcomp \,[\boldsymbol b' | \boldsymbol b']\, \hcomp\, [\boldsymbol b'\rangle\, g \,\langle\boldsymbol c]
\\
&= [\boldsymbol a \rangle\, f \,\langle\boldsymbol b]\, \hcomp \,[\boldsymbol b'\rangle\, g \,\langle\boldsymbol c]
\end{align*}

\begin{align*}
\langle\boldsymbol a]\, f \hcomp [\boldsymbol b|\boldsymbol c] \hcomp g \,[\boldsymbol d\rangle
&= \langle\boldsymbol a]\, f \,\hcomp\, [\boldsymbol b|\boldsymbol c] \,[\boldsymbol c\rangle\ \hcomp\ \langle\boldsymbol c]\, g \,[\boldsymbol d\rangle
\\
&= \langle\boldsymbol a]\, f \,[\boldsymbol b\rangle\, \hcomp \,\langle\boldsymbol c | \boldsymbol c\rangle\, \hcomp\, \langle\boldsymbol c]\, g \,[\boldsymbol d\rangle
\\
&= \langle\boldsymbol a]\, f \,[\boldsymbol b\rangle\, \hcomp \,\langle\boldsymbol c]\, g \,[\boldsymbol d\rangle
\end{align*}

\noindent
The choice of $\boldsymbol a, \boldsymbol b$ is arbitrary, because we can prove that for any other choice $\boldsymbol c, \boldsymbol d$, we have $\langle\boldsymbol a][\boldsymbol a\rangle\, f \,\langle\boldsymbol b][\boldsymbol b\rangle = \langle\boldsymbol c][\boldsymbol c\rangle\, f\,\langle\boldsymbol d][\boldsymbol d\rangle$:
\begin{align*}
\langle\boldsymbol a][\boldsymbol a\rangle\, f\,\langle\boldsymbol b][\boldsymbol b\rangle
& = \langle\boldsymbol a](\,[\boldsymbol a\rangle \langle \boldsymbol c|\boldsymbol c \rangle \hcomp f \hcomp \langle \boldsymbol d|\boldsymbol d \rangle \,\langle\boldsymbol b]\,)[\boldsymbol b\rangle
\\
& = \langle\boldsymbol a](\,[\boldsymbol a | \boldsymbol c] \hcomp  [ \boldsymbol c\rangle \, f \hcomp \langle \boldsymbol d|\boldsymbol d \rangle \,\langle\boldsymbol b]\,)[\boldsymbol b\rangle
\\
& = \langle\boldsymbol a | \boldsymbol a \rangle \hcomp \langle \boldsymbol c]([ \boldsymbol c\rangle \, f \hcomp \langle \boldsymbol d|\boldsymbol d \rangle \,\langle\boldsymbol b]\,)[\boldsymbol b\rangle
\\
& = \langle \boldsymbol c](\,[ \boldsymbol c\rangle \, f \hcomp \langle \boldsymbol d|\boldsymbol d \rangle \ \langle\boldsymbol b]\,)[\boldsymbol b\rangle
\\
& = \langle \boldsymbol c](\,[ \boldsymbol c\rangle \, f \,\langle \boldsymbol d] \,\hcomp\, [\boldsymbol d |\boldsymbol b]
\,)[\boldsymbol b\rangle
\\
& = \langle \boldsymbol c]\ \,[ \boldsymbol c\rangle \, f \,\langle \boldsymbol d]\ \,[\boldsymbol d\rangle \,\hcomp \langle\boldsymbol b |\boldsymbol b\rangle
\\
& = \langle \boldsymbol c][ \boldsymbol c\rangle \, f \,\langle \boldsymbol d][\boldsymbol d\rangle
\\
\end{align*}
\end{proof}
\end{longVersion}
\end{remark}

\begin{gray}
\begin{remark}
While technically $[\boldsymbol a\rangle\, f\,\langle\boldsymbol b]$ is just an operation symbol, the intended meaning is that 
$\langle\boldsymbol a]$  is mapping the index $i$ to the name $a_i$ and that 
$[\boldsymbol a\rangle$  is mapping the name $a_i$ to the index $i$.
\end{remark} 
\end{gray}

\begin{proposition}$\NOM:\PROP\to\nPROP$ is a functor mapping an arrow of $\PROP$s $F:\mathcal S\to\mathcal S$ to an arrow of $\nPROP$s 
$\NOM(F):\NOM(\mathcal S)\to\NOM(\mathcal S)$ defined by 
\[\NOM(F)([\boldsymbol a \rangle\, g\,\langle\boldsymbol b]) = [\boldsymbol a \rangle\, Fg\,\langle\boldsymbol b].\]
\end{proposition}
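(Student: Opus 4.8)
The plan is to treat $\NOM(\mathcal S)$ as the $\nPROP$ presented by the generators $[\boldsymbol a\rangle f\langle\boldsymbol b]$ subject to the five equations of Proposition~\ref{prop:NOM}. Since the candidate functor is prescribed on these generators by $[\boldsymbol a\rangle f\langle\boldsymbol b]\mapsto[\boldsymbol a\rangle Ff\langle\boldsymbol b]$, and $F$ leaves the boundary lists untouched while only transporting the inner $\mathcal S$-arrow, the argument splits into three tasks: first, show the assignment is well defined, i.e.\ that it sends equal arrows to equal arrows; second, show $\NOM(F)$ is a morphism of $\nPROP$s, that is, identity on objects, strict monoidal and equivariant; third, verify that $\NOM$ itself preserves identities and composition. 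The second and third tasks are short, so the bulk of the work is the first.

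For well-definedness I would check each of the five defining equations in turn, applying $\NOM(F)$ to both sides and using that a morphism of $\PROP$s is a strict symmetric monoidal functor that is the identity on objects. Concretely, the first equation needs $F(f\hcomp g)=Ff\hcomp Fg$ (functoriality of $F$), the second needs $F(f\oplus g)=Ff\oplus Fg$ (strict monoidality), and the third needs $F(\id)=\id$ together with the fact that the renaming $[\boldsymbol a|\boldsymbol b]$ does not mention $F$. The last two equations involve the symmetries $\langle\boldsymbol b|\boldsymbol b'\rangle$ of $\mathcal S$; here I would use that the symmetries in a $\PROP$ are uniquely generated from the twist $\tw$ by $\oplus$ and $\hcomp$, so that any identity-on-objects strict symmetric monoidal functor fixes them, $F\langle\boldsymbol b|\boldsymbol b'\rangle=\langle\boldsymbol b|\boldsymbol b'\rangle$. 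After applying $F$, each verification then reduces to the very same equation of Proposition~\ref{prop:NOM} read in the target $\nPROP$, so that well-definedness follows without further computation.

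For the second task, $\NOM(F)$ is the identity on objects because $Ff$ has the same numerical domain and codomain as $f$ (as $F$ is identity on objects), so $[\boldsymbol a\rangle Ff\langle\boldsymbol b]$ still has domain $\underline{\boldsymbol a}$ and codomain $\underline{\boldsymbol b}$; it is strict monoidal by the second defining equation; and it is equivariant because the permutation action renames only the boundary lists, $\pi\cdot[\boldsymbol a\rangle f\langle\boldsymbol b]=[\pi\cdot\boldsymbol a\rangle f\langle\pi\cdot\boldsymbol b]$, which commutes with replacing $f$ by $Ff$. For the third task, $\NOM(\id_{\mathcal S})=\id$ and $\NOM(G\circ F)=\NOM(G)\circ\NOM(F)$ follow immediately from $\id(f)=f$ and $(G\circ F)(f)=G(Ff)$ applied to the inner arrow. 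I expect the only genuine subtlety to be the symmetry case of the well-definedness step: one must be sure that $F$ really fixes the chosen symmetries $\langle\boldsymbol b|\boldsymbol b'\rangle$ and that the $F$-independent renamings $[\boldsymbol a|\boldsymbol b]$ on the nominal side remain consistent with this. Everything else is a direct transport of the functoriality and monoidality of $F$ across the defining equations.
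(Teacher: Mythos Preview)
Your proposal is correct and follows essentially the same approach as the paper: both rely on the fact that a $\PROP$ morphism $F$ fixes symmetries and identities, and then push $F$ through the defining equations of $\NOM(\mathcal S)$ to verify that $\NOM(F)$ preserves sequential composition, parallel composition, and renamings. Your write-up is in fact more thorough than the paper's, since you explicitly separate well-definedness (compatibility with the five presenting equations) from the structure-preservation and functoriality checks, whereas the paper only records the latter computations and leaves well-definedness implicit.
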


\begin{longVersion}
\begin{proof}
We need to show that $\NOM(F)$ preserves sequential and parallel composition and is equivariant.
\begin{align*}
\NOM(F)([\boldsymbol c | \boldsymbol c']) &=\NOM(F)([\boldsymbol c \rangle \id \langle\boldsymbol c']) \\&= [\boldsymbol c \rangle F\id \langle\boldsymbol c']
\\
&= [\boldsymbol c \rangle \id \langle\boldsymbol c']
\\
&=[\boldsymbol c | \boldsymbol c']
\end{align*}

\begin{align*}
\NOM(F)([\boldsymbol a \rangle f \langle\boldsymbol c] \hcomp [\boldsymbol c'\rangle g\langle\boldsymbol b])
&=
\NOM(F)([\boldsymbol a \rangle f \hcomp \langle\boldsymbol c | \boldsymbol c'\rangle \hcomp g\langle\boldsymbol b])\\
&= [\boldsymbol a \rangle F(f \hcomp \langle\boldsymbol c | \boldsymbol c'\rangle \hcomp g)\langle\boldsymbol b]
\\
&= [\boldsymbol a \rangle Ff \hcomp F\langle\boldsymbol c | \boldsymbol c'\rangle \hcomp Fg\langle\boldsymbol b] 
\\
&= [\boldsymbol a \rangle Ff \hcomp \langle\boldsymbol c | \boldsymbol c'\rangle \hcomp Fg\langle\boldsymbol b]
\\
&= [\boldsymbol a \rangle Ff \langle\boldsymbol c] \hcomp [\boldsymbol c'\rangle Fg\langle\boldsymbol b]
\\
&= \NOM(F)([\boldsymbol a \rangle f \langle\boldsymbol c]) \hcomp \NOM(F)([\boldsymbol c'\rangle g\langle\boldsymbol b])
\end{align*}

\begin{align*}
\NOM(F)([\boldsymbol a \rangle f \langle\boldsymbol b] \uplus [\boldsymbol c \rangle g \langle\boldsymbol d])
&= \NOM(F)([\boldsymbol a + \boldsymbol c\rangle f + g \langle\boldsymbol b + \boldsymbol d])
\\
&= [\boldsymbol a + \boldsymbol c\rangle F(f + g) \langle\boldsymbol b + \boldsymbol d]
\\
&= [\boldsymbol a + \boldsymbol c\rangle Ff + Fg \langle\boldsymbol b + \boldsymbol d]
\\
&= [\boldsymbol a \rangle Ff \langle\boldsymbol b] \uplus [\boldsymbol c\rangle Fg\langle\boldsymbol d]
\\
&= \NOM(F)([\boldsymbol a \rangle f \langle\boldsymbol b]) \uplus \NOM(F)([\boldsymbol c\rangle g\langle\boldsymbol d])
\end{align*}
\end{proof}
\end{longVersion}

\begin{proposition}$\ORD$ is a functor mapping an arrow of $\nPROP$s $F:\mathcal T\to\mathcal T$ to an arrow of $\PROP$s 
$\ORD(F):\ORD(\mathcal T)\to\ORD(\mathcal T)$ defined by 
\[\ORD(F)(\langle\boldsymbol a ]\, f \,[\boldsymbol b\rangle) = \langle\boldsymbol a ]\,  Ff  \,[\boldsymbol b\rangle\]
\end{proposition}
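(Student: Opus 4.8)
The plan is to use that, by Proposition~\ref{prop:ORD}, the $\PROP$ $\ORD(\mathcal T)$ is presented by the generators $\langle\boldsymbol a]f[\boldsymbol b\rangle$ subject to the five displayed equations. Hence an identity-on-objects strict symmetric monoidal functor out of $\ORD(\mathcal T)$ is uniquely determined by its action on these generators, provided one checks that the prescribed images respect those five equations. I would therefore first verify this well-definedness and then read off functoriality of $\ORD$. The argument is entirely dual to the one just given for $\NOM$, with renamings playing the role that symmetries played there (and the target written $\mathcal T'$).

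First I would check that $\langle\boldsymbol a]f[\boldsymbol b\rangle\mapsto\langle\boldsymbol a]Ff[\boldsymbol b\rangle$ respects each defining equation, using that $F$, being a morphism of $\nPROP$s, is the identity on objects and preserves $\id$, $\hcomp$ and $\uplus$, and fixes renamings ($F[\boldsymbol a'|\boldsymbol b]=[\boldsymbol a'|\boldsymbol b]$). The first two equations are immediate from $F(f\hcomp g)=Ff\hcomp Fg$ and $F(f\uplus g)=Ff\uplus Fg$ together with the same equations read inside $\ORD(\mathcal T')$; the third from $F\id=\id$. The fourth and fifth equations, which connect an internal renaming $[\boldsymbol a'|\boldsymbol b]$ of $\mathcal T$ with a symmetry $\langle\boldsymbol a|\boldsymbol a'\rangle$ of $\ORD(\mathcal T')$, in addition need $F[\boldsymbol a'|\boldsymbol b]=[\boldsymbol a'|\boldsymbol b]$. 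A choice of disjoint name-lists is implicit in the tensor $\oplus$, but this is harmless since the value $\langle\boldsymbol a]f[\boldsymbol b\rangle$ does not depend on the chosen lists (last equation of the Remark following Proposition~\ref{prop:ORD}) and $F$ is equivariant.

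Along the way I would observe that $\ORD(F)$ is the identity on objects, since $F$ is and so list-lengths are unchanged, and that it preserves symmetries: because $\langle\boldsymbol a|\boldsymbol a'\rangle=\langle\boldsymbol a]\id[\boldsymbol a'\rangle$, we get $\ORD(F)\langle\boldsymbol a|\boldsymbol a'\rangle=\langle\boldsymbol a]F\id[\boldsymbol a'\rangle=\langle\boldsymbol a|\boldsymbol a'\rangle$. Hence $\ORD(F)$ is a bona fide arrow of $\PROP$. Functoriality of $\ORD$ is then immediate from the pointwise definition: $\ORD(\id_{\mathcal T})$ fixes every generator, so equals $\id_{\ORD(\mathcal T)}$, and $\ORD(G\circ F)(\langle\boldsymbol a]f[\boldsymbol b\rangle)=\langle\boldsymbol a]G(Ff)[\boldsymbol b\rangle=(\ORD(G)\circ\ORD(F))(\langle\boldsymbol a]f[\boldsymbol b\rangle)$.

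I expect the only real obstacle to be the fact that a morphism of $\nPROP$s fixes renamings, $F[\boldsymbol a'|\boldsymbol b]=[\boldsymbol a'|\boldsymbol b]$. Since $F$ preserves $\uplus$ and $[\boldsymbol a'|\boldsymbol b]=\biguplus_i\delta_{a'_ib_i}$, this reduces to $F\delta_{ab}=\delta_{ab}$. This is the nominal dual of the preservation of symmetries enjoyed automatically by $\PROP$ morphisms (and used tacitly in the $\NOM$ proof), and it must be extracted from $F$ being the identity on objects and equivariant. Concretely, the permutation action on arrows of an $\nPROP$ is defined through the renamings by $\pi\cdot f=(\pi_A)^{-1}\hcomp f\hcomp\pi_B$, so equivariance of $F$ together with preservation of $\hcomp$ constrains the family $F\delta_{ab}$ back to the original renamings $\delta_{ab}$. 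Once this is secured, the remaining steps are routine transcriptions of the five-equation bookkeeping.
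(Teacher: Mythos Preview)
Your proposal is correct and follows essentially the same approach as the paper: verify that the assignment $\langle\boldsymbol a]f[\boldsymbol b\rangle\mapsto\langle\boldsymbol a]Ff[\boldsymbol b\rangle$ is compatible with the structure of $\ORD(\mathcal T)$, using that $F$ preserves composition, tensor, identities and renamings. The paper's (long-version) proof simply checks directly that $\ORD(F)$ preserves symmetries, sequential and parallel composition, invoking $F[\boldsymbol c|\boldsymbol d]=[\boldsymbol c|\boldsymbol d]$ without comment; you are a bit more thorough in separating well-definedness from the $\PROP$-morphism axioms and in spelling out functoriality of $\ORD$ itself, and you are right to flag the preservation of renamings as the one nontrivial ingredient.
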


\begin{longVersion}
\begin{proof}
We need to show that $\ORD(F)$ preserves sequential and parallel composition and symmetries.

\begin{align*}
\ORD(F)(\langle \boldsymbol c | \boldsymbol c' \rangle) &=\NOM(F)(\langle \boldsymbol c] \id [\boldsymbol c'\rangle) \\&= \langle \boldsymbol c] F\id [\boldsymbol c'\rangle
\\
&= \langle \boldsymbol c] \id [\boldsymbol c'\rangle
\\
&=\langle \boldsymbol c | \boldsymbol c'\rangle
\end{align*}


\begin{align*}
\ORD(F)(\langle\boldsymbol a ] f [\boldsymbol c \rangle \hcomp \langle \boldsymbol d] g [\boldsymbol b\rangle)
&=
\ORD(F)(\langle\boldsymbol a ] f \hcomp [\boldsymbol c|\boldsymbol d] \hcomp g [\boldsymbol b\rangle)\\
&= \langle\boldsymbol a ] F(f \hcomp [\boldsymbol c|\boldsymbol d] \hcomp g) [\boldsymbol b\rangle
\\
&= \langle\boldsymbol a ] Ff \hcomp F[\boldsymbol c|\boldsymbol d] \hcomp Fg [\boldsymbol b\rangle
\\
&= \langle\boldsymbol a ] Ff \hcomp [\boldsymbol c|\boldsymbol d] \hcomp Fg [\boldsymbol b\rangle
\\
&= \langle\boldsymbol a ] Ff [\boldsymbol c \rangle \hcomp \langle \boldsymbol d] Fg [\boldsymbol b\rangle
\\
&= \ORD(F)(\langle\boldsymbol a ] f [\boldsymbol c \rangle) \hcomp \ORD(F)(\langle \boldsymbol d] g [\boldsymbol b\rangle)
\end{align*}



\begin{align*}
\ORD(F)(\langle\boldsymbol a ] f [\boldsymbol b\rangle + \langle\boldsymbol c] g [\boldsymbol d\rangle)
&= \ORD(F)(\langle\boldsymbol a + \boldsymbol c] f \uplus g [\boldsymbol b + \boldsymbol d\rangle)
\\
&= \langle\boldsymbol a + \boldsymbol c] F(f \uplus g) [\boldsymbol b + \boldsymbol d\rangle
\\
&= \langle\boldsymbol a + \boldsymbol c] Ff \uplus Fg [\boldsymbol b + \boldsymbol d\rangle
\\
&= \langle\boldsymbol a ] Ff [\boldsymbol b\rangle + \langle\boldsymbol c] Fg [\boldsymbol d\rangle
\\
&= \ORD(F)(\langle\boldsymbol a ] f [\boldsymbol b\rangle) + \ORD(F)(\langle\boldsymbol c] g [\boldsymbol d\rangle)
\end{align*}

\end{proof}
\end{longVersion}

\begin{proposition}
For each  $\PROP$ $\mathcal S$, there is an isomorphism of $\PROP$s, natural in $\mathcal S$, 
\[
\mathcal S \to \ORD(\NOM(\mathcal S))\]
mapping $f \in \mathcal S$ to $\langle\boldsymbol a][\boldsymbol a\rangle\, f\,\langle\boldsymbol b][\boldsymbol b\rangle$ for some choice of $\boldsymbol a, \boldsymbol b$. 
\end{proposition}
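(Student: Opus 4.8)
The plan is to exhibit the map
\[\Phi_{\mathcal S}\colon f\mapsto \langle\boldsymbol a]\,[\boldsymbol a\rangle f\langle\boldsymbol b]\,[\boldsymbol b\rangle\]
(identity on objects, $\boldsymbol a,\boldsymbol b$ lists of distinct names with $|\boldsymbol a|=n$, $|\boldsymbol b|=m$ for $f\colon\underline n\to\underline m$), and to prove it is an isomorphism of $\PROP$s natural in $\mathcal S$. That $\Phi_{\mathcal S}$ is well defined, i.e.\ independent of the chosen lists, is precisely the last of the derived identities in the Remark following Proposition~\ref{prop:ORD}, namely $\langle\boldsymbol a][\boldsymbol a\rangle f\langle\boldsymbol b][\boldsymbol b\rangle=\langle\boldsymbol c][\boldsymbol c\rangle f\langle\boldsymbol d][\boldsymbol d\rangle$.

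First I would check that $\Phi_{\mathcal S}$ is a morphism of $\PROP$s. It is the identity on objects by construction, and preserves identities since $[\boldsymbol a\rangle\id\langle\boldsymbol a]=[\boldsymbol a\mid\boldsymbol a]=\id$ in $\NOM(\mathcal S)$ and $\langle\boldsymbol a]\id[\boldsymbol a\rangle=\id$ in $\ORD(\NOM(\mathcal S))$. For sequential composition one applies the first equation of Proposition~\ref{prop:NOM} to split $[\boldsymbol a\rangle f\hcomp g\langle\boldsymbol c]$ through a common intermediate list $\boldsymbol b$, then the first equation of Proposition~\ref{prop:ORD} to split the outer brackets, obtaining $\Phi_{\mathcal S}(f)\hcomp\Phi_{\mathcal S}(g)$. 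Parallel composition is handled the same way with the second equations of the two propositions, after choosing the name lists for $f$ and $g$ disjoint so that $\uplus$ is defined. Preservation of the symmetry then reduces to the single verification $\Phi_{\mathcal S}(\tw_{1,1})=\tw_{1,1}$, a direct computation with the renaming equations of Proposition~\ref{prop:NOM}.

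Next I would prove $\Phi_{\mathcal S}$ bijective on hom-sets by exhibiting an inverse. For fullness, the fourth and fifth equations of Proposition~\ref{prop:ORD} together with $[\boldsymbol a\mid\boldsymbol a]=\id$ yield, for any $\boldsymbol c,\boldsymbol d$ with $\underline{\boldsymbol c}=\underline{\boldsymbol a}$ and $\underline{\boldsymbol d}=\underline{\boldsymbol b}$,
\[\langle\boldsymbol c]\,[\boldsymbol a\rangle f\langle\boldsymbol b]\,[\boldsymbol d\rangle=\langle\boldsymbol c\mid\boldsymbol a\rangle\hcomp\Phi_{\mathcal S}(f)\hcomp\langle\boldsymbol b\mid\boldsymbol d\rangle,\]
so every such generator lies in the image of $\Phi_{\mathcal S}$, since $\Phi_{\mathcal S}$ preserves symmetries and composition; renaming generators are the special case $f=\id$, and distributing over $\hcomp$ and $\uplus$ via Proposition~\ref{prop:ORD} extends this to all arrows. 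Dually I define $\Psi_{\mathcal S}$ on generators by $\langle\boldsymbol c]\,[\boldsymbol a\rangle f\langle\boldsymbol b]\,[\boldsymbol d\rangle\mapsto\langle\boldsymbol c\mid\boldsymbol a\rangle\hcomp f\hcomp\langle\boldsymbol b\mid\boldsymbol d\rangle$. Granting $\Psi_{\mathcal S}$ well defined, one reads off $\Psi_{\mathcal S}\Phi_{\mathcal S}(f)=\langle\boldsymbol a\mid\boldsymbol a\rangle\hcomp f\hcomp\langle\boldsymbol b\mid\boldsymbol b\rangle=f$, while $\Phi_{\mathcal S}\Psi_{\mathcal S}=\id$ follows from the displayed identity above; hence $\Phi_{\mathcal S}$ is an isomorphism.

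I expect the hard part to be the well-definedness of $\Psi_{\mathcal S}$: it must respect both layers of defining equations — the internal monoidal category equations of $\NOM(\mathcal S)$ sitting inside, and the equations of Proposition~\ref{prop:ORD} outside — and in particular the interaction of the inserted symmetries $\langle\boldsymbol c\mid\boldsymbol a\rangle$ with renamings and with the naturality of $\tw$ in $\mathcal S$, which is the mirror image of the calculation already carried out in the proof of Proposition~\ref{prop:ORD}. Naturality in $\mathcal S$ is by contrast immediate: for $F\colon\mathcal S\to\mathcal S'$ the definitions of $\NOM(F)$ and $\ORD(F)$ give $\ORD(\NOM(F))(\langle\boldsymbol a]\,[\boldsymbol a\rangle f\langle\boldsymbol b]\,[\boldsymbol b\rangle)=\langle\boldsymbol a]\,[\boldsymbol a\rangle Ff\langle\boldsymbol b]\,[\boldsymbol b\rangle=\Phi_{\mathcal S'}(Ff)$, so the naturality square commutes.
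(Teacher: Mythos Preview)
Your proposal is correct and follows essentially the same route as the paper: the paper calls your $\Phi_{\mathcal S}$ and $\Psi_{\mathcal S}$ respectively $\Delta$ and $\Gamma$, defines $\Gamma(\langle\boldsymbol a'][\boldsymbol a\rangle f\langle\boldsymbol b][\boldsymbol b'\rangle)=\langle\boldsymbol a'|\boldsymbol a\rangle\hcomp f\hcomp\langle\boldsymbol b|\boldsymbol b'\rangle$ exactly as you do, verifies $\Gamma\Delta=\id$ and $\Delta\Gamma=\id$ by direct calculation, and checks that $\Delta$ and $\Gamma$ preserve the two compositions and symmetries. You are in fact slightly more thorough than the paper in two places: you spell out the naturality square in $\mathcal S$ (which the paper states but does not prove), and you correctly flag well-definedness of $\Psi_{\mathcal S}$ as the step where the naturality-of-symmetries axiom of $\mathcal S$ must be invoked, a point the paper only raises as an internal comment.
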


\begin{longVersion}
\begin{proof}
In order to show the map above is an iso, we define an inverse
\[ \Gamma: \ORD(\NOM(\mathcal S))\to\mathcal S\] 
mapping the $\langle\boldsymbol a'][\boldsymbol a\rangle\, f\,\langle\boldsymbol b][\boldsymbol b'\rangle$ generated by an $f:\underline n\to\underline n$ in $\mathcal S$ to 
$\langle\boldsymbol a'|\boldsymbol a\rangle \, \hcomp \,f \, \hcomp \,\langle\boldsymbol b|\boldsymbol b'\rangle$

\medskip\noindent
We now verify that $\Gamma (\Delta (f)) = f$ for any $f$:
\begin{align*}
\Gamma (\Delta (f))
& = \Gamma (\langle\boldsymbol a](\,[\boldsymbol a\rangle f\langle\boldsymbol b]\,)[\boldsymbol b\rangle)
\\
& = \langle\boldsymbol a|\boldsymbol a\rangle \, \hcomp \,f \, \hcomp \,\langle\boldsymbol b|\boldsymbol b\rangle
\\
& = f
\end{align*}
and
\begin{align*}
\Delta (\Gamma (\langle\boldsymbol a][\boldsymbol a'\rangle\,f\,\langle\boldsymbol b'][\boldsymbol b\rangle))
& = \Delta (\,\langle\boldsymbol a|\boldsymbol a'\rangle \hcomp f \hcomp \langle\boldsymbol b'|\boldsymbol b\rangle\,)
\\
& = \langle\boldsymbol x]\,(\ [\boldsymbol x\rangle(\,\langle\boldsymbol a|\boldsymbol a'\rangle \hcomp f \hcomp \langle\boldsymbol b'|\boldsymbol b\rangle\,)\langle\boldsymbol y]\ )\,[\boldsymbol y\rangle
\\
& = \langle\boldsymbol x]\,(\ [\boldsymbol x | \boldsymbol a]\,\hcomp\,[\boldsymbol a'\rangle(\,f \hcomp \langle\boldsymbol b'|\boldsymbol b\rangle\,)\langle\boldsymbol y]\ )\,[\boldsymbol y\rangle
\\
& = \langle\boldsymbol x | \boldsymbol x \rangle\,\hcomp\,\langle\boldsymbol a](\ [\boldsymbol a'\rangle(\,f \hcomp \langle\boldsymbol b'|\boldsymbol b\rangle\,)\langle\boldsymbol y]\ )\,[\boldsymbol y\rangle
\\
& = \langle\boldsymbol a](\,[\boldsymbol a'\rangle(\,f \hcomp \langle\boldsymbol b'|\boldsymbol b\rangle\,)\langle\boldsymbol y]\,)[\boldsymbol y\rangle
\\
& = \langle\boldsymbol a](\,[\boldsymbol a'\rangle\ f\ \langle\boldsymbol b']\ \hcomp\ [\boldsymbol b | \boldsymbol y]\,)[\boldsymbol y\rangle
\\
& = \langle\boldsymbol a]\ \,[\boldsymbol a'\rangle\ f\ \langle\boldsymbol b']\ \,[\boldsymbol b \rangle\ \hcomp\ \langle \boldsymbol y | \boldsymbol y\rangle
\\
& = \langle\boldsymbol a][\boldsymbol a'\rangle\,f\,\langle\boldsymbol b'][\boldsymbol b\rangle
\end{align*}
%




\medskip\noindent
We also show that $\Delta$ (and, similarly, $\Gamma$, \begin{ak}you wrote "equivalently" but is is true that this is equivalent?\end{ak} \begin{sam} i think you might have written equivalently at some point, but I've gone over them and think they hold...will add them here\end{sam}) preserves the two kinds of composition and symmetries:
\begin{align*}
\Delta(f\hcomp g) &= \langle\boldsymbol a](\,[\boldsymbol a\rangle f\hcomp g \langle\boldsymbol b]\,)[\boldsymbol b\rangle
\\
&= \langle\boldsymbol a](\,[\boldsymbol a\rangle f\langle\boldsymbol c]\hcomp[\boldsymbol c\rangle g\langle\boldsymbol b]\,)[\boldsymbol b\rangle
\\
&= \langle\boldsymbol a][\boldsymbol a\rangle\, f\,\langle\boldsymbol c][\boldsymbol c\rangle \,\hcomp\, \langle\boldsymbol c][\boldsymbol c\rangle\, g\,\langle\boldsymbol b][\boldsymbol b\rangle
\\
&= \Delta(f) \hcomp \Delta(g)
\end{align*}
In the last step of the derivation above, we used the fact that we can arbitrarily choose $\boldsymbol a, \boldsymbol b$ in $\Delta$, which follows form the last equation of Remark \ref{eqs:NomOrdDerivable}.

\begin{align*}
\Gamma(\langle\boldsymbol a][\boldsymbol a'\rangle\, f \,\langle\boldsymbol b'][\boldsymbol b\rangle \hcomp \langle\boldsymbol c][\boldsymbol c'\rangle\, g \,\langle\boldsymbol d'][\boldsymbol d\rangle) 
&= \Gamma(\langle\boldsymbol a](\,[\boldsymbol a'\rangle\, f\,\langle\boldsymbol b']\ \hcomp \ [\boldsymbol b|\boldsymbol c]\ \hcomp \ [\boldsymbol c'\rangle\, g\,\langle\boldsymbol d']\,)[\boldsymbol d'\rangle)
\\
&= \Gamma(\langle\boldsymbol a](\,[\boldsymbol a'\rangle(\, f \hcomp \langle\boldsymbol b'|\boldsymbol b \rangle \,)\langle\boldsymbol c] \,\hcomp\, [\boldsymbol c'\rangle\, g\,\langle\boldsymbol d']\,)[\boldsymbol d'\rangle)
\\
&= \Gamma(\langle\boldsymbol a](\,[\boldsymbol a'\rangle\ f\, \hcomp\, \langle\boldsymbol b'|\boldsymbol b \rangle\, \hcomp\, \langle\boldsymbol c|\boldsymbol c'\rangle\, \hcomp\, g\ \langle\boldsymbol d']\,)[\boldsymbol d'\rangle)
\\
&= \langle\boldsymbol a|\boldsymbol a'\rangle\,\hcomp\, f\, \hcomp\, \langle\boldsymbol b'|\boldsymbol b \rangle\, \hcomp\, \langle\boldsymbol c|\boldsymbol c'\rangle\, \hcomp\, g\,\hcomp\, \langle\boldsymbol d'|\boldsymbol d'\rangle
\\
&= \Gamma(\langle\boldsymbol a][\boldsymbol a'\rangle\, f \,\langle\boldsymbol b'][\boldsymbol b\rangle) \hcomp \Gamma(\langle\boldsymbol c][\boldsymbol c'\rangle\, g \,\langle\boldsymbol d'][\boldsymbol d\rangle) 
\end{align*}

\begin{align*}
\Delta(f + g) &= \langle\boldsymbol a_f + \boldsymbol a_g]\,[\boldsymbol a_f + \boldsymbol a_g\rangle\,f + g\,\langle\boldsymbol b_f + \boldsymbol b_g]\ [\boldsymbol b_f + \boldsymbol b_g\rangle
\\
&= \langle\boldsymbol a_f + \boldsymbol a_g](\,[\boldsymbol a_f\rangle\, f\,\langle\boldsymbol b_f]\,\uplus\,[\boldsymbol a_g\rangle\, g\,\langle\boldsymbol b_g]\,)[\boldsymbol b_f + \boldsymbol b_g\rangle
\\
&= \langle\boldsymbol a_f][\boldsymbol a_f\rangle\, f \,\langle\boldsymbol b_f][\boldsymbol b_f\rangle \,+\, \langle\boldsymbol a_g][\boldsymbol a_g\rangle\, g\,\langle\boldsymbol b_g][\boldsymbol b_g\rangle
\\
&= \Delta(f) + \Delta(g)
\end{align*}

\begin{align*}
\Gamma(\langle\boldsymbol a_f][\boldsymbol a'_f\rangle\,f \,\langle\boldsymbol b'_f][\boldsymbol b_f\rangle + \langle\boldsymbol a_g][\boldsymbol a'_g\rangle\,g \,\langle\boldsymbol b'_g][\boldsymbol b_g\rangle) 
&= \Gamma(\langle\boldsymbol a_f + \boldsymbol a_g](\,[\boldsymbol a'_f\rangle\,f \,\langle\boldsymbol b'_f]\ \,\uplus\ \,[\boldsymbol a'_g\rangle\,g \,\langle\boldsymbol b'_g]\,)[\boldsymbol b_f + \boldsymbol b_g\rangle)
\\
&= \Gamma(\langle\boldsymbol a_f + \boldsymbol a_g](\,[\boldsymbol a'_f + \boldsymbol a'_g\rangle\,f + g\,\langle\boldsymbol b'_f + \boldsymbol b'_g]\,)[\boldsymbol b_f + \boldsymbol b_g\rangle)
\\
&= \langle\boldsymbol a_f +\boldsymbol a_g|\boldsymbol a'_f + \boldsymbol a'_g\rangle\ \, \hcomp\ \, (f + g)\ \, \hcomp\ \, \langle\boldsymbol b'_f + \boldsymbol b'_g|\boldsymbol b_f + \boldsymbol b_g\rangle
\\
&= (\langle\boldsymbol a_f|\boldsymbol a'_f\rangle + \langle\boldsymbol a_g|\boldsymbol a'_g\rangle)\,\hcomp\, (f + g)\, \hcomp\, (\langle\boldsymbol b'_f|\boldsymbol b_f\rangle + \langle\boldsymbol b'_g|\boldsymbol b_g\rangle)
\\
&= (\langle\boldsymbol a_f|\boldsymbol a'_f\rangle\, \hcomp\, f\, \hcomp\, \langle\boldsymbol b'_f|\boldsymbol b_f\rangle) + (\langle\boldsymbol a_g|\boldsymbol a'_g\rangle\, \hcomp\, g\, \hcomp\, \langle\boldsymbol b'_g|\boldsymbol b_g\rangle)
\\
&= \Gamma(\langle\boldsymbol a_f][\boldsymbol a'_f\rangle\,f \,\langle\boldsymbol b'_f][\boldsymbol b_f\rangle)\, + \,\Gamma(\langle\boldsymbol a_g][\boldsymbol a'_g\rangle\,g \,\langle\boldsymbol b'_g][\boldsymbol b_g\rangle) 
\end{align*}

\begin{align*}
\Delta(\langle \boldsymbol x| \boldsymbol x'\rangle) 
&= \langle\boldsymbol a]\ [\boldsymbol a\rangle\, \langle\boldsymbol x| \boldsymbol x'\rangle \,\langle\boldsymbol b]\ [\boldsymbol b\rangle
\\
&= \langle\boldsymbol a](\,[\boldsymbol a|\boldsymbol x]\,\hcomp[\boldsymbol x'|\boldsymbol b]\,)[\boldsymbol b\rangle
\\
&= \langle\boldsymbol a | \boldsymbol a\rangle \hcomp \langle \boldsymbol x]\ \,[\boldsymbol x'|\boldsymbol b]\ \,[\boldsymbol b\rangle
\\
&= \langle \boldsymbol x]\,[\boldsymbol x'|\boldsymbol b]\,[\boldsymbol b\rangle
\\
&= \langle \boldsymbol x| \boldsymbol x'\rangle \hcomp \langle \boldsymbol b|\boldsymbol b\rangle
\\
&= \langle \boldsymbol x| \boldsymbol x'\rangle
\end{align*}

\begin{align*}
\Gamma(\langle \boldsymbol a| \boldsymbol a'\rangle) 
&= \Gamma(\langle\boldsymbol a][\boldsymbol a\rangle\, \langle\boldsymbol a| \boldsymbol a'\rangle \,\langle\boldsymbol a'][\boldsymbol a'\rangle) 
\\
&= \langle\boldsymbol a | \boldsymbol a\rangle \hcomp \langle\boldsymbol a| \boldsymbol a'\rangle \hcomp \langle\boldsymbol a' | \boldsymbol a'\rangle
\\
&= \langle \boldsymbol a| \boldsymbol a'\rangle
\end{align*}

\begin{ak}Where do we use that $\mathcal S$ satisfies naturality? The proof that $\Gamma$ preserves equations can only work under this condition ... so we must use use it in hte proof ... \end{ak}
\begin{sam}
Should this be where we define $\ORD,\NOM$? I thought they had to preserve equations of $\PROP,\nPROP$ ... or? \begin{ak} just think about what happens if we allow $\mathcal S$ to be a $\PROP$ in the sense of Lack (no naturality required) ... then we still have an adjunction, $\Delta$ is still well defined, but $\Gamma$ is not ... \end{ak}
\end{sam}
\end{proof} 
\end{longVersion}


\begin{proposition}
For each  $\nPROP$ $\mathcal T$, there is an isomorphism of $\nPROP$s, natural in $\mathcal T$, 
\[
\NOM(\ORD(\mathcal T))\to\mathcal T\]
mapping the $[\boldsymbol c\rangle\langle \boldsymbol a]\,f\,[\boldsymbol b\rangle\langle\boldsymbol d]$ generated by an $f:\underline{\boldsymbol a}\to\underline{\boldsymbol b}$ in $\mathcal T$ to $[\boldsymbol c|\boldsymbol a] \, ; f \, ; \,[\boldsymbol b|\boldsymbol d]\,$.
\end{proposition}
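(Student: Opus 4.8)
The plan is to exhibit the stated assignment together with an explicit inverse and to verify that both are morphisms of $\nPROP$s, mirroring the construction of the isomorphism $\mathcal S\to\ORD(\NOM(\mathcal S))$ in the previous proposition. Write $\Phi$ for the stated map $[\boldsymbol c\rangle\langle\boldsymbol a]f[\boldsymbol b\rangle\langle\boldsymbol d]\mapsto[\boldsymbol c|\boldsymbol a]\hcomp f\hcomp[\boldsymbol b|\boldsymbol d]$, where $f:\underline{\boldsymbol a}\to\underline{\boldsymbol b}$ ranges over arrows of $\mathcal T$. For the inverse I would take $\Psi:\mathcal T\to\NOM(\ORD(\mathcal T))$ sending an arrow $f:A\to B$ of $\mathcal T$ to $[\boldsymbol a\rangle\langle\boldsymbol a]f[\boldsymbol b\rangle\langle\boldsymbol b]$ for arbitrary enumerations $\boldsymbol a,\boldsymbol b$ of $A,B$. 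Before this makes sense I must note the normal-form fact that, by the first two equations of Proposition~\ref{prop:ORD} together with the freedom to choose disjoint fresh lists, every arrow of $\ORD(\mathcal T)$ is equal to a single bracketed arrow $\langle\boldsymbol a]f[\boldsymbol b\rangle$, so that every generator of $\NOM(\ORD(\mathcal T))$ indeed has the doubly-bracketed shape on which $\Phi$ is defined.

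The two composites are then short calculations. For $\Phi\circ\Psi$ one obtains $[\boldsymbol a|\boldsymbol a]\hcomp f\hcomp[\boldsymbol b|\boldsymbol b]=f$, using $[\boldsymbol a|\boldsymbol a]=\biguplus_i\delta_{a_ia_i}=\id$. For $\Psi\circ\Phi$ applied to $[\boldsymbol c\rangle\langle\boldsymbol a]f[\boldsymbol b\rangle\langle\boldsymbol d]$, I would choose the enumerations $\boldsymbol c,\boldsymbol d$ and reduce $\langle\boldsymbol c]\,([\boldsymbol c|\boldsymbol a]\hcomp f\hcomp[\boldsymbol b|\boldsymbol d])\,[\boldsymbol d\rangle$ inside $\ORD(\mathcal T)$ back to $\langle\boldsymbol a]f[\boldsymbol b\rangle$: the last two equations of Proposition~\ref{prop:ORD} push the renamings $[\boldsymbol c|\boldsymbol a]$ and $[\boldsymbol b|\boldsymbol d]$ out of the brackets as $\PROP$-symmetries $\langle\boldsymbol c|\boldsymbol c\rangle$ and $\langle\boldsymbol d|\boldsymbol d\rangle$, which are identities, so that $\Psi(\Phi(x))=[\boldsymbol c\rangle\langle\boldsymbol a]f[\boldsymbol b\rangle\langle\boldsymbol d]=x$. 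As in the previous proposition, this computation simultaneously shows that $\Psi$ does not depend on the chosen enumerations, via the derivable identities of the Remark (dualised to the $[\,\cdot\,\rangle$ brackets).

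It remains to check that $\Phi$ is a well-defined morphism of $\nPROP$s: that it respects the equations of Proposition~\ref{prop:NOM} instantiated at $\ORD(\mathcal T)$, those of Proposition~\ref{prop:ORD} instantiated at $\mathcal T$, and the internal-monoidal and nominal-set axioms of Figs~\ref{fig:monoidal-category} and~\ref{fig:nominal-set}; and that it preserves $\hcomp$, $\uplus$, identities and the permutation action. Preservation of the two compositions and of identities is a direct expansion into doubly-bracketed form followed by the defining equations of $\mathcal T$, exactly along the lines of the preservation proofs in the previous proposition; equivariance holds because a permutation acts on a bracketed term only through the renamings, and $\pi\cdot\delta_{ab}=\delta_{\pi a\,\pi b}$.

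The step I expect to be the main obstacle is the same delicate point that already arose in Propositions~\ref{prop:NOM} and~\ref{prop:ORD}: the $\PROP$-symmetries of $\ORD(\mathcal T)$ must translate consistently into renamings of $\mathcal T$, so that the commutative $\uplus$ of $\mathcal T$ correctly absorbs the symmetric-but-non-commutative $\oplus$ of $\ORD(\mathcal T)$. Concretely, checking that $\Phi$ carries the $\NOM$-equation $[\boldsymbol a\rangle\langle\boldsymbol b|\boldsymbol b'\rangle\hcomp f\langle\boldsymbol c]=[\boldsymbol a|\boldsymbol b]\hcomp[\boldsymbol b'\rangle f\langle\boldsymbol c]$ to a genuine identity in $\mathcal T$ is where a symmetry is converted into a composite of $\delta$'s, and this is where I would spend most care. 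Once well-definedness is secured, naturality in $\mathcal T$ is routine: for an $\nPROP$-morphism $F$ one verifies on the doubly-bracketed generators that $\Phi\circ\NOM(\ORD(F))=F\circ\Phi$, which reduces to $F$ commuting with $\hcomp$ and fixing the renamings $[\,\cdot\,|\,\cdot\,]$, both of which hold since $F$ is equivariant, strict monoidal and the identity on objects.
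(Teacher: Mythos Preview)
Your proposal is correct and follows essentially the same approach as the paper: define the inverse $\Psi$ (the paper calls it $\Delta_{\mathsf n}$) by $f\mapsto[\boldsymbol a\rangle\langle\boldsymbol a]\,f\,[\boldsymbol b\rangle\langle\boldsymbol b]$, verify both round-trips by the equational calculations you describe, and check preservation of $\hcomp$, $\uplus$ and the renamings/symmetries. Your added remarks on well-definedness of $\Phi$ with respect to the defining equations of $\NOM$ and $\ORD$, and on naturality in $\mathcal T$, go slightly beyond what the paper spells out explicitly but are exactly the verifications one expects to fill in.
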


\begin{longVersion}
\begin{proof}
We define a converse $\Delta_{\mathsf n} : \mathcal T\to \NOM(\ORD(\mathcal T))$ mapping $f:\underline{\boldsymbol a}\to\underline{\boldsymbol b}$ to $[\boldsymbol a\rangle\langle \boldsymbol a] \,f\,[\boldsymbol b\rangle\langle\boldsymbol b]$ for some choice of $\boldsymbol a, \boldsymbol b$.

\medskip\noindent
We now verify that $\Gamma_{\mathsf n} (\Delta_{\mathsf n} (f)) = f$ for any $f$:
\begin{align*}
\Gamma_{\mathsf n} (\Delta_{\mathsf n} (f))
& = \Gamma_{\mathsf n} ([\boldsymbol a\rangle\langle \boldsymbol a]\,f\,[\boldsymbol b\rangle\langle\boldsymbol b])
\\
& = [\boldsymbol a|\boldsymbol a] \, \hcomp \,f \, \hcomp \,[\boldsymbol b|\boldsymbol b]
\\
& = f
\end{align*}
and

\begin{align*}
\Delta_{\mathsf n} (\Gamma_{\mathsf n} ([\boldsymbol c\rangle\langle\boldsymbol a]\,f\,[\boldsymbol b\rangle\langle\boldsymbol d]))
& = \Delta (\,[\boldsymbol c|\boldsymbol a] \hcomp f \hcomp [\boldsymbol b|\boldsymbol d]\,)
\\
& = [\boldsymbol c\rangle(\,\langle\boldsymbol c](\ [\boldsymbol c|\boldsymbol a] \hcomp f \hcomp [\boldsymbol b|\boldsymbol d]\,)[\boldsymbol d\rangle\,)\langle\boldsymbol d]
\\
& = [\boldsymbol c\rangle(\,\langle\boldsymbol c | \boldsymbol c\rangle\hcomp\langle\boldsymbol a](\,f \hcomp [\boldsymbol b|\boldsymbol d]\,)[\boldsymbol d\rangle\,)\langle\boldsymbol d]
\\
& = [\boldsymbol c\rangle(\,\langle\boldsymbol a](\,f \hcomp [\boldsymbol b|\boldsymbol d]\,)[\boldsymbol d\rangle\,)\langle\boldsymbol d]
\\
& = [\boldsymbol c\rangle(\,\langle\boldsymbol a]\ f\ [\boldsymbol b\rangle\,\hcomp\,\langle\boldsymbol d | \boldsymbol d\rangle\,)\langle\boldsymbol d]
\\
& = [\boldsymbol c\rangle\langle\boldsymbol a]\,f\,[\boldsymbol b\rangle\langle\boldsymbol d ]
\end{align*}

\medskip\noindent
We also show that $\Delta_{\mathsf n}$ (or equivalently $\Gamma_{\mathsf n}$\begin{ak} is it true that this is equivalent?\end{ak}) preserves the two kinds of composition and symmetries:
\begin{align*}
\Delta_{\mathsf n}(f\hcomp g) &= [\boldsymbol a\rangle(\,\langle\boldsymbol a] f\hcomp g [\boldsymbol b\rangle\,)\langle\boldsymbol b]
\\
&= [\boldsymbol a\rangle(\,\langle\boldsymbol a] f[\boldsymbol c\rangle\hcomp \langle\boldsymbol c]g[\boldsymbol b\rangle\,)\langle\boldsymbol b]
\\
&= [\boldsymbol a\rangle\langle\boldsymbol a]\, f\,[\boldsymbol c\rangle\langle\boldsymbol c]\,\hcomp\, [\boldsymbol c\rangle\langle\boldsymbol c]\, g\,[\boldsymbol b\rangle\langle\boldsymbol b]
\\
&= \Delta_{\mathsf n}(f) \hcomp \Delta_{\mathsf n}(g)
\end{align*}

\begin{align*}
\Gamma_{\mathsf n}([\boldsymbol c\rangle\langle\boldsymbol a]\, f \,[\boldsymbol b\rangle\langle\boldsymbol d] \hcomp [\boldsymbol d'\rangle\langle\boldsymbol e]\, g \,[\boldsymbol f\rangle\langle\boldsymbol h]) 
&= \Gamma_{\mathsf n}([\boldsymbol c\rangle\langle\boldsymbol a]\, f \,[\boldsymbol b\rangle \,\hcomp\, \langle\boldsymbol d|\boldsymbol d'\rangle \,\hcomp \langle\boldsymbol e]\, g \,[\boldsymbol f\rangle\langle\boldsymbol h]) 
\\
&= \Gamma_{\mathsf n}([\boldsymbol c\rangle\langle\boldsymbol a]\,(f \hcomp [\boldsymbol b |\boldsymbol d])\,[\boldsymbol d'\rangle \hcomp \langle\boldsymbol e]\, g \,[\boldsymbol f\rangle\langle\boldsymbol h]) 
\\
&= \Gamma_{\mathsf n}([\boldsymbol c\rangle\langle\boldsymbol a]\,(f \hcomp [\boldsymbol b |\boldsymbol d] \hcomp [\boldsymbol d'|\boldsymbol e]\,\hcomp\, g)\,[\boldsymbol f\rangle\langle\boldsymbol h])
\\
&= [\boldsymbol c|\boldsymbol a]\hcomp f \hcomp [\boldsymbol b |\boldsymbol d] \hcomp [\boldsymbol d'|\boldsymbol e] \hcomp g \hcomp [\boldsymbol f|\boldsymbol h]
\\
&= \Gamma_{\mathsf n}([\boldsymbol c\rangle\langle\boldsymbol a]\, f \,[\boldsymbol b\rangle\langle\boldsymbol d]) \hcomp \Gamma_{\mathsf n}([\boldsymbol d'\rangle\langle\boldsymbol e]\, g \,[\boldsymbol f\rangle\langle\boldsymbol h]) 
\end{align*}

\begin{align*}
\Delta_{\mathsf n}(f + g) &= [\boldsymbol a_f + \boldsymbol a_g\rangle\,\langle\boldsymbol a_f + \boldsymbol a_g]\,f + g\,[\boldsymbol b_f + \boldsymbol b_g\rangle\ \langle\boldsymbol b_f + \boldsymbol b_g]
\\
&= [\boldsymbol a_f + \boldsymbol a_g\rangle(\,\langle\boldsymbol a_f]\, f\,[\boldsymbol b_f\rangle\,\uplus\,\langle\boldsymbol a_g]\, g\,[\boldsymbol b_g\rangle\,)\langle\boldsymbol b_f + \boldsymbol b_g]
\\
&= [\boldsymbol a_f \rangle\langle\boldsymbol a_f]\, f \,[\boldsymbol b_f\rangle\langle\boldsymbol b_f] \,+\, [\boldsymbol a_g\rangle\langle\boldsymbol a_g]\, g\,[\boldsymbol b_g\rangle\langle\boldsymbol b_g]
\\
&= \Delta_{\mathsf n}(f) + \Delta_{\mathsf n}(g)
\end{align*}

\begin{align*}
\Gamma_{\mathsf n}([\boldsymbol a'_f\rangle\langle\boldsymbol a_f]\,f \,[\boldsymbol b_f\rangle\langle\boldsymbol b'_f] \uplus [\boldsymbol a'_g\rangle\langle\boldsymbol a_g]\,g \,[\boldsymbol b_g\rangle\langle\boldsymbol b'_g]) 
&= \Gamma_{\mathsf n}([\boldsymbol a'_f + \boldsymbol a'_g\rangle(\,\langle\boldsymbol a_f]\,f \,[\boldsymbol b_f\rangle\ +\ \langle\boldsymbol a_g]\,g \,[\boldsymbol b_g\rangle\,)\langle\boldsymbol b'_f + \boldsymbol b'_g])
\\
&= \Gamma_{\mathsf n}([\boldsymbol a'_f + \boldsymbol a'_g\rangle(\,\langle\boldsymbol a_f + \boldsymbol a_g]\,f \uplus g\,[\boldsymbol b_f + \boldsymbol b_g\rangle\,)\langle\boldsymbol b'_f + \boldsymbol b'_g])
\\
&= [\boldsymbol a'_f +\boldsymbol a'_g|\boldsymbol a_f + \boldsymbol a_g]\  \hcomp\  (f \uplus g)\  \hcomp\  [\boldsymbol b_f + \boldsymbol b_g|\boldsymbol b'_f + \boldsymbol b'_g]
\\
&= ([\boldsymbol a'_f|\boldsymbol a_f] + [\boldsymbol a'_g|\boldsymbol a_g])\,\hcomp\, (f \uplus g)\, \hcomp\, ([\boldsymbol b_f|\boldsymbol b'_f] + [\boldsymbol b_g|\boldsymbol b'_g])
\\
&= ([\boldsymbol a'_f|\boldsymbol a_f]\, \hcomp\, f\, \hcomp\, [\boldsymbol b_f|\boldsymbol b'_f]) + ([\boldsymbol a'_g|\boldsymbol a_g]\, \hcomp\, g\, \hcomp\, [\boldsymbol b_g|\boldsymbol b'_g])
\\
&= \Gamma_{\mathsf n}([\boldsymbol a'_f\rangle\langle\boldsymbol a_f]\,f \,[\boldsymbol b_f\rangle\langle\boldsymbol b'_f])\, + \,\Gamma_{\mathsf n}([\boldsymbol a'_g\rangle\langle\boldsymbol a_g]\,g \,[\boldsymbol b_g\rangle\langle\boldsymbol b'_g])
\end{align*}

\begin{align*}
\Delta_{\mathsf n}([\boldsymbol x| \boldsymbol y]) 
&= [\boldsymbol x\rangle\langle\boldsymbol x]\, [\boldsymbol x| \boldsymbol y] \,[\boldsymbol y\rangle\langle\boldsymbol y]
\\
&= [\boldsymbol x\rangle(\,\langle\boldsymbol x|\boldsymbol x\rangle\,\hcomp\,\langle\boldsymbol y|\boldsymbol y\rangle\,)\langle\boldsymbol y]
\\
&= [\boldsymbol x\rangle\, \id \,\langle\boldsymbol y]
\\
&= [\boldsymbol x| \boldsymbol y]
\end{align*}

\begin{align*}
\Gamma_{\mathsf n}([\boldsymbol a| \boldsymbol b]) 
&= \Gamma([\boldsymbol a\rangle\langle\boldsymbol a]\, [\boldsymbol a| \boldsymbol b] \,[\boldsymbol b\rangle\langle\boldsymbol b]) 
\\
&= [\boldsymbol a | \boldsymbol a] \hcomp[\boldsymbol a| \boldsymbol b] \hcomp [\boldsymbol b | \boldsymbol b]
\\
&= [\boldsymbol a| \boldsymbol b]
\end{align*}
\end{proof}
\end{longVersion}


Since the last two propositions provide an isomorphic unit and counit of an adjunction, we obtain

\begin{theorem}\label{thm:equivalence}
The categories $\PROP$ and $\nPROP$ are equivalent.
\end{theorem}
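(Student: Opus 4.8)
The plan is to read this statement as an immediate consequence of the two preceding propositions, using the standard criterion that a pair of functors together with natural isomorphisms relating each composite to the relevant identity functor constitutes an equivalence of categories. No new construction is required; the task is to assemble the pieces already in hand and name them correctly.

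First I would fix the two functors $\NOM:\PROP\to\nPROP$ and $\ORD:\nPROP\to\PROP$ established above. The proposition exhibiting the isomorphism $\mathcal S\to\ORD(\NOM(\mathcal S))$, natural in $\mathcal S$, is precisely a natural isomorphism $\eta:\Id_{\PROP}\to\ORD\circ\NOM$, and the proposition exhibiting the isomorphism $\NOM(\ORD(\mathcal T))\to\mathcal T$, natural in $\mathcal T$, is a natural isomorphism $\epsilon:\NOM\circ\ORD\to\Id_{\nPROP}$. I would stress that naturality in $\mathcal S$ and $\mathcal T$, which those propositions supply, is exactly what promotes the object-wise isomorphisms to natural transformations; this is the only nontrivial input. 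With $\eta$ and $\epsilon$ both invertible, the general fact that a pair of functors admitting such natural isomorphisms is mutually quasi-inverse yields $\PROP\simeq\nPROP$, with $\NOM$ left adjoint to $\ORD$.

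If one wishes to honour the stronger reading suggested by the phrase ``unit and counit of an adjunction'' and exhibit $(\NOM,\ORD,\eta,\epsilon)$ as an \emph{adjoint} equivalence, I would either verify the two triangle identities directly---each unwinds, on a generator $[\boldsymbol a\rangle f\langle\boldsymbol b]$ respectively $\langle\boldsymbol a]f[\boldsymbol b\rangle$, into a routine application of the defining equations of $\NOM$ and $\ORD$---or invoke the standard result that any equivalence witnessed by natural isomorphisms can be refined to an adjoint equivalence by adjusting one of the two isomorphisms. For the bare statement of the theorem the former detour is unnecessary.

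The main obstacle is essentially discharged already, since establishing well-definedness, functoriality, and the natural-isomorphism property of $\eta$ and $\epsilon$ constitutes the bulk of the work done in the preceding propositions. The only point deserving a line of care is the arbitrary choice of lists $\boldsymbol a,\boldsymbol b$ implicit in the definitions of $\eta$ and $\epsilon$; here I would point back to the computations in those propositions showing that different choices agree modulo the equations of $\NOM(\mathcal S)$ and $\mathcal T$, so that $\eta$ and $\epsilon$ are genuinely well defined and the equivalence does not depend on these choices.
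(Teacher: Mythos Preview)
Your proposal is correct and matches the paper's own proof, which is the one-line observation that the two preceding propositions supply natural isomorphisms $\Id_{\PROP}\cong\ORD\circ\NOM$ and $\NOM\circ\ORD\cong\Id_{\nPROP}$, hence an equivalence. Your additional remarks on upgrading to an adjoint equivalence via the triangle identities go slightly beyond what the paper records, but are in the same spirit (the authors themselves note in a comment that the triangle equalities remain to be checked).
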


\begin{remark}
If we generalise the notion of prop from MacLane~\cite{maclane} to Lack~\cite{lack}, in other words, if we drop the last equation of Fig~\ref{fig:symmetric-monoidal-category} expressing the naturality of symmetries, we still obtain an adjunction, in which $\NOM$ is left-adjoint to $\ORD$. Nominal props then are a full reflective subcategory of ordinary props. In other words, the (generalised) props $\mathcal S$ that satsify naturality of symmetries are exactly those for which $\mathcal S\cong \ORD(\NOM(\mathcal S))$.
\end{remark}

\begin{gray}
\begin{proof}
... is there anything left to prove? ... yes the triangle equalities
\end{proof}
\end{gray}

\begin{gray}
\section{A transfer theorem}

THIS SECTION IS EXPERIMENTAL ... NOT SURE HOW TO MAKE THE TRANSFER OF COMPLETENESS PRECISE ...

Above we have shown that the categories $\PROP$ and $\nPROP$ are equivalent. Here we will show in which sense each $\PROP$ is equivalent to an $\nPROP$ and vices versa. In other words, for each $\PROP$ $\mathcal S$ and each $\nPROP$ $\mathcal T$ we will study morphisms 
\[\NOM(\mathcal S)\to\mathcal S \quad\quad\quad\quad \mathcal T\to\ORD(\mathcal T)\]
that are equivalences of categories and ``weak'' equivalences of internal monoidal categories.  As a corollary we will obtain a theorem that allows us to transfer completeness results between $\PROP$s and $\nPROP$s. In particular, the completeness of the calculi of Section~\ref{} will follow.

\medskip\noindent
In order to define functors above we need to introduce the notion of local enumeration. The idea comes from the observation that the above equivalences are variations of the equivalence between the category of finite sets and its skeleton (with natural numbers as objects). The equivalences between these categories are in bijection with local enumerations of names:

\begin{definition} A \emph{local enumeration of names} $\boldsymbol e$ is a function mapping each finite set of names $A$ to a list $\boldsymbol e_A=\boldsymbol a = (a_1,\ldots a_n)$. 
\end{definition}

\begin{proposition}
Let $\boldsymbol e$ be a local enumeration of names. For each $\nPROP$ $\mathcal T$, $\boldsymbol e$ determines an internal functor 
\[\boldsymbol e:\mathcal T\to\ORD(\mathcal T)\]
also denoted by  $\boldsymbol e$, mapping $f:A\to B$ to $\langle\boldsymbol e_A] f[\boldsymbol e_B\rangle :\underline n\
\to \underline m$ where $n,m$ are the cardinalities of $A,B$, respectively. 
\end{proposition}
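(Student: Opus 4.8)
The plan is to verify the two defining properties of a functor — preservation of identities and of composition — directly from the equational presentation of $\ORD(\mathcal T)$ given in Proposition~\ref{prop:ORD}. First I would record that $\boldsymbol e$ is well defined on objects and arrows: a local enumeration assigns to each finite set $A$ a repetition-free list $\boldsymbol e_A$ enumerating $A$, so $n=|A|$ is exactly the length of $\boldsymbol e_A$, and hence $\langle\boldsymbol e_A]f[\boldsymbol e_B\rangle$ is a well-typed arrow $\underline n\to\underline m$ of $\ORD(\mathcal T)$ for every $f:A\to B$.

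For identities I would compute
\[
\boldsymbol e(\id_A)=\langle\boldsymbol e_A]\,\id_A\,[\boldsymbol e_A\rangle=\id,
\]
which is an instance of the third equation of Proposition~\ref{prop:ORD}; note that this equation applies precisely because the \emph{same} list $\boldsymbol e_A$ occurs at both ends, which is exactly what $\boldsymbol e$ produces. For composition, given $f:A\to B$ and $g:B\to C$, I would use the first equation of Proposition~\ref{prop:ORD} with the intermediate list chosen to be $\boldsymbol e_B$:
\begin{align*}
\boldsymbol e(f\hcomp g)
&=\langle\boldsymbol e_A]\,f\hcomp g\,[\boldsymbol e_C\rangle\\
&=\langle\boldsymbol e_A]\,f\,[\boldsymbol e_B\rangle\hcomp\langle\boldsymbol e_B]\,g\,[\boldsymbol e_C\rangle\\
&=\boldsymbol e(f)\hcomp\boldsymbol e(g).
\end{align*}
This establishes functoriality. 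For the monoidal part of the structure, the second equation of Proposition~\ref{prop:ORD} gives $\boldsymbol e(f\uplus g)=\boldsymbol e(f)\oplus\boldsymbol e(g)$ under the canonical identification $\underline{|A\uplus A'|}=\underline{|A|}\oplus\underline{|A'|}$, and preservation of the unit is immediate since $\boldsymbol e_\varnothing$ is the empty list.

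The step requiring genuine care — and the reason $\boldsymbol e$ is only a \emph{weak} equivalence of internal monoidal categories rather than a strict equivariant internal functor — is its interaction with the permutation action. Using $\pi\cdot f=(\pi_A)^{-1}\hcomp f\hcomp\pi_B$ together with the fourth and fifth equations of Proposition~\ref{prop:ORD} to slide the renamings $\pi_A,\pi_B$ (written in the form $[\,\cdot\,|\,\cdot\,]$) out to the boundary, I expect to obtain
\[
\boldsymbol e(\pi\cdot f)=\langle\boldsymbol e_{\pi\cdot A}\,|\,\pi\cdot\boldsymbol e_A\rangle\hcomp\boldsymbol e(f)\hcomp\langle\pi\cdot\boldsymbol e_B\,|\,\boldsymbol e_{\pi\cdot B}\rangle.
\]
Since $\boldsymbol e_{\pi\cdot A}$ need not equal $\pi\cdot\boldsymbol e_A$, the conjugating symmetries are in general nontrivial, so $\boldsymbol e$ commutes with the action only up to these canonical symmetries. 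This is the main obstacle to a cleaner statement, and it is exactly what the word ``weak'' is meant to absorb; checking that these symmetries are coherent — natural in $f$ and compatible with both $\hcomp$ and $\uplus$ — is the only nonroutine bookkeeping, and it mirrors the classical fact that the chosen-enumeration functor from finite sets to its skeleton is an equivalence but not an isomorphism.
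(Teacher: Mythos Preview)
Your argument for functoriality is correct and is exactly what the paper has in mind: its proof is the single sentence ``That $\boldsymbol e$ is a functor follows immediately from the definition of $\ORD(\mathcal T)$,'' and your use of the first and third equations of Proposition~\ref{prop:ORD} is precisely that unfolding.

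One point deserves tightening. Your sentence ``the second equation of Proposition~\ref{prop:ORD} gives $\boldsymbol e(f\uplus g)=\boldsymbol e(f)\oplus\boldsymbol e(g)$ under the canonical identification $\underline{|A\uplus A'|}=\underline{|A|}\oplus\underline{|A'|}$'' hides a genuine discrepancy: the second equation applies with the list $\boldsymbol e_A+\boldsymbol e_{A'}$ on the left, whereas $\boldsymbol e(f\uplus g)$ uses $\boldsymbol e_{A\uplus A'}$, and a local enumeration gives no reason for these to coincide. So what you actually obtain is
\[
\boldsymbol e(f\uplus g)=\langle\boldsymbol e_{A\uplus A'}\,|\,\boldsymbol e_A+\boldsymbol e_{A'}\rangle\;\hcomp\;\bigl(\boldsymbol e(f)\oplus\boldsymbol e(g)\bigr)\;\hcomp\;\langle\boldsymbol e_B+\boldsymbol e_{B'}\,|\,\boldsymbol e_{B\uplus B'}\rangle,
\]
i.e.\ preservation of $\uplus$ only up to a symmetry --- the same phenomenon you correctly diagnose for the permutation action. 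You clearly understand this (your later remark that $\boldsymbol e$ is only a \emph{weak} equivalence is exactly the point), but the phrase ``canonical identification'' undersells it: the coherence isomorphism lives in $\ORD(\mathcal T)$, not merely at the level of objects. Since the proposition as stated only claims functoriality, this does not affect the proof proper; it is just worth stating the monoidal comparison with the same care you gave the equivariance comparison.
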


\begin{proof}
That $\boldsymbol e$ is a functor follows immediately from the definition of $\ORD(\mathcal T)$ (CHECK).
\end{proof}

\begin{proposition}
The diagram 
\[\xymatrix@C=50pt{
\mathcal T\ar[r]^F \ar[d]_{\boldsymbol e}& \mathcal T'\ar[d]^{\boldsymbol e}\\
\ORD(\mathcal T)\ar[r]_{\ORD(F)} & \ORD(\mathcal T')
}\]
commutes for all local enumeration of names $\boldsymbol e$ and all $F:\mathcal T\to\mathcal T'$ in $\nPROP$. 

\end{proposition}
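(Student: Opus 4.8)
The plan is to show that the two composite functors $\boldsymbol e\circ F$ and $\ORD(F)\circ\boldsymbol e$ from $\mathcal T$ to $\ORD(\mathcal T')$ coincide, by evaluating both on an arbitrary object and an arbitrary arrow of $\mathcal T$ and invoking the defining formulas of the two immediately preceding propositions. Since both composites are genuine functors, agreement on objects and on arrows is all that is required.

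First I would treat objects. Because $F$ is a morphism of $\nPROP$s it is the identity on objects, and because $\ORD(F)$ is a morphism of $\PROP$s it too is the identity on objects. The enumeration $\boldsymbol e$ sends a finite set $A$ to $\underline n$ with $n=|A|$. Hence both routes send $A$ to $\underline{|A|}$, so the square commutes on objects. Next I would treat arrows. Fix $f:A\to B$ in $\mathcal T$. Travelling down then right gives $\ORD(F)(\boldsymbol e(f))=\ORD(F)(\langle\boldsymbol e_A]\,f\,[\boldsymbol e_B\rangle)=\langle\boldsymbol e_A]\,Ff\,[\boldsymbol e_B\rangle$, using first the definition of $\boldsymbol e$ and then the definition of $\ORD(F)$. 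Travelling right then down gives $\boldsymbol e(F(f))=\boldsymbol e(Ff)=\langle\boldsymbol e_A]\,Ff\,[\boldsymbol e_B\rangle$, where the last step uses that $Ff$ again has domain $A$ and codomain $B$, so that $\boldsymbol e$ applies the very same enumerations $\boldsymbol e_A,\boldsymbol e_B$. The two expressions are literally identical, which establishes commutativity on arrows.

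The only point requiring any care is that both routes use the \emph{same} lists $\boldsymbol e_A$ and $\boldsymbol e_B$; this is guaranteed precisely because $F$ preserves objects on the nose, so that passing from $f$ to $Ff$ leaves the domain and codomain \emph{sets} unchanged and hence does not alter which enumerations $\boldsymbol e$ selects. There is therefore no real obstacle: the statement is a formal consequence of the two definitions, and in particular no equational reasoning inside $\mathcal T'$ or inside $\ORD(\mathcal T')$ is needed. I would keep the write-up to the two short displayed computations above, noting explicitly where equivariance is \emph{not} required, in contrast to the earlier naturality arguments.
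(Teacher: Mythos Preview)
Your proof is correct and is exactly the straightforward verification the paper has in mind: the paper's own proof is only a placeholder (``$\ORD(F)$ was defined in \ldots\ now \ldots\ DETAILS NEEDED (but should be straightforward)''), and you have supplied precisely those details by unwinding the definitions of $\boldsymbol e$ and of $\ORD(F)$ on objects and arrows. The key observation you isolate---that $F$ being identity-on-objects forces the same enumerations $\boldsymbol e_A,\boldsymbol e_B$ to be used on both legs---is indeed the only point worth making explicit.
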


\begin{proof}
$\ORD(F)$ was defined in ... as ... now ... DETAILS NEEDED (but should be straightforward)
\end{proof}

\begin{proposition}
Let $\boldsymbol e$ be a local enumeration of names.
Then for each $\PROP$ $\mathcal S$, $\boldsymbol e$ determines a functor 
\[\boldsymbol e:\NOM(\mathcal 
S)\to\mathcal S\]
\begin{gray} I need to think more about this one ...
also denoted by  $\boldsymbol e$, mapping $f:A\to B$ to $\langle\boldsymbol e_A] f[\boldsymbol e_B\rangle :\underline n\
\to \underline m$ where $n,m$ are the cardinalities of $A,B$, respectively. 

$\NOM(\mathcal S)\to\ORD(\NOM(\mathcal S))\to\mathcal S$. Moreover, this functor are strong monoidal and equivariant (as internal functors in $\Cat(\Nom,1,\ast)$).
\end{gray}
\end{proposition}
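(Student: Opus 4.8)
The plan is to obtain $\boldsymbol e:\NOM(\mathcal S)\to\mathcal S$ as the composite
\[
\NOM(\mathcal S)\xrightarrow{\ \boldsymbol e\ }\ORD(\NOM(\mathcal S))\xrightarrow{\ \Gamma\ }\mathcal S,
\]
in which the first arrow is the functor $\boldsymbol e:\mathcal T\to\ORD(\mathcal T)$ of the preceding proposition, instantiated at the $\nPROP$ $\mathcal T=\NOM(\mathcal S)$, and the second arrow is the inverse $\Gamma$ of the isomorphism $\mathcal S\cong\ORD(\NOM(\mathcal S))$ proved above. Since a composite of functors is again a functor, setting $\boldsymbol e:=\Gamma\circ\boldsymbol e$ immediately yields a well-defined functor; in particular, none of the defining equations of $\NOM(\mathcal S)$ has to be verified separately, as they are already respected by each factor.

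First I would spell out the action of the composite, so that the statement stands on its own. On objects, the first functor sends a finite set $A$ to $\underline{|A|}$, and the identity-on-objects $\PROP$-morphism $\Gamma$ leaves it unchanged, so $\boldsymbol e(A)=\underline{|A|}$. On a generating arrow $[\boldsymbol a\rangle g\langle\boldsymbol b]$ with $g:\underline n\to\underline m$ in $\mathcal S$, $A=\underline{\boldsymbol a}$ and $B=\underline{\boldsymbol b}$, the first functor produces $\langle\boldsymbol e_A]\,[\boldsymbol a\rangle g\langle\boldsymbol b]\,[\boldsymbol e_B\rangle$, which is exactly of the shape to which $\Gamma$ applies. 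Unfolding the definition of $\Gamma$ then gives
\[
\boldsymbol e\bigl([\boldsymbol a\rangle g\langle\boldsymbol b]\bigr)=\langle\boldsymbol e_A\,|\,\boldsymbol a\rangle\hcomp g\hcomp\langle\boldsymbol b\,|\,\boldsymbol e_B\rangle,
\]
an arrow $\underline n\to\underline m$ of $\mathcal S$. This is the expected map: it realigns the ad hoc enumerations $\boldsymbol a,\boldsymbol b$ appearing in the bracket to the canonical enumerations $\boldsymbol e_A,\boldsymbol e_B$ prescribed by $\boldsymbol e$.

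The argument rests only on two facts from the earlier development, namely that the preceding proposition genuinely defines a functor $\boldsymbol e:\mathcal T\to\ORD(\mathcal T)$, and that $\Gamma$ is a genuine morphism of $\PROP$s; both are assumed. I therefore do not expect a real obstacle. Should one prefer a direct verification in place of the composite, the only labour is to check that the displayed formula respects the five equations of Proposition~\ref{prop:NOM}: sequential composition, parallel composition and the identity are routine, while the two slightly less trivial cases are those absorbing a symmetry $\langle\boldsymbol b\,|\,\boldsymbol b'\rangle$ into a bracket, where one uses the composition law $\langle\boldsymbol c\,|\,\boldsymbol d\rangle\hcomp\langle\boldsymbol d\,|\,\boldsymbol c'\rangle=\langle\boldsymbol c\,|\,\boldsymbol c'\rangle$ for symmetries of $\mathcal S$. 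Since this content is precisely that of the already-established isomorphism $\mathcal S\cong\ORD(\NOM(\mathcal S))$, the whole proof reduces to bookkeeping of enumerations, and the same composite description makes it transparent that $\boldsymbol e$ can further be upgraded to a strong monoidal, equivariant internal functor.
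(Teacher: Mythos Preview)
Your proposal is correct and follows precisely the route the paper itself sketches in the gray text of the statement, namely the composite $\NOM(\mathcal S)\to\ORD(\NOM(\mathcal S))\to\mathcal S$; the paper's own proof is left as a placeholder (``\ldots''), so you have supplied the details the authors only gestured at. Your unfolding of the composite on generators to $\langle\boldsymbol e_A\,|\,\boldsymbol a\rangle\hcomp g\hcomp\langle\boldsymbol b\,|\,\boldsymbol e_B\rangle$ is exactly what the definition of $\Gamma$ yields, and your observation that well-definedness is inherited from the two factors is the right way to avoid re-checking the five equations of Proposition~\ref{prop:NOM}.
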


\begin{proof}
...
\end{proof} 

\begin{proposition}
Given an arrow $G:\mathcal S\to\mathcal S'$ in $\PROP$ there is a unique arrow $\NOM(G)$ in $\nPROP$ such that 
\[\xymatrix@C=50pt{
\NOM(\mathcal S)\ar[r]^{\NOM(G)} \ar[d]_{\boldsymbol e}& \NOM(\mathcal S')\ar[d]^{\boldsymbol e}\\
\mathcal S\ar[r]_G & \mathcal S'
}\]
commutes for all (equivalently, one) local enumeration of names $\boldsymbol e$.
\end{proposition}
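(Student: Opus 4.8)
The plan is to take $\NOM(G)$ to be the arrow already constructed in the preceding proposition, namely the one with $\NOM(G)([\boldsymbol a\rangle g\langle\boldsymbol b]) = [\boldsymbol a\rangle Gg\langle\boldsymbol b]$, and to show (i) that it makes the square commute for every $\boldsymbol e$ and (ii) that it is the only $\nPROP$-arrow doing so, already for a single $\boldsymbol e$. Throughout I write $\boldsymbol e:\NOM(\mathcal S)\to\mathcal S$ for the functor of the previous proposition, which factors as $\iota_{\mathcal S}^{-1}\circ\boldsymbol e_{\NOM(\mathcal S)}$, where $\boldsymbol e_{\mathcal T}:\mathcal T\to\ORD(\mathcal T)$ is the enumeration functor (natural in $\mathcal T\in\nPROP$) and $\iota_{\mathcal S}:\mathcal S\to\ORD(\NOM(\mathcal S))$ is the natural isomorphism of $\PROP$s established earlier.

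For existence I would avoid recomputing on generators and instead chase the two naturality squares already available. Since $\boldsymbol e_{(-)}$ is natural in its argument, applying it to the $\nPROP$-morphism $\NOM(G):\NOM(\mathcal S)\to\NOM(\mathcal S')$ gives $\boldsymbol e_{\NOM(\mathcal S')}\circ\NOM(G)=\ORD(\NOM(G))\circ\boldsymbol e_{\NOM(\mathcal S)}$, while naturality of $\iota$ in $G$ gives $\iota_{\mathcal S'}^{-1}\circ\ORD(\NOM(G))=G\circ\iota_{\mathcal S}^{-1}$. Composing, $\boldsymbol e\circ\NOM(G)=\iota_{\mathcal S'}^{-1}\circ\boldsymbol e_{\NOM(\mathcal S')}\circ\NOM(G)=G\circ\iota_{\mathcal S}^{-1}\circ\boldsymbol e_{\NOM(\mathcal S)}=G\circ\boldsymbol e$, which is exactly commutativity of the square. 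Equivalently, one may compute directly: $\boldsymbol e([\boldsymbol a\rangle g\langle\boldsymbol b])=\langle\boldsymbol e_A|\boldsymbol a\rangle\hcomp g\hcomp\langle\boldsymbol b|\boldsymbol e_B\rangle$, and since $G$ is the identity on objects and preserves symmetries it fixes $\langle\boldsymbol e_A|\boldsymbol a\rangle$ and $\langle\boldsymbol b|\boldsymbol e_B\rangle$, so both legs send $[\boldsymbol a\rangle g\langle\boldsymbol b]$ to $\langle\boldsymbol e_A|\boldsymbol a\rangle\hcomp Gg\hcomp\langle\boldsymbol b|\boldsymbol e_B\rangle$. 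Either way this establishes commutativity for every $\boldsymbol e$ simultaneously.

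For uniqueness, fix a single $\boldsymbol e$ and suppose $H:\NOM(\mathcal S)\to\NOM(\mathcal S')$ is any $\nPROP$-morphism with $\boldsymbol e\circ H=G\circ\boldsymbol e$. Both $H$ and $\NOM(G)$ are the identity on objects, so for each arrow $x:A\to B$ of $\NOM(\mathcal S)$ the arrows $Hx$ and $\NOM(G)x$ have the same type $A\to B$ in $\NOM(\mathcal S')$, and $\boldsymbol e(Hx)=G(\boldsymbol e x)=\boldsymbol e(\NOM(G)x)$. The functor $\boldsymbol e:\NOM(\mathcal S')\to\mathcal S'$ is an equivalence of categories, hence faithful, so $Hx=\NOM(G)x$; as this holds on all arrows and both are the identity on objects, $H=\NOM(G)$.

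Finally, the parenthetical ``equivalently, one'' follows by combining the two parts: existence produced a single arrow $\NOM(G)$ satisfying the square for all $\boldsymbol e$, while uniqueness shows that any $H$ satisfying the square for even one $\boldsymbol e$ must equal $\NOM(G)$; hence ``commutes for one $\boldsymbol e$'' and ``commutes for all $\boldsymbol e$'' are equivalent conditions, and in either case the witness is unique. The only real obstacle is the input to the uniqueness step, namely the faithfulness of $\boldsymbol e:\NOM(\mathcal S')\to\mathcal S'$, which rests on the claim (recorded in the preceding propositions) that $\boldsymbol e$ is an equivalence of categories; once this is granted the argument is a routine diagram chase, and no property of $G$ beyond its being an identity-on-objects strict symmetric monoidal functor is used.
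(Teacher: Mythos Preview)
Your proposal is correct and, in fact, considerably more complete than what the paper offers: the paper's own ``proof'' of this proposition is only a placeholder note (``This needs to be checked carefully: it is clear (?) how to define $\NOM(G)$ from $G$ and $\boldsymbol e$ \ldots''), so there is little to compare against line by line.

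That said, there is a small difference of strategy worth flagging. The paper's sketch suggests \emph{constructing} $\NOM(G)$ from $G$ together with a chosen enumeration $\boldsymbol e$ and then arguing that the result does not depend on $\boldsymbol e$. You instead take $\NOM(G)$ off the shelf from the earlier functoriality proposition, $\NOM(G)([\boldsymbol a\rangle g\langle\boldsymbol b])=[\boldsymbol a\rangle Gg\langle\boldsymbol b]$, and verify the commuting square by either (a) pasting the two naturality squares already established, or (b) a direct calculation using that $G$ fixes symmetries. This is cleaner: it sidesteps any well\nobreakdash-definedness issue and makes the ``for all $\boldsymbol e$'' clause immediate. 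For uniqueness you invoke faithfulness of $\boldsymbol e:\NOM(\mathcal S')\to\mathcal S'$; this is legitimate and can be extracted from the isomorphism $\mathcal T\cong\NOM(\ORD(\mathcal T))$ proved earlier in the paper (choosing $\boldsymbol a=\boldsymbol e_A$, $\boldsymbol b=\boldsymbol e_B$ in the description of that iso shows that $\boldsymbol e$ factors through an isomorphism), so your caveat about that dependency is accurate but not problematic.
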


\begin{proof}
This needs to be checked carefully:
\begin{ak}
it is clear (?) how to define $\NOM(G)$ from $G$ and $\boldsymbol e$ ... moreover, given the  specific choice of $\boldsymbol e$, $\NOM(G)$ should be unique wrt ... \end{ak}
\end{proof}

\begin{theorem}\label{thm:transfer}
The two functors form an equivalence of categories ... this needs to be more precise ...
\end{theorem}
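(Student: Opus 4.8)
The plan is first to make the statement precise. Fix a local enumeration of names $\boldsymbol e$. The precise claim I would prove is that for every $\nPROP$ $\mathcal T$ the functor $\boldsymbol e:\mathcal T\to\ORD(\mathcal T)$ is an equivalence of the underlying ordinary categories, and dually that $\boldsymbol e:\NOM(\mathcal S)\to\mathcal S$ is an equivalence for every $\PROP$ $\mathcal S$; together with the $\NOM\dashv\ORD$ adjunction of Theorem~\ref{thm:equivalence} this is the asserted ``transfer'' equivalence. I would immediately reduce the second assertion to the first: taking $\mathcal T=\NOM(\mathcal S)$, the functor $\boldsymbol e:\NOM(\mathcal S)\to\mathcal S$ factors as $\boldsymbol e:\NOM(\mathcal S)\to\ORD(\NOM(\mathcal S))$ followed by the isomorphism $\ORD(\NOM(\mathcal S))\cong\mathcal S$ proved earlier, so it suffices to treat $\boldsymbol e:\mathcal T\to\ORD(\mathcal T)$. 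For this functor essential surjectivity is immediate, and in fact strict on objects: $\boldsymbol e$ sends a finite name-set $A$ to $\underline{|A|}$, and every object of the $\PROP$ $\ORD(\mathcal T)$ is some $\underline n=\boldsymbol e(A)$ for any $A$ with $|A|=n$.

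The substance is fullness and faithfulness, which I would establish by exhibiting an explicit two-sided inverse on hom-sets. Fix a de-enumeration assigning to each $n$ a canonical name-set $A_n$ with $|A_n|=n$. Given $g:\underline n\to\underline m$ in $\ORD(\mathcal T)$, written in the form $g=\langle\boldsymbol a]f[\boldsymbol b\rangle$ with $f:\underline{\boldsymbol a}\to\underline{\boldsymbol b}$ in $\mathcal T$, I would send $g$ to the $\mathcal T$-arrow $[\boldsymbol e_{A_n}\,|\,\boldsymbol a]\hcomp f\hcomp[\boldsymbol b\,|\,\boldsymbol e_{A_m}]:A_n\to A_m$. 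Independence of the chosen representative is exactly what the equations of Proposition~\ref{prop:ORD} guarantee, and the verification that this assignment is inverse to $f\mapsto\langle\boldsymbol e_{A_n}]f[\boldsymbol e_{A_m}\rangle$ on hom-sets is precisely the pair of $\Gamma_{\mathsf n}/\Delta_{\mathsf n}$ calculations already carried out in the proposition establishing $\NOM(\ORD(\mathcal T))\cong\mathcal T$. Hence $\boldsymbol e$ is bijective on hom-sets, so full and faithful, completing the proof that $\boldsymbol e$ is an equivalence of categories.

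To package this as a genuine equivalence I would assemble the de-enumeration into a pseudo-inverse functor $\ORD(\mathcal T)\to\mathcal T$, checking functoriality from the composition and tensor equations of Proposition~\ref{prop:ORD}, and build the unit and counit from the bijections $\langle\boldsymbol a\,|\,\boldsymbol b\rangle$ and $[\boldsymbol a\,|\,\boldsymbol b]$; their naturality is the ``sliding'' of symmetries past diagrams, i.e.\ the last equation of Fig~\ref{fig:symmetric-monoidal-category}, used just as in the proof of Proposition~\ref{prop:NOM}. \textbf{The main obstacle} is not this equivalence of underlying categories, which is largely bookkeeping on top of the earlier object-level isomorphisms, but the finer claim that $\boldsymbol e$ is a \emph{weak equivalence of internal monoidal categories}. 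Since $\boldsymbol e$ must translate the commutative partial tensor $\uplus$ of $\mathcal T$ into the non-commutative total tensor $\oplus$ of $\ORD(\mathcal T)$, it cannot be strict monoidal: the comparison maps $\boldsymbol e(A)\oplus\boldsymbol e(B)\to\boldsymbol e(A\uplus B)$ are nontrivial symmetries recording how $\boldsymbol e_{A\uplus B}$ interleaves $\boldsymbol e_A$ and $\boldsymbol e_B$, and verifying their naturality and coherence is where all the manipulation of symmetries concentrates. This is the step I expect to require the most care, and it is again the naturality-of-symmetries equation that makes it go through.

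Finally, I would record the intended corollary. Once $\boldsymbol e:\NOM(\mathcal S)\to\mathcal S$ is known to be an equivalence, a presentation of $\mathcal S$ by an SMT and the induced presentation of $\NOM(\mathcal S)$ by the corresponding NMT present equivalent categories, so soundness and completeness of one calculus transfer to the other. In particular this yields the completeness of the nominal calculi of Fig~\ref{fig:nmt-theories} from the classical completeness of the symmetric monoidal theories of Fig~\ref{fig:smt-theories}, and conversely.
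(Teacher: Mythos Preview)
The paper's own ``proof'' here is itself a placeholder: the theorem sits in an experimental section that the authors flag as incomplete, and the argument trails off after one sentence. That sentence, however, signals a different framing from yours. The paper intends to view an ordinary $\PROP$ as an internal monoidal category in $(\Nom,1,\ast)$ with trivial (empty) support, so that both $\mathcal T$ and $\ORD(\mathcal T)$, and likewise $\NOM(\mathcal S)$ and $\mathcal S$, live in the same 2-category $\Mon(\Cat(\Nom,1,\ast))$ of Theorem~\ref{thm:moncatv}. The equivalence is then meant to be an equivalence \emph{in that 2-category}, where the notion of 2-cell and hence of equivalence is already pinned down abstractly, rather than an equivalence of underlying ordinary categories upgraded by hand to a monoidal one.

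Your route is more elementary and more explicit: you reduce to the hom-set bijections already computed in the $\Gamma_{\mathsf n}/\Delta_{\mathsf n}$ propositions, get the equivalence of underlying categories essentially for free, and then isolate the monoidal coherence as the residual work. This is sound, and your diagnosis of the obstacle is exactly right: $\boldsymbol e$ cannot be strict monoidal because it mediates between a commutative partial tensor and a symmetric total one, so the comparison cells are genuine symmetries and their coherence is where the content lies. What the paper's 2-categorical packaging buys is that this coherence need not be checked ad hoc; once one knows what the 2-cells in $\Mon(\Cat(\Nom,1,\ast))$ are, the verification becomes an instance of a general pattern. What your approach buys is that it actually produces the data and could be completed without developing the internal 2-category theory further. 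Neither proof is finished in the paper; yours is closer to being one.
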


\begin{proof}
Both $\PROP$ and $\nPROP$ are monoidal categories internal in nominal sets with the separating tensor. The latter form a 2-category as shown in Theorem~\ref{thm:moncatv}, in which the notion of equivalence of categories has a precise meaning. ...
\end{proof}

\begin{theorem}\label{thm:ordisleftadjoint}
The functor $\ORD:\nPROP\to\PROP$ is \end{theorem}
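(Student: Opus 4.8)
The plan is to read the statement as ``$\ORD$ is left adjoint to $\NOM$'' and to obtain the adjunction from the two natural isomorphisms just established. Write $\sigma_{\mathcal S}:\mathcal S\to\ORD(\NOM(\mathcal S))$ for the isomorphism of the proposition providing the natural iso $\mathcal S\to\ORD(\NOM(\mathcal S))$ (the map $\Delta$, sending $f$ to $\langle\boldsymbol a][\boldsymbol a\rangle f\langle\boldsymbol b][\boldsymbol b\rangle$, with inverse $\Gamma$), and $\tau_{\mathcal T}:\NOM(\ORD(\mathcal T))\to\mathcal T$ for the isomorphism of the proposition providing $\NOM(\ORD(\mathcal T))\to\mathcal T$ (the map $\Gamma_{\mathsf n}$, with inverse $\Delta_{\mathsf n}$). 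Both are natural and invertible. I would then take as counit $\epsilon=\sigma^{-1}=\Gamma:\ORD\NOM\Rightarrow\Id_{\PROP}$ and as unit $\eta=\tau^{-1}=\Delta_{\mathsf n}:\Id_{\nPROP}\Rightarrow\NOM\ORD$; naturality of each is immediate, since the inverse of a natural isomorphism is again natural.

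It then remains to produce the two triangle identities
\[
\epsilon_{\ORD\mathcal T}\circ\ORD(\eta_{\mathcal T})=\id_{\ORD\mathcal T}
\qquad\text{and}\qquad
\NOM(\epsilon_{\mathcal S})\circ\eta_{\NOM\mathcal S}=\id_{\NOM\mathcal S}.
\]
For the first I would evaluate on a generating arrow $\langle\boldsymbol a]f[\boldsymbol b\rangle$ of $\ORD\mathcal T$: applying $\ORD(\Delta_{\mathsf n})$ wraps $f$ in the enumerating lists supplied by $\Delta_{\mathsf n}$, and applying $\Gamma$ contracts these into symmetries $\langle\boldsymbol a|\boldsymbol a\rangle$ and $\langle\boldsymbol b|\boldsymbol b\rangle$, which are identities; the second triangle is dual, the renamings $[\boldsymbol a|\boldsymbol a]$ and $[\boldsymbol b|\boldsymbol b]$ on the nominal side contracting to identities. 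These cancellations are of exactly the kind already carried out in the $\Gamma\Delta=\id$ and $\Gamma_{\mathsf n}\Delta_{\mathsf n}=\id$ computations of the two isomorphism propositions, and rely on the same independence-of-enumeration step.

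The main obstacle is precisely that independence step: because $\sigma$ and $\tau$ depend on a chosen enumerating list for each object, the two branches of a triangle may a priori use different enumerations, and one must invoke the derivable equation $\langle\boldsymbol a][\boldsymbol a\rangle f\langle\boldsymbol b][\boldsymbol b\rangle=\langle\boldsymbol c][\boldsymbol c\rangle f\langle\boldsymbol d][\boldsymbol d\rangle$ of the preceding Remark (and its nominal counterpart) to make the interposed symmetries and renamings cancel. Should a triangle fail to hold strictly for this choice, the robust fallback is the standard fact that any equivalence of categories --- which Theorem~\ref{thm:equivalence} already supplies --- refines to an adjoint equivalence by adjusting one of the two natural isomorphisms; in that refinement $\ORD$ is simultaneously a left and a right adjoint of $\NOM$. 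Either way $\ORD\dashv\NOM$, and the theorem records the direction of the adjunction underlying Theorem~\ref{thm:equivalence}.
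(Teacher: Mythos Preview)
The statement you are proving is truncated in the paper---it literally reads ``The functor $\ORD:\nPROP\to\PROP$ is'' and stops, sitting inside a suppressed draft block with no proof supplied. Your reading of the label as ``$\ORD$ is left adjoint to $\NOM$'' is a reasonable reconstruction, and your argument is sound: you take the two natural isomorphisms already established, set them up as unit and counit, and check the triangle identities by the same bracket-cancellation computations used in the $\Gamma\Delta=\id$ and $\Gamma_{\mathsf n}\Delta_{\mathsf n}=\id$ proofs. This is exactly the route the paper gestures at in the sentence ``Since the last two propositions provide an isomorphic unit and counit of an adjunction'' preceding Theorem~\ref{thm:equivalence}, and in the grayed-out note on that theorem (``is there anything left to prove? \ldots\ yes the triangle equalities''). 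Your fallback---that any equivalence can be rectified to an adjoint equivalence---is also the right safety net and makes the result immediate from Theorem~\ref{thm:equivalence}.

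One point worth flagging: the only place in the paper where a \emph{direction} of adjunction is asserted is the Remark after Theorem~\ref{thm:equivalence}, and there the claim is that $\NOM$ is left adjoint to $\ORD$ (in the generalised Lack setting, where the equivalence degrades to a reflection). So the label and the surviving text pull in opposite directions. In the MacLane setting treated here the two functors form an equivalence, hence each is both a left and a right adjoint of the other, so your conclusion $\ORD\dashv\NOM$ is correct regardless; but if you want to match the paper's stated intention you may wish to record the other direction as well, or simply note that the adjoint equivalence gives both.
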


\begin{corollary} ... on completeness ... $\mathcal T$ is complete if $\ORD(\mathcal T)$ is complete ...
\end{corollary}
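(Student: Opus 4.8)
The plan is to transport the canonical interpretation across the equivalence of Theorem~\ref{thm:equivalence}, so that completeness of the ordinary image forces completeness of the nominal theory. First I would make the notion of completeness precise. A nominal monoidal theory presenting the $\nPROP$ $\mathcal T$ is \emph{complete} with respect to an intended model $\mathcal M$ (an $\nPROP$ such as $\mathsf n\mathbb F$) when the canonical identity-on-objects, strict monoidal, equivariant functor $I:\mathcal T\to\mathcal M$ that sends each generator to its intended meaning is an isomorphism in $\nPROP$; soundness is that $I$ is well defined, and the content of completeness is that $I$ is bijective on hom-sets. Correspondingly, $\ORD(\mathcal T)$ is complete when the canonical ordinary interpretation $J:\ORD(\mathcal T)\to\ORD(\mathcal M)$ is an isomorphism in $\PROP$.

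Second I would exploit the natural isomorphism $\varepsilon_{\mathcal T}:\NOM(\ORD(\mathcal T))\to\mathcal T$ established just above, which is natural in $\mathcal T$. Naturality of $\varepsilon$ at the morphism $I:\mathcal T\to\mathcal M$ of $\nPROP$ gives
\[
I\circ\varepsilon_{\mathcal T}=\varepsilon_{\mathcal M}\circ\NOM(\ORD(I)),
\]
so that $I=\varepsilon_{\mathcal M}\circ\NOM(\ORD(I))\circ\varepsilon_{\mathcal T}^{-1}$. Since $\varepsilon_{\mathcal T}$ and $\varepsilon_{\mathcal M}$ are isomorphisms and $\NOM$ is a functor, hence preserves isomorphisms, it suffices to know that $\ORD(I)$ is an isomorphism. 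That is precisely the hypothesis that $\ORD(\mathcal T)$ is complete, provided we may identify $\ORD(I)$ with the canonical ordinary interpretation $J$.

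This identification $J=\ORD(I)$ is the step I expect to be the main obstacle, since it is where the two chosen semantics must be matched. Both $J$ and $\ORD(I)$ are identity-on-objects strict monoidal functors out of the presented $\PROP$ $\ORD(\mathcal T)$, and both are forced on generators: by Proposition~\ref{prop:ORD} the generators of $\ORD(\mathcal T)$ are the decorated arrows $\langle\boldsymbol a]\gen[\boldsymbol b\rangle$, and $\ORD(I)$ sends such an arrow to $\langle\boldsymbol a]\,I(\gen)\,[\boldsymbol b\rangle$, which is exactly how the ordinary model reads it. As any $\PROP$ morphism out of a presented $\PROP$ is determined by its values on generators, $J=\ORD(I)$ follows, once the intended ordinary model is identified with $\ORD(\mathcal M)$. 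For the concrete theories of Fig~\ref{fig:nmt-theories} this last identification is the statement $\ORD(\mathsf n\mathbb F)\cong\mathbb F$ (and its analogues), which itself follows from the evident $\NOM(\mathbb F)\cong\mathsf n\mathbb F$ together with the unit isomorphism $\mathbb F\cong\ORD(\NOM(\mathbb F))$ of Theorem~\ref{thm:equivalence}.

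Assembling the pieces, $\ORD(I)$ being an isomorphism yields $I$ an isomorphism by the displayed formula, so $\mathcal T$ is complete. I would close with the remark that running the same argument through the unit isomorphism $\mathcal S\to\ORD(\NOM(\mathcal S))$ transfers completeness in the opposite direction; the only asymmetry is the bookkeeping required to match intended models across $\ORD$ and $\NOM$, which is handled by Propositions~\ref{prop:NOM} and~\ref{prop:ORD} exactly as above.
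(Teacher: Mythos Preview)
Your argument is correct, and in fact it supplies what the paper does not: the corollary appears in a draft section that the authors themselves flag as experimental (``not sure how to make the transfer of completeness precise''), and no proof is given there beyond the bare statement. So there is no paper proof to compare against; your write-up is the missing content.

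Two minor remarks. First, the naturality-square manoeuvre can be shortened: since $\ORD$ is one half of an equivalence of categories (Theorem~\ref{thm:equivalence}), it is fully faithful and therefore reflects isomorphisms. Hence once you have identified the ordinary interpretation with $\ORD(I)$, the implication ``$\ORD(I)$ iso $\Rightarrow$ $I$ iso'' is immediate, without unpacking $\varepsilon$. Your longer route via $I=\varepsilon_{\mathcal M}\circ\NOM(\ORD(I))\circ\varepsilon_{\mathcal T}^{-1}$ is of course also valid and makes the mechanism explicit.

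Second, you are right that the substantive work lies in matching the intended ordinary model with $\ORD(\mathcal M)$, e.g.\ $\ORD(\mathsf n\mathbb F)\cong\mathbb F$. Your derivation of this from $\NOM(\mathbb F)\cong\mathsf n\mathbb F$ together with the unit isomorphism is clean; one should just note that the isomorphism $\NOM(\mathbb F)\cong\mathsf n\mathbb F$ is not proved in the paper either and needs a short direct argument (both sides have finite subsets of $\names$ as objects and all functions as arrows, and the generators-and-relations description of $\NOM(\mathbb F)$ in Proposition~\ref{prop:NOM} visibly presents exactly that).
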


\medskip\noindent what about:  $\mathcal S$ is complete iff $\NOM(\mathcal S)$ is complete
\end{gray}

\section{Conclusion}

The equivalence of nominal and ordinary props (Theorem~\ref{thm:equivalence}) has a satsifactory graphical interpretation. Indeed, comparing Figs~\ref{fig:smt-theories} and~\ref{fig:nmt-theories} we see that both share, modulo different labellings of wires mediated by the functors $\ORD$ and $\NOM$, the same core of generators and equations while the difference lies only in the equations expressing, on the one hand,  that $\oplus$ has natural symmetries and, on the other hand, that generators are a nominal set.

\medskip
There are several directions for future research. First, the notion of an internal monoidal category has been developed because it is easier to prove the basic results in general rather than only in the special case of nominal sets. Nevertheless, it would be interesting to explore whether there are other interesting instances of internal monoidal categories. 

\medskip
Second, internal monoidal categories are a principled way to build monoidal categories with a partial tensor. For example, by working internally in the category of nominal sets with the separated product we can capture in a natural way constraints such as the tensor $f\oplus g$ for two partial maps $f,g:\names\to V$ being defined only if the domains of $f$ and $g$ are disjoint. This reminds us of the work initiated by O'Hearn and Pym on categorical and algebraic models for separation logic and other resource logics, see eg \cite{ohearn-pym,galmiche-etal,struth}. It seems promising to investigate how to build categorical models for resource logics based on internal monoidal theories. In one direction, one could extend the work of Curien and Mimram~\cite{curien-mimram} to partial monoidal categories.

\medskip
Third, there has been substantial progress in exploiting Lack's work on composing PROPs \cite{lack} in order to develop novel string diagrammatic calculi for a wide range of applications, see eg \cite{rewriting-modulo,signal-flow-1}. It will be interesting to explore how much of this technology can be transferred from props to nominal props. 

\medskip 
Fourth, various applications of nominal string digrams could be of interest. The orginial motivation for our work was to obtain a convenient calculus for simulataneous substitutions that can be integrated with multi-type display calculi \cite{multi-type} and, in particular, with the multi-type display calculus for first-order logic of Tzimoulis~\cite{tzimoulis}. Another direction for applications comes from the work of Ghica and Lopez~\cite{ghica-lopez} on a nominal syntax for string diagrams. In particular, it would of interest to add various binding operations to nominal props.

\newpage
\appendix
\section{Some internal category theory}

See eg Borceux, Handbook of Categorical Algebra, Volume 1, Chapter 8 and the \href{https://ncatlab.org/nlab/show/internal+category}{nlab}.

\begin{remark}[internal category] 
\label{def:internal-cat} 
In a category with finite limits an \emph{internal category}  is a diagram 
\[ 
\xymatrix@C=10ex{  
A_2 \  
\ar[0,1]|-{\ \comp\ }  
&  
{\ \ A_1 \ \ } 
\ar@<2ex>[0,1]^-{\dom}  
\ar@<-2ex>[0,1]_-{\cod}  
&  
\ A_0  
\ar[0,-1]|-{\ i \ }  
}  
\]  
where  
\begin{enumerate} 
\item $A_2$ is a pullback \  \ 
$\vcenter{ 
\xymatrix{ 
A_2
\ar[r]^{\pi_2}
\ar[d]_{\pi_1} 
& 
A_1\ar[d]^{\dom} \\ 
A_1\ar[r]^{\cod}
& 
A_0 
} 
}$ 
\item $\dom\circ \comp=\dom\circ \pi_1$ \ \ and \ $\cod\circ 
  \comp=\cod\circ \pi_2$, 
\item $\dom\circ i  = \id_{A_0} = \cod\circ i$, 
\item $\comp\circ \langle i\circ\dom,\id_{A_1} \rangle = \id_{A_1} = \comp\circ \langle\id_{A_1},i\circ\cod \rangle$
\item $\comp\,\circ\,\compl= \comp\,\circ\,\compr$
\end{enumerate}  
where 
\begin{itemize}
\item $\langle i\circ\dom,\id_{A_1}\rangle :A_1\to A_2$ and $\langle \id_{A_1},i\circ\cod\rangle :A_1\to A_2$ are the arrows into the pullback $A_2$  pairing $i\circ\dom,\id_{A_1}:A_1\to A_1$ and $\id_{A_1},i\circ\cod:A_1\to A_1$, respectively.
\item the ``triple of arrows''-object $A_3$ is the pullback \ \ \ \ 
\[\vcenter{ 
\xymatrix{ 
A_3
\ar[r]^{\textit{right}}
\ar[d]_{\textit{left}} 
& 
A_2\ar[d]^{\pi_1} \\ 
A_2\ar[r]^{\pi_2}
& 
A_1
} 
}\]
where $\textit{left}$ ``projects out the left two arrows'' and $\textit{right}$ ``projects out the right two arrows''  
\item $\compl$ is the arrow composing the ``left two arrows''
\[\vcenter{ 
\xymatrix@C=10ex{ 
A_3 
\ar@/^/[rrd]^{\ \ \pi_2\,\circ\, \textit{right}}
\ar@{..>}[dr]|-{\compl}
\ar@/_/[rdd]_{\comp\,\circ\, \textit{left}}
&&\\
&
A_2
\ar[r]^{\pi_2}
\ar[d]_{\pi_1} 
& 
A_1\ar[d]^{\dom} 
\\ 
&
A_1\ar[r]^{\cod}
& 
A_0 
} 
}\]
\item $\compr$ is the arrow composing the ``right two arrows''
\[\vcenter{ 
\xymatrix@C=10ex{ 
A_3 
\ar@/^/[rrd]^{\ \ \comp\,\circ\, \textit{right}}
\ar@{..>}[dr]|-{\compr}
\ar@/_/[rdd]_{\pi_1\,\circ\, \textit{left}}
&&\\
&
A_2
\ar[r]^{\pi_2}
\ar[d]_{\pi_1} 
& 
A_1\ar[d]^{\dom} 
\\ 
&
A_1\ar[r]^{\cod}
& 
A_0 
} 
}\]
\end{itemize}

1.~and 2.~define $A_2$ as the `object of composable pairs of arrows' while 3.~and 4.~express that the `object  of arrows' $A_1$ has identities and 5.~formalises associativity of composition.
\end{remark}

\begin{remark}
A morphism $f:A\to B$ between internal categories, an \emph{internal functor}, is a triple $(f_0,f_1,f_2)$ of arrows such that the two diagrams (one for $\dom$ and one for $\cod$)
\begin{gray}ADD IDENTITIES\end{gray}
\begin{equation}\label{eq:def:internal-functor}
\vcenter{
\xymatrix@C=10ex{  
A_2 \  
\ar[0,1]|-{\ \comp\ }  
\ar[dd]_{f_2}
&  
{\ \ A_1 \ \ } 
\ar@<1ex>[0,1]^-{\dom}  
\ar@<-1ex>[0,1]_-{\cod}  
\ar[dd]_{f_1}
&  
\ A_0  
\ar[dd]_{f_0}
\\
&&
\\
B_2 \  
\ar[0,1]|-{\ \comp\ }  
&  
{\ \ B_1 \ \ } 
\ar@<1ex>[0,1]^-{\dom}  
\ar@<-1ex>[0,1]_-{\cod}  
&  
\ B_0  
} }
\end{equation}
commute and identities are preserved.
Since $A_2$ is a pullback, $f_2$ is uniquely determined by $f_1$ (in other words, the existence of $f_2$ is a property rather than a structure). In more detail, if $\Gamma\to A_2$ is any arrow then, because $A_2$ is a pullback, it can be written as a pair $\langle l,r\rangle$ of arrows $l,r:\Gamma\to A_1$ and $f_2$ is determined by $f_1$ via
\begin{equation}\label{equ:f2}
f_2\circ\langle l,r\rangle=\langle f_1\circ l,f_1\circ r\rangle
\end{equation}
\end{remark}

\begin{remark}
A natural transformation $\alpha:f\to g$ between internal functors $f,g:A\to B$, an \emph{internal natural transformation},   is an arrow $\alpha:A_0\to B_1$ such that 
\[
\dom\circ\alpha = f_0
\quad\quad
\cod\circ\alpha = g_0
\quad\quad
\comp\circ(f_1,\alpha\circ\cod)=\comp(\alpha\circ\dom,g_1)
\]
\end{remark}

\begin{remark}
Internal categories with functors and natural transformations form a 2-category. We denote by $Cat(\mathcal V)$ the category or 2-category of categories internal in $\mathcal V$. The forgetful functor $Cat(\mathcal V)\to\mathcal C$ mapping an internal category $A$ to its object of objects $A_0$ has both left and right adjoints and, therefore, preserves limits and colimits. Moreover, a limit of internal categories is computed componentwise as $(\lim D)_j=\lim (D_j)$ for $j=0,1,2$.
\end{remark}

\begin{remark}
A monoidal category can be thought of both as a monoid in the category of categories and as a category internal in the category of monoids. To understand this in more detail, note that both cases give rise to the diagram
\[ 
\xymatrix@C=20ex{  
A_2\times A_2 \  
\ar[0,1]|-{\ \comp\times \comp\ }  
\ar[dd]_{m_2}
&  
{\ \ A_1\times A_1 \ \ }   
\ar@<1ex>[0,1]^-{\dom\times \dom}  
\ar@<-1ex>[0,1]_-{\cod\times \cod}  
\ar[dd]_{m_1}
&  
\ A_0\times A_0  
\ar[dd]_{m_0}
\\
&&
\\
A_2 \  
\ar[0,1]|-{\ \comp\ }  
&  
{\ \ A_1 \ \ }   
\ar@<1ex>[0,1]^-{\dom}  
\ar@<-1ex>[0,1]_-{\cod}  
&  
\ A_0  
}  
\] 
where 
\begin{itemize}
\item in the case of a monoid $A$ in the category of internal categories, $m=(m_0,m_1,m_2)$ is an internal functor $A\times A \to A$ and, using that products of internal categories are computed componentwise, we have $\comp\circ m_2=m_1\circ (\comp\times \comp)$, which gives us the interchange law \[(f\then\, g)\cdot(f'\then\, g')= (f\cdot f')\,\then\,(g\cdot g')\]
by using \eqref{equ:f2} with $m$ for $f$ and  writing $\then$ for $\comp$ and  $\cdot$ for $m_1$;
\item in the case of a category internal in monoids we have monoids $A_0,A_1,A_2$ and monoid homomorphisms $i,\dom,\cod,\comp$ which, if spelled out, leads to the same commuting diagrams as the previous item.
\end{itemize}
\end{remark}

\begin{gray}
WHAT FOLLOWS has been partly moved into the main part ...

\medskip
In the main part of this paper we answer the question of how to internalise the Cartesian product $A\times A$ in a monoidal category $(\mathcal V,I,\otimes)$. In other words, where above we use $A_0\times A_0$ we now want $A_0\otimes A_0$. The first idea is to replace in the diagram above $A_n\times A_n$ by $A_n\otimes A_n$, $n =0,1,2$. But there are two problems with this.

\begin{itemize}
\item First, $\otimes$ may not preserve pullbacks, in which case $A\otimes A$ may not be an internal category.
\item Second, in the motivating example $(\Nom,1,\ast)$ with $\ast$\ the separating product, we want $A_0\ast A_0$ on objects but not on arrows. 
\end{itemize}

\begin{example} Consider the category $\Fun$ of finite sets of names with all functions as a category internal in $(\Nom,1,\ast)$. Define a binary operation $\Fun\ast\Fun$ as $(\Fun\ast\Fun)_0=\Fun_0\ast\Fun_0$ and $(\Fun\ast\Fun)_1=\Fun_1\ast\Fun_1$. Then $\Fun\ast\Fun$ cannot be equipped with the structure of an internal category. Indeed, assume for a contradiction that there was an appropriate pullback $(\Fun\ast \Fun)_2$ and arrow $\comp$ such that the two diagrams
\[
\xymatrix@C=20ex{  
(\Fun\ast \Fun)_2 \  
\ar[0,1]|-{\ \comp\ }  
\ar[dd]_{\pi_1}^{\pi_2}
&  
{\ \ \Fun_1\ast \Fun_1 \ \ }   
\ar[dd]_{\dom}^{\cod}
%
\\
&
\\
\Fun_1\ast\Fun_1\  
\ar[0,1]^{\ \dom\ }_{\cod}  
&  
{\ \ \Fun_0\ast\Fun_0 \ \ }   
}  
\]   
commute. Let $\delta_{xy}:\{x\}\to\{y\}$ be the unique functions in $\Fun$ of the indicated type. Then $((\delta_{ac},\delta_{bd}), (\delta_{cb},\delta_{da}))$, which can be depicted as
\[
\xymatrix@R=0.5ex{
\{a\} \ar[r]^{\delta_{ac}} & \{c\} \ar[r]^{\delta_{cb}} & \{b\}\\
\{b\} \ar[r]_{\delta_{bd}} & \{d\} \ar[r]_{\delta_{da}} & \{a\}
}
\]
is in the pullback $(\Fun\ast \Fun)_2$, but there is no $\comp$ such that the two squares above commute, since $\comp((\delta_{ac},\delta_{bd}), (\delta_{cb},\delta_{da}))$ would have to be $(\delta_{ab},\delta_{ba})$, which do not have disjoint support (see Remark~\cite{}) and therefore are not in $\Fun_1\ast \Fun_1$.
\end{example}

\end{gray}

\begin{gray}
\section{Diagrammatic alpha equivalence}
There is a notion of alpha-equivalence on diagrams. Sequential composition can lead to the binding of names. For example, using sequential composition, we can build two diagrams

\begin{center}
      \includegraphics[page=21, width=0.4\textwidth]{twists_new}
\end{center}
that only differ in their internal names $x$ and $y$. 

\medskip
Semantically, this alpha-equivalence is a consequence of the fact that 
\[\supp(f\then\, g)=\dom(f)\cup\cod(g),\] 
which implies that both diagrams in the equation above have support $\{a,b\}$ and therefore, while renaming $x$ to $y$ in the left-hand diagram give the right-hand diagram, we also know from the definition of support that this renaming does not change the left-hand diagram. Hence both diagrams need to be equal.

\medskip
Syntactically, alpha-equivalence follows from the equation
\begin{center}
      \includegraphics[page=1, width=0.4\textwidth]{twists}
\end{center}
which allows us to insert and delete internal names. For example, this equation together with those about the interaction of renamings and generators, allows us to derive all equations in Fig~\ref{fig:Sigmaalpha}, which generate diagrammatic alpha-equivalence for the theory $\Sigma_r$ of relations.

  \begin{figure}
      \centering
      \includegraphics[page=21, width=0.4\textwidth]{twists_new}\qquad
      \includegraphics[page=26, width=0.4\textwidth]{twists_new}

      \includegraphics[page=22, width=0.4\textwidth]{twists_new}\qquad
      \includegraphics[page=25, width=0.4\textwidth]{twists_new}

      \includegraphics[page=23, width=0.4\textwidth]{twists_new}\qquad
      \includegraphics[page=24, width=0.4\textwidth]{twists_new}

      \includegraphics[page=27, width=0.4\textwidth]{twists_new}\qquad
      \includegraphics[page=28, width=0.4\textwidth]{twists_new}

      \includegraphics[page=29, width=0.4\textwidth]{twists_new}\qquad
      \includegraphics[page=30, width=0.4\textwidth]{twists_new}

      \includegraphics[page=15, width=0.4\textwidth]{twists_new}\qquad
      \includegraphics[page=17, width=0.4\textwidth]{twists_new}

      \includegraphics[page=31, width=0.4\textwidth]{twists_new}\qquad
      \includegraphics[page=20, width=0.4\textwidth]{twists_new}

    \caption{$\alpha$-equivalence}
    \label{fig:Sigmaalpha}
  \end{figure}

\begin{proposition}
The equations in Fig~\ref{fig:Sigmaalpha} hold in $\Sigma_{r}$.
\end{proposition}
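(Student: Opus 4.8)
The plan is to observe that every equation displayed in Fig~\ref{fig:Sigmaalpha} is an instance of \emph{diagrammatic alpha-equivalence}, that is, it merely renames an internal (bound) name of a composite or contracts a pair of renamings. Hence the whole toolkit needed already lives in the generic equations of the nominal monoidal theory: the monoidal-category equations of Fig~\ref{fig:monoidal-category} (associativity, units, commutativity of $\uplus$, and interchange), the contraction law $\delta_{ab}\hcomp\delta_{bc}=\delta_{ac}$, the three renaming--generator interaction rules of Fig~\ref{fig:nominal-set}, and the ``insert/delete internal name'' equation quoted just above the proposition. The strategy is therefore to derive each equation of Fig~\ref{fig:Sigmaalpha} by instantiating these generic rules at the concrete generators $\eta_a$, $\mu_{abc}$, $\hat\eta_a$, $\hat\mu_{abc}$ of $\Sigma_{r}$ (together with $\id_a$ and $\delta_{ab}$).

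First I would fix, for a given generator $\gen$ and a name being renamed, which of the three rules of Fig~\ref{fig:nominal-set} applies: the first when the name occurs only in the codomain of $\gen$, the second when it occurs only in the domain, and the third when it occurs in both. Each rule rewrites a composite $\gen\hcomp(\id\uplus\delta)$ or $(\delta\uplus\id)\hcomp\gen$ into the transposed generator $(b\ x)\gen$, and by the equation $(a\ b)\gen=(a\ b)\cdot\gen$ the transposed term is again a generator of $\Sigma_{r}$. Reading this off generator by generator turns both sides of a target equation into composites of a single generator with renamings $\delta$.

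Second I would close each derivation using the contraction law and the fact that the abbreviations $\pi_A=\biguplus_{a\in A}\delta_{a\pi(a)}$ are invertible: two renamings in series collapse to one, an internal name introduced on one side is cancelled by the matching deletion on the other, and commutativity of $\uplus$ reorders parallel strands. Running through the generators yields all the diagrams of Fig~\ref{fig:Sigmaalpha}. The units and co-units $\eta_a,\hat\eta_a$ are immediate from the one-sided rules, so the bulk of the work is the multiplication-type generators.

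The main obstacle I expect is the relational generator $\hat\mu_{abc}:\{a\}\to\{b,c\}$ (and symmetrically $\mu_{abc}$), where a renamed name may be shared between the domain and the codomain of the surrounding composite. There the two-sided third rule of Fig~\ref{fig:nominal-set} is required, and the bookkeeping of which strand carries which $\delta$, and of when the interchange law is applicable (it holds only when both sides are defined), becomes delicate; combined with the number of generators involved in $\Sigma_{r}$, this case-analysis is where care is needed, while everything else is routine rewriting.
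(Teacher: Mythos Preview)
Your proposal is correct and matches the paper's intended approach. The paper's own proof is only a placeholder (``just do one example''), but the surrounding text makes clear that the method is exactly what you describe: use the insert/delete internal-name equation together with the renaming--generator interaction rules of Fig~\ref{fig:nominal-set} to derive each line of Fig~\ref{fig:Sigmaalpha}, and your case split on which of the three interaction rules applies to a given generator is the right organisation.
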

\begin{proof}
... just do one example ...
\end{proof}

\begin{definition}
Let $\mathcal T$ be an NMT. Diagrammatic alpha equivalence is the smallest equivalence relation on $\mathcal T$-terms that is congruent with respect to $\uplus$ and satisfies $\supp (f\then\, g)=\dom(f)\cup\cod(g)$.
\end{definition}

\begin{remark}
If $t$ is a term in an NMT, we call the occurrence of names that do not occur on an input or on an output wire \emph{internal names}.
\end{remark}

\begin{proposition}
Diagrammatic alpha equivalence is the smallest equivalence relation on terms closed under
\[
\frac{\dom(f)\cup\cod(g)\subseteq\supp(\pi)\quad\pi\cdot A=_\alpha A'\quad \pi\cdot B=_\alpha B' }{A\,\then\, B =_\alpha A'\,\then\, B'}
\qquad\qquad
\frac{A=_\alpha A'\quad B=_\alpha B' }{A\uplus B =_\alpha A'\uplus B'}
\]
\end{proposition}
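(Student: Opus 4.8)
The plan is to establish that the two relations coincide by a pair of inclusions, using in each direction that the relation on the \emph{other} side is defined as a \emph{smallest} relation with prescribed closure properties. Write $=_\alpha$ for the relation inductively generated by the two displayed rules, and $\sim$ for diagrammatic alpha-equivalence of the preceding definition, i.e.\ the smallest equivalence relation that is a congruence for $\then$ and $\uplus$ and that satisfies $\supp(f\then g)=\dom(f)\cup\cod(g)$.

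For $=_\alpha\,\subseteq\,\sim$ I would check that $\sim$ is itself closed under both generating rules; since $=_\alpha$ is the smallest such relation, the inclusion follows. The parallel rule holds because $\sim$ is a congruence for $\uplus$. For the sequential rule, assume $\pi\cdot A\sim A'$ and $\pi\cdot B\sim B'$ with $\pi$ fixing the boundary $\dom(A)\cup\cod(B)$, which is the force of the side condition on $\pi$. I would then reason along
\[
A\then B\ \sim\ \pi\cdot(A\then B)\ =\ (\pi\cdot A)\then(\pi\cdot B)\ \sim\ A'\then B',
\]
where the first relation is the support law for $\sim$ (applicable exactly because $\pi$ fixes $\supp(A\then B)=\dom(A)\cup\cod(B)$), the equality is equivariance of the term algebra, and the last relation is congruence of $\sim$ for $\then$.

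For $\sim\,\subseteq\,=_\alpha$ I would verify that $=_\alpha$ has the defining properties of $\sim$ and appeal to minimality of $\sim$. By construction $=_\alpha$ is an equivalence relation; the parallel rule makes it a congruence for $\uplus$; instantiating the sequential rule with $\pi=\id$ gives congruence for $\then$; and instantiating it with $A'=\pi\cdot A$ and $B'=\pi\cdot B$, so that both premises hold by reflexivity, yields $A\then B=_\alpha\pi\cdot(A\then B)$ for every $\pi$ fixing $\dom(A)\cup\cod(B)$, which is precisely the support law.

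The hardest part will be the correct reading and use of the side condition on $\pi$: it must be understood as requiring $\pi$ to move only the internal names of $A\then B$ while fixing the boundary $\dom(A)\cup\cod(B)$, since this single condition is what lets the one sequential rule play two roles at once, namely $\then$-congruence when $\pi=\id$ and generation of alpha-renaming when $\pi$ is nontrivial. A related point, to be dispatched by structural induction on terms, is that closure under the two rules already forces a congruence in arbitrary subterm contexts rather than only at the outermost operation; and one should confirm that $\sim$ is indeed a congruence for $\then$ (reading this into the definition, as is natural for an equivalence compatible with diagram composition), since the inclusion $=_\alpha\subseteq\sim$ depends on it.
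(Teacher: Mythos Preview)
The paper does not actually supply a proof for this proposition: the proof environment in the source contains only the placeholder ``\ldots\ POSSIBLY ADDING RULES FOR permutations and renamings \ldots''. The statement itself sits in draft material, and indeed the rule as printed is malformed (the premise mentions $f,g$ while the conclusion uses $A,B$, and the condition $\dom(f)\cup\cod(g)\subseteq\supp(\pi)$ taken literally would force the boundary names to be \emph{moved} rather than fixed). So there is nothing against which to compare your argument at the level of detail.

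That said, your two-inclusion strategy is the right shape and would constitute a proof once the statement is repaired. You have correctly diagnosed both defects: the side condition must be read as ``$\pi$ fixes $\dom(A)\cup\cod(B)$'' for anything to make sense, and the definition of $\sim$ needs to be read as including congruence for $\then$ (the paper's draft definition only states $\uplus$-congruence, which is surely an omission rather than an intention). Your observation that the single sequential rule does double duty---yielding $\then$-congruence at $\pi=\id$ and the support law at $A'=\pi\cdot A$, $B'=\pi\cdot B$---is exactly the point, and is more than the paper itself records. The one thing you flag but do not carry out, namely that closure under the two top-level rules propagates to arbitrary subterm contexts, is routine but does need the structural induction you mention; without it the inclusion $\sim\subseteq{=_\alpha}$ is incomplete, since $\sim$ is a congruence in all contexts while your rules only speak about the outermost constructor.
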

\begin{proof}
... POSSIBLY ADDING RULES FOR permutations and renamings ... 
\end{proof}

THIS PROBABLY CAN GO: 
\medskip
We call this equivalence \emph{diagrammatic alpha equivalence}. It is somewhat different from the one we are used to in lambda calculus, first-order logic and similar systems. In these systems, instead of defining, say, lambda abstraction as being of type $\names\times Terms\to Terms$ we can define it as being of type $[\names] Terms\to Terms$ where $[\names]$ is a type constructor that comes equipped with a surjective map $\names\times Terms\to[\names]Terms$ that quotients by alpha-equivalence \cite{gabbay-pitts}. In other words, alpha-equivalence can be incorporated in the type of the binder.

\medskip
Instead, in our situation, even though the binding is performed by sequential composition, it is not the case that sequential composition always binds a name. In particular, when sequentially composing with the identity no binding takes place.

\begin{remark}
The fact that $\uplus$ is partial introduces a certain symmetry in the interchange law \[(f_1\otimes f_2)\then(g_1\otimes g_2) = (f_1\then g_1)\otimes (f_2\then g_2)\]
Classically, if the right-hand side is defined then so is the left-hand side, while there are examples such that the left-hand side is defined and the right-hand side is not (look at the "S"). The reason is that $\then$ is partial but $\otimes$ is not. On the other hand, in the nominal case, $\otimes$ is partial as well and it may now happen that the right-hand side is defined while the left-hand side is not. One reason diagrammatic alpha-equivalence is interesting is that, up to alpha, the left-hand side is defined if the right-hand side is. 
\end{remark}

We say that $(f_1\otimes f_2)\,\then\,(g_1\otimes g_2)$ is defined up to alpha if $=_\alpha$ is an equivalence relation on diagrams and there are $f_1', f_2',g_1',g_2'$ such that $(f_1'\then g_1')\otimes (f_2'\then g_2')=_\alpha(f_1\then g_1)\otimes (f_2\then g_2)$ and $(f_1'\otimes f_2')\,\then\,(g_1'\otimes g_2')$ is defined.

\begin{proposition}
In the interchange law \[(f_1\otimes f_2)\then(g_1\otimes g_2) = (f_1\then g_1)\otimes (f_2\then g_2)\]
if the right-hand side is defined then the left-hand side is defined up to alpha-equivalence.
\end{proposition}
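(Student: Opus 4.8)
The plan is to pin down exactly why the left-hand side can fail to be defined and then to repair it by alpha-renaming the bound intermediate names.

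First I would fix types, writing $f_1\colon A_1\to B_1$, $g_1\colon B_1\to C_1$, $f_2\colon A_2\to B_2$ and $g_2\colon B_2\to C_2$, so that each row $f_i\then g_i$ is composable. Unfolding the definedness conditions for $\uplus$---that both the domains and the codomains of the two tensor factors be disjoint---the right-hand side $(f_1\then g_1)\otimes(f_2\then g_2)$ is defined exactly when $A_1\cap A_2=\varnothing$ and $C_1\cap C_2=\varnothing$. The left-hand side $(f_1\otimes f_2)\then(g_1\otimes g_2)$ requires in addition that $f_1\otimes f_2$ and $g_1\otimes g_2$ exist, which amounts to the single extra condition $B_1\cap B_2=\varnothing$. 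Hence the sole obstruction is a clash $B_1\cap B_2\neq\varnothing$ of the two intermediate objects.

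Next I would use that these intermediate names are \emph{bound}. By the corollary computing the support of a composite, $\supp(f_i\then g_i)=A_i\cup C_i$, so the names of $B_i$ lying outside $A_i\cup C_i$ are internal and invisible, and the whole of $B_1$ may therefore be relabelled. Concretely, choose a finite permutation $\pi$ whose induced renaming $\sigma=\pi_{B_1}\colon B_1\to B_1'=\pi\cdot B_1$ sends $B_1$ to a fresh set $B_1'$ disjoint from $B_2$, and put $f_1'=f_1\then\sigma$, $g_1'=\sigma^{-1}\then g_1$, $f_2'=f_2$ and $g_2'=g_2$. Since $\sigma^{-1}\colon B_1'\to B_1$, both $f_1'$ and $g_1'$ are well typed and their composite is again composable.

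Finally I would discharge the two clauses of ``defined up to alpha''. Since $\sigma\then\sigma^{-1}=\id_{B_1}$ is derivable from the equations of Fig~\ref{fig:nominal-set} (via $\delta_{ab}\then\delta_{bc}=\delta_{ac}$ and the interchange law), together with unitality we get $f_1'\then g_1'=f_1\then\sigma\then\sigma^{-1}\then g_1=f_1\then g_1$, whence $(f_1'\then g_1')\otimes(f_2'\then g_2')=_\alpha(f_1\then g_1)\otimes(f_2\then g_2)$; intuitively this step is nothing but an alpha-renaming of the internal object $B_1$. For definedness, $A_1\cap A_2=\varnothing$ together with the freshly arranged $B_1'\cap B_2=\varnothing$ make $f_1'\otimes f_2'$ exist, while $B_1'\cap B_2=\varnothing$ together with $C_1\cap C_2=\varnothing$ make $g_1'\otimes g_2'$ exist; their codomain and domain coincide at $B_1'\uplus B_2$, so $(f_1'\otimes f_2')\then(g_1'\otimes g_2')$ is defined. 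The main obstacle is conceptual rather than computational: one has to recognise that the failure of the left-hand side is caused only by a collision of \emph{bound} names, and that the support identity $\supp(f\then g)=\dom f\cup\cod g$ is precisely what licenses renaming those names without altering either row up to alpha-equivalence. Once this observation is made, choosing $\sigma$ and checking the disjointness conditions are routine.
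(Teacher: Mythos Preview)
Your argument is correct and in fact supplies what the paper omits: in the source, this proposition sits in a draft section and its proof environment contains only an ellipsis, so there is nothing to compare your approach against. Your identification of the sole obstruction as $B_1\cap B_2\neq\varnothing$, together with the observation that $\supp(f_i\then g_i)=A_i\cup C_i$ renders the $B_i$ internal, is exactly the intended mechanism; the construction $f_1'=f_1\then\sigma$, $g_1'=\sigma^{-1}\then g_1$ with $\sigma=\pi_{B_1}$ and $\pi\cdot B_1$ chosen disjoint from $B_2$ then discharges both clauses of ``defined up to alpha'' cleanly.

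Two very minor remarks. First, the step $\sigma\then\sigma^{-1}=\id_{B_1}$ ultimately rests on $\delta_{aa}=\id_a$, which the paper does not list explicitly among the equations of Fig.~\ref{fig:nominal-set}; it is of course the intended reading of the $\delta$'s as bijections (and holds automatically in any nominal $\PROP$), but if you want to be fully syntactic you might flag it as an implicit axiom. Second, the equality $f_1'\then g_1'=f_1\then g_1$ really uses the categorical laws of Fig.~\ref{fig:monoidal-category} (associativity and unit) rather than those of Fig.~\ref{fig:nominal-set}; only the cancellation $\sigma\then\sigma^{-1}=\id$ draws on the latter. Neither point affects the soundness of your argument.
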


\begin{proof}
...
\end{proof}
\end{gray}

\begin{gray}
\section{Some remarks on equivalence of categories}

\subsection{Equivalences between finite sets and numbers} 

\renewcommand{\names}{\mathcal N}
\renewcommand{\Inj}{\mathsf n\mathbb I}
Let $\Inj$ be the category of finite subsets of $\names$ and injective maps.

\medskip\noindent
Let $\mathbb I$ be the category that has as objects the sets 
\[\underline n = \{1,\ldots n\}\]
for each $n\in\mathbb N$ and as arrows all injective functions. $\underline 0$ is the empty set.

\begin{proposition} \label{prop:F}
\begin{enumerate}
\item If $\mathbb I\to \Inj$ is fully faithful, then it preserves cardinalities.
\item If $\mathbb I\to \Inj$ is fully faithful, then it is essentially surjective.
\item\label{prop:F:ff-pincl} 
The fully faithful and inclusion preserving functors $\mathbb I\to \Inj$  are in bijection with bijections $\mathbb N\to\names$.
\item The fully faithful functors $\mathbb I\to \Inj$  are in bijection with families of bijections $(\underline n\to F_n)_{n\in\mathbb N}$ where each $F_n\subseteq\names$ is of cardinality $n$.
\item\label{prop:F:ff-pincl-iso} 
All fully faithful and inclusion preserving functors $\mathbb I\to \Inj$ are isomorphic.
\item\label{prop:F:iso-incl} 
Every functor $\mathbb I\to \Inj$  is isomorphic to a functor that preserves inclusions.
\item All fully faithful functors $\mathbb I\to \Inj$ are isomorphic.
\end{enumerate}
\end{proposition}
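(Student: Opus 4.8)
The plan is to reduce all seven parts to a single structural fact: a fully faithful functor $F:\mathbb{I}\to\Inj$ must act by \emph{relabelling} the standard sets $\underline n$ with finite sets of names. For (1) and (2) I would argue by counting together with the fact that a fully faithful functor reflects the existence of arrows. Since $\mathbb{I}(\underline m,\underline n)\neq\varnothing$ exactly when $m\le n$, and in $\Inj$ there is an injection $F\underline m\to F\underline n$ exactly when $|F\underline m|\le|F\underline n|$, full faithfulness gives $m\le n \iff |F\underline m|\le|F\underline n|$; hence $n\mapsto|F\underline n|$ is strictly increasing and so $|F\underline n|\ge n$. Combining the bijection $\mathbb{I}(\underline n,\underline n)\cong\Inj(F\underline n,F\underline n)$ with the identity $|\Inj(A,A)|=|A|!$ (self-injections of a finite set are bijections) yields $|F\underline n|!=n!$, which with $|F\underline n|\ge n$ forces $|F\underline n|=n$; this is (1). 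For (2), any finite $A\subseteq\names$ with $|A|=n$ is isomorphic in $\Inj$ to the equicardinal $F\underline n$, so $F$ is essentially surjective.

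The technical core is (3) and (4), and this is where I expect the main obstacle: one must pin down that $F$ is a genuine relabelling rather than some exotic automorphism of the hom-groups $S_n=\mathbb{I}(\underline n,\underline n)$. I would resolve this with the ``points'' $\chi^{(n)}_i:\underline 1\to\underline n$ selecting $i$. Because $F$ is fully faithful, it carries the $n$ distinct points of $\underline n$ bijectively onto the $n$ elements of $F\underline n$, which defines a bijection $\beta_n:\underline n\to F_n:=F\underline n$ by letting $\beta_n(i)$ be the element picked out by $F\chi^{(n)}_i$. Since every $f:\underline m\to\underline n$ satisfies $f\circ\chi^{(m)}_i=\chi^{(n)}_{f(i)}$ and $F$ preserves this, I obtain $Ff\circ\beta_m=\beta_n\circ f$, i.e. $Ff=\beta_n\circ f\circ\beta_m^{-1}$, so that $F$ is completely determined by $(\beta_n)$; the compatibility $f\circ\chi_i=\chi_{f(i)}$ is exactly what leaves no further freedom. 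Conversely any family of bijections $\beta_n:\underline n\to F_n$ with $|F_n|=n$ defines, via $Ff=\beta_n f\beta_m^{-1}$, a fully faithful functor (conjugation is a bijection on hom-sets), and the two assignments are mutually inverse, giving (4). For (3) I would note that $F$ preserves inclusions exactly when the $\beta_n$ are coherent, $\beta_{n+1}|_{\underline n}=\beta_n$, equivalently when they assemble into a single injective enumeration $\beta:\mathbb{N}\to\names$ with $F\underline n=\{\beta(1),\dots,\beta(n)\}$; this is the claimed correspondence, a bijection $\mathbb{N}\to\names$ precisely when the enumeration is onto.

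Finally (5)--(7) concern isomorphism classes. For (5), given two inclusion-preserving fully faithful functors with enumerations $\beta,\beta'$, the maps $\alpha_{\underline n}:\{\beta(1),\dots,\beta(n)\}\to\{\beta'(1),\dots,\beta'(n)\}$, $\beta(i)\mapsto\beta'(i)$, are isomorphisms in $\Inj$, and naturality is immediate from the relabelling formula $Ff=\beta_n f\beta_m^{-1}$. For (6) I would rectify an arbitrary functor $G$ to an inclusion-preserving one inductively along $\underline 0\hookrightarrow\underline 1\hookrightarrow\cdots$: having built a natural-isomorphism component $\alpha_n:G\underline n\to G'\underline n$, I set $G'f=\alpha_n\circ Gf\circ\alpha_m^{-1}$ and choose $G'\underline{n+1}\supseteq G'\underline n$ by sending the image of $G(\underline n\hookrightarrow\underline{n+1})$ to $G'\underline n$ via $\alpha_n$ and the remaining $|G\underline{n+1}|-|G\underline n|$ elements to fresh names (possible since $\names$ is infinite). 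This forces each generating inclusion, hence every inclusion, to a subset inclusion. Part (7) then follows formally: a fully faithful $G$ is isomorphic by (6) to an inclusion-preserving functor, which is again fully faithful since full faithfulness is invariant under natural isomorphism, and all such are mutually isomorphic by (5).
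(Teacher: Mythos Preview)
Your approach matches the paper's in all essentials: the paper's sketch for (4) also extracts the bijections via the ``points'' $\tilde j:\underline 1\to\underline n$ and sets $\kappa_{\underline n}(j)=F\tilde j(\kappa_1(1))$, which is exactly your $\beta_n(i)=F\chi^{(n)}_i(\ast)$, and the remaining items are organised the same way. Your write-up is considerably more complete than the paper's proof (which is a list of hints marked ``needs more details''); in particular your counting argument for (1) is sharper than the paper's ``case by case analysis''.

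Two small remarks. For (6) the paper does not argue directly but invokes the analogous rectification result for $\Set$-functors from Ad\'amek--Trnkov\'a (\emph{Automata and Algebras in Categories}, Thm~III.4.5); your inductive construction along $\underline 0\hookrightarrow\underline 1\hookrightarrow\cdots$ is a self-contained alternative that avoids the external reference. For (3) you rightly observe that the coherent $\beta_n$ assemble into an \emph{injection} $\mathbb N\to\names$, and then add ``a bijection \ldots precisely when the enumeration is onto''. This hedge is appropriate: nothing in ``fully faithful and inclusion preserving'' forces $\bigcup_n F\underline n=\names$, so the correspondence is really with injections $\mathbb N\to\names$ rather than bijections. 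The paper's own sketch has the same gap (it calls the resulting $e$ a bijection without justification), so you should state this discrepancy explicitly rather than leave it implicit.
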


\begin{proof}[Proof (needs more details).]
\begin{enumerate}
\item case by case analysis ...
\item follows from previous item ...
\item every bijection $e:\mathbb N\to\names$ defines a fully faithful functor $\mathbb I\to \Inj$ via the assignment $\underline n\mapsto e[\underline n]$ ... conversely, if a functor $F:\mathbb I\to \Inj$ preserves inclusions, it defines a bijection $e:\mathbb N\to\names$ via $F(\underline 0)=\emptyset, \ldots F(\underline{n+1})=F(\underline n)\cup \{e(n+1)\}$... these two processes are inverse to each other ... 
\item a family of bijections $\kappa_n:(\underline n\to F_n)_{n\in\mathbb N}$ defines a functor $F\underline n=F_n$ where the action of $F$ on arrows is defined such that $\kappa$ is a natural transformation from the inclusion $\mathbb I\to \Inj$ to $F:\mathbb I\stackrel{}{\rightarrow}\Inj$
... conversely, given a fully faithful functor $F:\mathbb I\stackrel{}{\rightarrow}\Inj$, there is only one way to define $\kappa_1:\underline 1 \to F\underline 1$ and $\kappa_{\underline n}(j)=F\tilde j(\kappa_1(1))$ where $\tilde j:1\to\underline n$ is defined via $\tilde j(1)=j$ ... these two processes are inverse to each other (CHECK) ... 
\item follows from item \ref{prop:F:ff-pincl}
\item this is similar to the corresponding result for Set-functors from Adamek and Trnkova, Automata and Algebras in Categories, Theorem III.4.5, page 132 ... could be worth to write out the proof in some detail to make sure the details transfer to our situation ...
\item follows from items \ref{prop:F:ff-pincl-iso} and \ref{prop:F:iso-incl} ...
 \end{enumerate}
\end{proof}

\begin{remark}
The proposition fails if we replace injections by bijections. For example, fully faithfulness then does not imply preservation of cardinality. On the other hand the proposition continues to hold if we replace injections by all functions. To adapt the proof of item \ref{prop:F:iso-incl}, we need to add the observation that fully faithful functors preserve injections.
\end{remark}

\medskip
Unfortunately, there is no fully faithful functor $\mathbb I\to \Inj$ that respects the structure of $\Inj$ as a nominal set since any such functor needs to make some arbitrary, non-equivariant choice of names. On the other hand, there are fully faithful functors
$\Inj\to\mathbb I$ internal in nominal sets. Let us first recall a general result.

\begin{remark}
Let $\mathcal C$ be a category and $I:\mathcal S\to\mathcal C$ be a skeleton, that is, $\mathcal S$ is a full subcategory containing exactly one object from each isomorphism class.  Denote by $K:\mathcal C\to\mathcal S$ a functor associating to $A$ the corresponding isomorphic object $KA$ in $\mathcal S$. To specify $K$ is to specify an isomorphism $\kappa_A:A\to IKA$ for each $A$. $K$ is defined on arrows so that $\kappa$ becomes a natural transformation. $K\kappa_A=\id_A$ and $KI=\Id$. The skeleton $\mathcal S$ is determined up to isomorphism. Every functor $\mathcal S\to \mathcal B$ extends to a functor $\mathcal C\to\mathcal B$, which is unique up to isomorphism, or  unique up to a choice of $\kappa_A:A\to IKA$. 
\end{remark}

\begin{example}
Let  $\mathsf n\mathbb B$ be the category of finite subsets of names with bijections and $\mathbb B$ the category of natural numbers with bijections. To identify a skeleton $\mathcal S$ of $\mathsf n\mathbb B$ means to give a list $(a_1,\ldots a_n)$ of names of each $n\in\mathbb N$. To define $\kappa:\Id \to IK$ then is to give a `local enumeration' $A\to |A|$. To summarise, the equivalences we are interested in are given by (i) $(a_1,\ldots a_n)$ and (ii) local enumerations $A\to |A|$. This can also be seen more directly. To give a fully faithful and essentially surjective $K: \mathsf n\mathbb B\to\mathbb B$, we choose a local enumeration $A\to |A|$.
\end{example}

\begin{proposition}
The fully faithful functors  \[\Inj\to\mathbb I\]  are in bijection with families of bijections $(A\to \underline{|A|})_{n\in\mathbb N}$, which we call local enumerations.  These functors are all equivariant and strong monoidal.
\end{proposition}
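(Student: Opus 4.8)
The plan is to exhibit mutually inverse assignments between local enumerations and fully faithful functors $\Inj\to\mathbb I$, and then to check that every functor so obtained is equivariant and strong monoidal. This is the ``inverse equivalence'' half of the skeleton phenomenon recorded in the preceding remark, specialised to $\mathcal C=\Inj$ and $\mathcal S\simeq\mathbb I$. Throughout I write $\kappa_A\colon A\to\underline{|A|}$ for the bijection that a local enumeration $\boldsymbol e$ assigns to $A$, i.e.\ the list $\boldsymbol e_A$ read as a bijection onto its length.

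For the forward direction I would send $\boldsymbol e$ to the functor $E_{\boldsymbol e}$ defined by $E_{\boldsymbol e}(A)=\underline{|A|}$ and $E_{\boldsymbol e}(f)=\kappa_B\circ f\circ\kappa_A^{-1}$ for $f\colon A\to B$. Functoriality is immediate because the inner $\kappa$'s cancel, and full faithfulness is equally direct: for fixed $A,B$ the map $f\mapsto\kappa_B\, f\,\kappa_A^{-1}$ is a bijection $\Inj(A,B)\to\mathbb I(\underline{|A|},\underline{|B|})$ with inverse $h\mapsto\kappa_B^{-1}\,h\,\kappa_A$. Since $E_{\boldsymbol e}$ is moreover surjective on objects, it is an equivalence.

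For the backward direction, given a fully faithful $F$ I would first argue that $F$ preserves cardinality. Because $\Inj(A,A)$ consists of the $|A|!$ bijections of $A$ and $\mathbb I(\underline m,\underline m)$ of the $m!$ bijections of $\underline m$, full faithfulness forces $|A|!=|FA|!$; combined with the fact that a morphism $A\to B$ exists iff $|A|\le|B|$ (a relation $F$ both preserves and reflects), this pins down $|FA|=|A|$, whence $FA=\underline{|A|}$ as $\mathbb I$ has one object per cardinality. I would then read off an enumeration from singleton inclusions $\iota_a\colon\{a\}\hookrightarrow A$, setting $\kappa^F_A(a)=F(\iota_a)(1)$; faithfulness makes this injective, hence bijective. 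To see these assignments are mutually inverse, the key observation is that any self-injection of $\underline 1$ is the identity: for $f\colon A\to B$ one has $f\circ\iota_a=\iota^B_{f(a)}\circ\delta$ with $\delta\colon\{a\}\to\{f(a)\}$ the unique bijection, and applying $F$ kills $F(\delta)=\id_{\underline 1}$, yielding $F(f)(\kappa^F_A(a))=\kappa^F_B(f(a))$, i.e.\ $F=E_{\kappa^F}$; the other composite returns $\boldsymbol e$ by a one-line evaluation $\kappa^{E_{\boldsymbol e}}_A(a)=E_{\boldsymbol e}(\iota_a)(1)=\kappa_A(a)$.

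Finally I would treat equivariance and monoidality. Since $\mathbb I$ carries the trivial permutation action, equivariance amounts to comparing $E_{\boldsymbol e}\circ\boldsymbol\pi$ with $E_{\boldsymbol e}$; writing $\pi_A\colon A\to\pi\cdot A$ for the restriction of $\pi$ and $\lambda_A=\kappa_{\pi\cdot A}\circ\pi_A$, one gets $E_{\boldsymbol e}(\pi\cdot f)=\lambda_B\, f\,\lambda_A^{-1}$, and the family $\alpha_A=\kappa_A\circ\lambda_A^{-1}\colon\underline{|A|}\to\underline{|A|}$ is a natural isomorphism $E_{\boldsymbol e}\circ\boldsymbol\pi\Rightarrow E_{\boldsymbol e}$. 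Strong monoidality is analogous: on objects $E_{\boldsymbol e}(A\uplus B)=\underline{|A|+|B|}=E_{\boldsymbol e}(A)\oplus E_{\boldsymbol e}(B)$, and the coherence cell $\mu_{A,B}=\kappa_{A\uplus B}\circ(\kappa_A\sqcup\kappa_B)^{-1}$ is the permutation reconciling the two enumerations of the union, with the associativity and unit axioms reducing to the coherences of disjoint union. I expect equivariance to be the main obstacle, precisely because \emph{strict} equivariance fails for a generic $\boldsymbol e$ (already when $\pi$ stabilises $A$ setwise but permutes its elements, $E_{\boldsymbol e}(\pi\cdot f)\neq E_{\boldsymbol e}(f)$): the statement must therefore be read in the $2$-categorical sense appropriate to internal categories in $\Nom$, namely equivariance up to the canonical invertible cells $\alpha_A$, and the only nonroutine work is checking the coherence of these cells and of $\mu$.
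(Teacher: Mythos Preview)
Your argument is more fleshed out than the paper's, which is essentially a sketch with several ``(CHECK)'' and ``(FILL IN THE DETAILS)'' markers. The paper argues cardinality preservation, deduces that $G:\Inj\to\mathbb I$ is an equivalence, picks an adjoint $F:\mathbb I\to\Inj$, and then invokes the preceding remark (that $FG$ is determined by isomorphisms $\kappa_A:A\to FGA$) together with the earlier proposition classifying fully faithful $F:\mathbb I\to\Inj$ by families $(\underline n\to F_n)$. You instead extract the local enumeration directly from the action of $F$ on singleton inclusions. Both routes are fine; yours is more self-contained, the paper's reuses the classification already done in the other direction.

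Your caution about equivariance is well placed and goes beyond what the paper's proof sketch addresses. The paper simply asserts ``$G$ is equivariant on objects and arrows (CHECK)'', but as you observe, for a generic local enumeration $\boldsymbol e$ one does \emph{not} have $\kappa_{\pi\cdot A}\circ\pi_A=\kappa_A$, so strict equivariance on arrows fails. Your reading---equivariance up to the canonical invertible 2-cells $\alpha_A=\kappa_A\circ(\kappa_{\pi\cdot A}\circ\pi_A)^{-1}$---is the correct way to make the claim true, and is consistent with the paper's later remark that these functors cannot be \emph{strict} monoidal either. So the only substantive work, as you say, is verifying the coherence of $\alpha$ and $\mu$; the paper leaves this undone.
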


\begin{proof} (needs some details) First, fully faithfulness implies cardinality preservation as in the previous prop (CHECK). Cardinality preservation implies surjectiveness. Hence $G:\Inj\to\mathbb I$ is part of an equivalence of categories with some adjoint $F:\mathbb I\to \Inj$. Due to the previous remark $FG$ is determined by isomorphisms $\kappa_A:A\to FGA$ and due to the proposition $F$ is determined by isomorphisms $\lambda_{\underline n}\to F\underline n$ ...  (FILL IN THE DETAILS)
$G$ is equivariant on objects and arrows (CHECK). To show that \href{https://ncatlab.org/nlab/show/monoidal+functor}{https://ncatlab.org/nlab/show/monoidal+functor}, we note that there is an isomorphism $G(A)+G(B)\cong G(A\uplus B)$ satisfying associativity and unitality (CHECK). \footnote{we should revisit the notion of monoidal functor in the light of our treatment of internal categories ... }
\end{proof}

\begin{remark}
\[\Inj\to\mathbb I\] cannot be strict monoidal as $\uplus$ is commutative on $\Inj$ but $+$ is not commutative on $\mathbb I$.
\end{remark}

\subsection{translating between nominal and ordinary string diagrams}
Let $\NMT$ and $\SMT$ be the categories of nominal and symmetric monoidal categories. In the previous section we studied as an example
\[
\xymatrix{
\Inj\ar@/^/[rr]^{\Ordit} && \mathbb I \ar@/^/[ll]^{\Nomit}
}
\]
relating the $\NMT$ $\Inj$ with the $\SMT$ $\mathbb I$.

\medskip
$\Ordit$ is parameterised by an ordering, that is, by a family of bijections $\ordit_A:A\to\underline{|A|}$ for all finite $A\subseteq\names$.

\medskip
$\Nomit$ is parameterised by a choice of names, that is, by a family of sets $\Nomit(n)\subseteq \names$ of cardinality $n$ and bijections $\nomit_n:\underline n\to\Nomit(\underline n)$ for $n\in\mathbb N$.

\medskip 
$\Ordit$ and $\Nomit$ form an equivalence of categories for any choice of $\ordit$ and $\nomit$. In the following, we will assume that the choices have been made so that $\id=\ordit\circ\nomit$, which implies $\Id=\Ordit\circ\Nomit$

\medskip On objects, $\Ordit(\Nomit(\underline n))=\underline n$.
On arrows, we need to know sth more ... ?????

\medskip
Can this relationship be studied for nmts and smts in general? That is, are there functors
\[
\xymatrix{
\NMT\ar@/^/[rr]^{\ORD} && \SMT \ar@/^/[ll]^{\NOM}
}
\]
with good properties? Are there generic $\Ordit: \mathcal T\to\ORD(\mathcal T)$ and $\Nomit:\mathcal S\to \NOM(\mathcal S)$ so that $\mathcal T = \Inj = \NOM(\mathbb I)$ and $\mathcal S = \mathbb I = \ORD(\mathcal \Inj)$ are examples?

\medskip
$\ORD(\mathcal T)$ is taking the skeleton of $\mathcal T$. There is a functor $\Ordit:\mathcal T\to \ORD(\mathcal T)$ induced by some $\ordit:A\to\underline{|A|}$.

\medskip
$\NOM(\mathcal S)$ is induced by some $\nomit:\underline n\to\Nomit(\underline n)$ by closing freely under the permutation action (CHECK). $\NOM(\mathcal S)$ is independent of the choice of  $\nomit$ (CHECK). There is a functor $\Nomit:\mathcal S\to\NOM(\mathcal S)$ induced by $\nomit$ (CHECK). $\NOM(\mathcal S)$ is independent of the choice of  $\nomit$ (CHECK). $\Nomit$ and $\Ordit$ form an equivalence of categories (CHECK).

\medskip
Next we need to understand how to translate presentations of theories. 

\[\ordit:\mathcal T\to\ORD(\mathcal T)\]
\[\ordit((a\ b)\cdot t) = \ordit(t)\]

the last equation is only true for generators?

... need to sort out the details ...

\medskip\noindent\textbf{Thm 1?} $\Ordit$ and $\Nomit$ translate proofs
$\Ordit$ and $\Nomit$ translate diagrams and equations up to order and permuations?

\medskip\noindent\textbf{Thm 2?}
\[\Ordit(f)\equiv_o\Ordit(g) \Rightarrow f\equiv_n g\]
\[\Nomit(f)\equiv_n\Nomit(g) \Rightarrow f\equiv_o g\]
\end{gray}

\begin{gray}
\section{Lists of names}

lists of names ... oNMT ... oNMTs can be easily generated from SMTs ... so the main work of the transfer theorem is in the equivalence of oNMTs and NMTs

\begin{definition}[oNMT]
A "nominal $\PROP$" $\mathbb C$ is a small category, with a set $\mathbb C_0$ of `objects' and a set $\mathbb C_1$ of `arrows', defined as follows. We write $\then$ for the `sequential' composition (in the diagrammatic order) and $\uplus$ for the `parallel' or `monoidal' composition.
\begin{itemize}
\item $\mathbb C_0$  is the set of finite subsets of $\names$. The permutation action is given by $\pi\cdot A= \pi[A]=\{\pi(a) \mid a\in A\}$. 
\item $\mathbb C_1$ contains all bijections (`renamings') $\pi_A:A\to\pi\cdot A$ for all finite permutations $\pi:\names\to\names$ and is closed under the operation mapping an arrow $f:A\to B$ to $\pi\cdot f : \pi\cdot A\to \pi\cdot B$ defined as $\pi\cdot f = (\pi_A)^{-1}\then f\then \pi_B$. 
\item $A \uplus B$ is the union of $A$ and $B$ and defined whenever $A$ and $B$ are disjoint. This makes $(\Fun_0,\emptyset,\uplus)$ a partial commutative monoid. On arrows, we require $(\Fun_1,\emptyset,\uplus)$ to be a partial commutative monoid, with $f\uplus g$ defined whenever $\dom f\cap\dom g=\emptyset$ and $\cod f\cap\cod g=\emptyset$.
\item did we forget anything?
\end{itemize}
\end{definition}

The next theorem shows that oNMTs and NMTs are equivalent as internal monoidal categories in ...

\begin{theorem}
... we need to figure out what goes here exactly
\end{theorem}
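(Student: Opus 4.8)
The plan is to invoke the standard characterisation of an equivalence of categories: two categories are equivalent precisely when there exist functors between them whose two composites are naturally isomorphic to the respective identity functors. All the data this requires has already been assembled in the preceding results, so the proof reduces to citing the definition.

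First I would recall the two functors $\NOM : \PROP \to \nPROP$ and $\ORD : \nPROP \to \PROP$, whose functoriality was established in the propositions above. Next, the two propositions immediately preceding the theorem supply exactly the natural isomorphisms that an equivalence demands: the first yields a natural isomorphism $\Id \cong \ORD \circ \NOM$ on $\PROP$ (witnessed by the maps $\mathcal S \to \ORD(\NOM(\mathcal S))$, natural in $\mathcal S$), and the second yields a natural isomorphism $\NOM \circ \ORD \cong \Id$ on $\nPROP$ (witnessed by the maps $\NOM(\ORD(\mathcal T)) \to \mathcal T$, natural in $\mathcal T$). Since both round-trip composites are naturally isomorphic to the identity functors, $\NOM$ and $\ORD$ form an equivalence, which is precisely the assertion.

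I expect no genuine obstacle, since the substantive labour---checking that $\NOM$ and $\ORD$ are well defined on morphisms and respect the equational theories of $\PROP$s and $\nPROP$s, and that the unit and counit are componentwise isomorphisms natural in their argument---has already been carried out in the earlier propositions. The only thing worth flagging is that the statement asserts a mere equivalence of categories, for which the two natural isomorphisms suffice; one does \emph{not} need to verify the triangle identities. Should one wish to strengthen this to an adjoint equivalence, the additional step would be to check those triangle identities (or to rescale one of the two natural isomorphisms so that they hold), but that refinement is not required for the claim as stated.
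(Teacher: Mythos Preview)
The statement you were given is a placeholder: the literal content of the theorem in the paper is ``\ldots\ we need to figure out what goes here exactly'', and it sits in a greyed-out draft section on \emph{ordered} nominal monoidal theories (oNMTs). The sentence just before it says the theorem is meant to show that oNMTs and NMTs are equivalent as internal monoidal categories; the paper never actually formulates this claim, and there is no proof to compare against.

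Your proposal instead addresses a \emph{different} result, namely Theorem~\ref{thm:equivalence}, the equivalence of the categories $\PROP$ and $\nPROP$. For \emph{that} theorem your outline is essentially what the paper does: the text immediately preceding Theorem~\ref{thm:equivalence} says ``Since the last two propositions provide an isomorphic unit and counit of an adjunction, we obtain\ldots'', which is exactly your strategy of citing the natural isomorphisms $\mathcal S \cong \ORD(\NOM(\mathcal S))$ and $\NOM(\ORD(\mathcal T)) \cong \mathcal T$. The only nuance is that the paper phrases it as an adjunction with invertible unit and counit (and in a draft note flags the triangle identities as still to be checked), whereas you correctly observe that for a bare equivalence the two natural isomorphisms already suffice.

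So: your argument is sound, but it proves the wrong theorem. The placeholder you were asked about concerns the oNMT/NMT comparison, for which the functors $\NOM$ and $\ORD$ between $\PROP$ and $\nPROP$ are not the relevant data, and the paper itself supplies no proof.
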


\end{gray}

\begin{gray}
\section{about adjoints}

\subsection{willgosomewherelater}

\begin{ak}

Each $\PROP$ is also a  monoidal category internal in nominal sets where all objects and arrows have empty support. More generally, we have a `discrete' functor 
\footnote{The discrete functor $\Set\to\Nom$ has both left and right adjoints. The right adjoint maps a nominal set to its subset of elements with empty support. The left adjoint maps a nominal set to its set of orbits. On the other hand, the forgetful functor $\Nom\to\Set$ does not preserve (infinite) products and only has a right adjoint which maps a set $X$ to $X^\mathfrak S$ where $\mathfrak S$ is the set of all finite permutations $\Nom\to\Nom$.}
\[D:\Cat(\Set)\to\Cat(\Nom,1,\ast)\]
which inherits from the discrete functor $\Set\to\Nom$ the right adjoint (CHECK), but not the left adjoint, since we cannot map arrows to their orbits in a composition preserving way. This situation can be repaired using the notion of a local enumeration of names.

\medskip\noindent
The left adjoint $Orb:\Nom\to\Set$ of $D:\Set\to\Nom$ does not extend to $Orb:\Cat(\Nom,1,\ast)\Cat(\Set)$. For example, if $\mathbb C$ is the internal nominal category of finite subsets of names with bijections, then $Orb(\mathbb C_1)$ collapses any bijections $f,g$ where the domains have the same cardinality and the codomains have the same cardinality, so that the domain and codomain arrows cannot be defined.


\end{ak}
\end{gray}

\begin{gray}
\section{Bits from section 6}

\begin{remark}\label{eqs:NomOrdDerivable}
The following equations can be obtained from the equations in Definitions \ref{def:NOM},\ref{def:ORD}:
\begin{align*}
[\boldsymbol a\rangle\, f \hcomp \langle\boldsymbol b|\boldsymbol b'\rangle \hcomp g \,\langle\boldsymbol c]
&= [\boldsymbol a\rangle\, f \,\langle\boldsymbol b] \hcomp [\boldsymbol b'\rangle\, g \,\langle\boldsymbol c]
\\
[\boldsymbol a\rangle\, \langle\boldsymbol b|\boldsymbol b'\rangle \,\langle\boldsymbol c]
&= [\boldsymbol a|\boldsymbol b] \hcomp [\boldsymbol b'|\boldsymbol c]
\\
[\boldsymbol a\rangle\langle\boldsymbol a]\, [\boldsymbol a|\boldsymbol b] \,[\boldsymbol b\rangle\langle\boldsymbol b]
&= [\boldsymbol a|\boldsymbol b]
\\
\langle\boldsymbol a]\, f \hcomp [\boldsymbol b|\boldsymbol c] \hcomp g \,[\boldsymbol d\rangle
&= \langle\boldsymbol a]\, f \,[\boldsymbol b\rangle \hcomp \langle\boldsymbol c]\, g \,[\boldsymbol d\rangle
\\
\langle\boldsymbol a]\, [\boldsymbol a'|\boldsymbol b'] \,[\boldsymbol b\rangle
&= \langle\boldsymbol a|\boldsymbol a'\rangle \hcomp \langle\boldsymbol b'|\boldsymbol b\rangle
\\
\langle\boldsymbol a][\boldsymbol a\rangle\, \langle\boldsymbol a|\boldsymbol a'\rangle \,\langle\boldsymbol a'][\boldsymbol a'\rangle
&= \langle\boldsymbol a|\boldsymbol a'\rangle
\\
\langle\boldsymbol a][\boldsymbol a\rangle\, f \,\langle\boldsymbol b][\boldsymbol b\rangle
&= \langle\boldsymbol c][\boldsymbol c\rangle\, f \,\langle\boldsymbol d][\boldsymbol d\rangle
\\
\end{align*}


\begin{proof}
\begin{align*}
[\boldsymbol a \rangle f \hcomp \langle\boldsymbol b | \boldsymbol b'\rangle \hcomp g\langle\boldsymbol c]
&= [\boldsymbol a \rangle\, f \,\hcomp \langle\boldsymbol b | \boldsymbol b'\rangle \,\langle\boldsymbol b' ]\, \hcomp\, [\boldsymbol b'\rangle\, g \,\langle\boldsymbol c]
\\
&= [\boldsymbol a \rangle\, f \,\langle\boldsymbol b ]\, \hcomp \,[\boldsymbol b' | \boldsymbol b']\, \hcomp\, [\boldsymbol b'\rangle\, g \,\langle\boldsymbol c]
\\
&= [\boldsymbol a \rangle\, f \,\langle\boldsymbol b]\, \hcomp \,[\boldsymbol b'\rangle\, g \,\langle\boldsymbol c]
\end{align*}

\begin{align*}
\langle\boldsymbol a]\, f \hcomp [\boldsymbol b|\boldsymbol c] \hcomp g \,[\boldsymbol d\rangle
&= \langle\boldsymbol a]\, f \,\hcomp\, [\boldsymbol b|\boldsymbol c] \,[\boldsymbol c\rangle\ \hcomp\ \langle\boldsymbol c]\, g \,[\boldsymbol d\rangle
\\
&= \langle\boldsymbol a]\, f \,[\boldsymbol b\rangle\, \hcomp \,\langle\boldsymbol c | \boldsymbol c\rangle\, \hcomp\, \langle\boldsymbol c]\, g \,[\boldsymbol d\rangle
\\
&= \langle\boldsymbol a]\, f \,[\boldsymbol b\rangle\, \hcomp \,\langle\boldsymbol c]\, g \,[\boldsymbol d\rangle
\end{align*}

The choice of $\boldsymbol a, \boldsymbol b$ is arbitrary, because we can prove that for any other choice $\boldsymbol c, \boldsymbol d$, we have $\langle\boldsymbol a][\boldsymbol a\rangle\, f \,\langle\boldsymbol b][\boldsymbol b\rangle = \langle\boldsymbol c][\boldsymbol c\rangle\, f\,\langle\boldsymbol d][\boldsymbol d\rangle$:
\begin{align*}
\langle\boldsymbol a][\boldsymbol a\rangle\, f\,\langle\boldsymbol b][\boldsymbol b\rangle
& = \langle\boldsymbol a](\,[\boldsymbol a\rangle \langle \boldsymbol c|\boldsymbol c \rangle \hcomp f \hcomp \langle \boldsymbol d|\boldsymbol d \rangle \,\langle\boldsymbol b]\,)[\boldsymbol b\rangle
\\
& = \langle\boldsymbol a](\,[\boldsymbol a | \boldsymbol c] \hcomp  [ \boldsymbol c\rangle \, f \hcomp \langle \boldsymbol d|\boldsymbol d \rangle \,\langle\boldsymbol b]\,)[\boldsymbol b\rangle
\\
& = \langle\boldsymbol a | \boldsymbol a \rangle \hcomp \langle \boldsymbol c]([ \boldsymbol c\rangle \, f \hcomp \langle \boldsymbol d|\boldsymbol d \rangle \,\langle\boldsymbol b]\,)[\boldsymbol b\rangle
\\
& = \langle \boldsymbol c](\,[ \boldsymbol c\rangle \, f \hcomp \langle \boldsymbol d|\boldsymbol d \rangle \ \langle\boldsymbol b]\,)[\boldsymbol b\rangle
\\
& = \langle \boldsymbol c](\,[ \boldsymbol c\rangle \, f \,\langle \boldsymbol d] \,\hcomp\, [\boldsymbol d |\boldsymbol b]
\,)[\boldsymbol b\rangle
\\
& = \langle \boldsymbol c]\ \,[ \boldsymbol c\rangle \, f \,\langle \boldsymbol d]\ \,[\boldsymbol d\rangle \,\hcomp \langle\boldsymbol b |\boldsymbol b\rangle
\\
& = \langle \boldsymbol c][ \boldsymbol c\rangle \, f \,\langle \boldsymbol d][\boldsymbol d\rangle
\\
\end{align*}
\end{proof}

\end{remark}

\begin{gray}
\begin{remark} \begin{ak}not sure that is needed ... maybe it can stay a remark ...\end{ak}
If we move to an $\nPROP$ that has names $\names\cup\mathbb N$, we can interpret the new symbols as arrows 
\[
\xymatrix{
\underline{\boldsymbol a} 
\ar[r]^f &  \underline{\boldsymbol b}
\\
\underline n 
\ar[u]^{\langle\boldsymbol a]} 
\ar[r]^{\langle\boldsymbol a] f [\boldsymbol b\rangle} 
& \underline m 
\ar@{<-}[u]_{[\boldsymbol b\rangle}
}
\quad\quad \textrm{}\quad\quad
\xymatrix{
\underline{\boldsymbol a} 
\ar[r]^{[\boldsymbol a\rangle g \langle\boldsymbol b]} 
&  \underline{\boldsymbol b}
\\
\underline n 
\ar@{<-}[u]^{[\boldsymbol a\rangle} 
\ar[r]^{g} 
& \underline m 
\ar[u]_{\langle\boldsymbol b]}
}
\]
Choosing a list $\boldsymbol a$ for a set $\underline{\boldsymbol a}$ corresponds to the choice of a local enumeration, see Definition~\ref{def:localenumeration}.
\end{remark} 
\end{gray}

\begin{gray}
\begin{example}
Let $tw:2\to 2$ be the twist in some $\PROP$ $ \mathcal S$. The arrows $[\boldsymbol a\rangle tw\langle\boldsymbol b]$ in $\NOM(\mathcal S)$ are bijections $\{a_1,a_2\}\to\{b_1,b_2\}$. The arrows $\langle\boldsymbol a][\boldsymbol a'\rangle tw\langle\boldsymbol b'][\boldsymbol b\rangle$ in $\ORD(\NOM(\mathcal S))$ correspond to either $tw$ or $\id$.
\end{example}
\end{gray}
\end{gray}




\begin{thebibliography}{MM}
\bibitem{rewriting-modulo}
Filippo Bonchi, Fabio Gadducci, Aleks Kissinger, Pawel Sobocinski, Fabio Zanasi: Rewriting modulo symmetric monoidal structure. LICS 2016

\bibitem{signal-flow-1}
Filippo Bonchi, Pawel Sobocinski, Fabio Zanasi:
The Calculus of Signal Flow Diagrams I: Linear relations on streams. Inf. Comput. 252: 2-29 (2017)

\bibitem{coecke-kissinger}
Bob Coecke, Aleks Kissinger:
Picturing Quantum Processes: A First Course in Quantum Theory and Diagrammatic Reasoning. Cambridge University Press 2017.

\bibitem{curien-mimram} 
Pierre-Louis Curien, Samuel Mimram:
Coherent Presentations of Monoidal Categories. Logical Methods in Computer Science 13(3) (2017)

\bibitem{struth}
Brijesh Dongol, Victor B. F. Gomes, Georg Struth:
A Program Construction and Verification Tool for Separation Logic. MPC 2015 

\bibitem{multi-type}
Sabine Frittella, Giuseppe Greco, Alexander Kurz, Alessandra Palmigiano, Vlasta Sikimic:
Multi-type display calculus for dynamic epistemic logic. J. Log. Comput. 26(6) (2016)

\bibitem{gabbay-pitts}
Murdoch Gabbay, Andrew M. Pitts:
A New Approach to Abstract Syntax with Variable Binding. Formal Asp. Comput. 13(3-5) (2002)

\bibitem{galmiche-etal}
Didier Galmiche, Daniel M\'ery, David J. Pym:
The semantics of BI and resource tableaux. Mathematical Structures in Computer Science 15(6) (2005)

\bibitem{ghica-lopez}
Dan R. Ghica, Aliaume Lopez:
A structural and nominal syntax for diagrams. QPL 2017


\bibitem{joyal-street:tensor1}
Andr\'e Joyal, Ross Street:  The Geometry of Tensor Calculus, I. Advances in Mathematics 88 (1991)

\bibitem{lack}
Steve Lack: Composing PROPs, TAC 13 (2004), No. 9, 147–163.

\bibitem{lafont} 
Yves Lafont: Towards an Algebraic Theory of Boolean Circuits, Journal of Pure and Applied Algebra 184 (2-3) (2003)

\bibitem{maclane:prop}
Saunders Mac Lane: Categorical algebra, Bulletin of the American Mathematical Society 71 (1965), 40–106. 

\bibitem{maclane} 
Saunders Mac Lane: Categories for the Working Mathematician,
Second Edition, Springer (1971)

\bibitem{ohearn-pym}
Peter W. O'Hearn, David J. Pym:
The logic of bunched implications. Bulletin of Symbolic Logic 5(2) (1999)

\bibitem{pitts}
Andrew M. Pitts:
Nominal Sets - Names and Symmetry in Computer Science
Cambridge University Press (2013)

\bibitem{pomsets}
Vaughan R. Pratt: Modeling concurrency with partial orders. International Journal of Parallel Programming 15(1) (1986)

\bibitem{selinger} 
Peter Selinger: A survey of graphical languages for monoidal categories. Springer Lecture Notes in Physics 813 (2011)

\bibitem{tzimoulis}
Apostolos Tzimoulis: Algebraic and Proof-Theoretic Foundations of the Logics for Social Behaviour. PhD thesis, Technical University of Delft (2018)

\bibitem{zanasi}
Fabio Zanasi:
Interacting Hopf Algebras- the Theory of Linear Systems. PhD Thesis, Ecole Normale Sup\'erieure de Lyon (2015)


\end{thebibliography}

\end{document}